\newtheorem{proposition}{Proposition}
\newtheorem{theorem}{Theorem}
\newtheorem{definition}{Definition}
\newtheorem{corollary}{Corollary}
\newtheorem{lemma}{Lemma}
\newtheorem{remark}{Remark}
\numberwithin{equation}{section}
\newcommand{\Overline}{\underline{V}}
\newcommand{\Vdo}{\underline{V}}
\newcommand{\Vup}{\overline{V}}
\newcommand{\Me}{\mathcal{M}}
\newcommand{\Se}{\mathcal{S}}
\newcommand{\He}{\mathcal{H}}
\newcommand{\Swe}{\mathcal{S}^{\small\mathcal{W}}}
\newcommand{\Ref}{\mathcal{R}}
\begin{document}
\date{\today}
\title{Discrete, Non Probabilistic Market Models. \\
Arbitrage and Pricing Intervals.
}


\author{S.~E.~Ferrando}
\address{Department of Mathematics, Ryerson University, 350 Victoria St., Toronto,
Ontario M5B 2K3, Canada.} \email{ferrando@ryerson.ca}

\author{A.~L.~Gonzalez}
\address{Departamento de Matem\'atica. Facultad de Ciencias Exactas
y Naturales. Universidad Nacional de Mar del Plata. Funes 3350, Mar
del Plata 7600, Argentina.} \email{algonzal@mdp.edu.ar}

\author{I. L. Degano}
\address{Departamento de Matem\'atica. Facultad de Ciencias Exactas
y Naturales. Universidad Nacional de Mar del Plata. Funes 3350, Mar
del Plata 7600, Argentina.} \email{ivanldegano@yahoo.com.ar}

\author{M. Rahsepar}
\address{Department of Mathematics, Ryerson University, 350 Victoria St., Toronto,
Ontario M5B 2K3, Canada.} \email{a.rasehpar@hotmail.com}

\thanks{
The research of S.E. Ferrando is supported in part by an  NSERC
grant.}






\date{Received: date / Accepted: date}

\begin{abstract}
The paper develops general, discrete, non-probabilistic market models and
minmax price bounds leading to price intervals for  European options. The approach
provides the trajectory based analogue of martingale-like properties as well as a
generalization that allows a limited notion of arbitrage in the market
while still providing coherent option prices.
Several properties of the price bounds are obtained, in particular a connection with risk neutral pricing is established for trajectory markets associated to a continuous-time martingale model.
\end{abstract}
\maketitle
\keywords{Key Words: Trajectory Based Market Models,~  Arbitrage, Martingales, ~Minmax}


\section{Introduction}

The market model introduced by Britten-Jones and Neuberger  (BJ\&N) in \cite{BJN} incorporates several important market features: it reflects the discrete nature of financial transactions, it models the market in terms of observable trajectories and incorporates practical constraints such  as jump  sizes as well as methodological constraints in terms of the quadratic variation. Market frictions can also be included naturally. The book treatment in \cite{rebonato} also emphasizes the fundamental characteristics of the model's assumptions.
Our original interest in this approach stemmed from recent developments on non probabilistic market models
(\cite{AFO2011}, \cite{alvarez}), the  setting of \cite{BJN}  may be seen  as a natural discrete version
of these continuous-time models.
The present paper develops a framework that generalizes and formalizes
the original  BJ\&N model and, along the way uncovers some new phenomena not noticed in \cite{BJN}.

The framework of the paper is a discrete market model $\mathcal{M}= \mathcal{S} \times \mathcal{H}$,  $S = \{S_i\} \in \mathcal{S}$ is a sequence of real numbers
and $H = \{H_i\} \in \mathcal{H}$ a sequence of functions acting on $\mathcal{S}$ representing the portfolio holdings $H(S)= \{H_i(S)\}$ along $S$. The set of trajectories $\mathcal{S}$  plays a central stage in the developments; trajectories, as opposed to probabilities, are a basic observable phenomena, therefore, it is relevant to pursue developments based only on such characteristics of the market. The models are
{\it discrete} in the sense that we index potential portfolio rebalances, $H_i(S) \rightarrow H_{i+1}(S)$, by integer numbers. Otherwise, stock charts and investment amounts can take values in general subsets of the real numbers, data could flow in a time continuous manner and
portfolio rebalances could be triggered by arbitrary events without the  need to be associated to a time variable.

For a given European option $Z$,  we prove existence
of a pricing interval
$[\underline{V}(Z), \overline{V}(Z)]$ (see Definition \ref{priceBounds}) under conditions that allow for arbitrage opportunities in the market.
The co-existence of arbitrage and the price interval follows as a consequence of a worst case point of view and reflects a basic financial situation. Market players involved in the option's transaction  may need/prefer the option's contract sure benefits against
the potential arbitrage rewards. For  market agents transacting in the option, any market price
falling outside the proposed price interval generates an extended arbitrage opportunity (i.e. trading with the option is required) for one of the agents involved in the option's transaction. This extended arbitrage provides a profit for all elements of $\mathcal{S}$ and, so, it is riskless.

Part of the practical relevance of the interval $[\underline{V}(Z),
\overline{V}(Z)]$  depends on the relative sizes of the sets
$\mathcal{S}$ and $\mathcal{H}$, the collection of all trajectories and all portfolios, respectively,
occurring in $\mathcal{M}$. On the one hand, we should design
$\mathcal{S}$ to be large enough so that it allows for arbitrarily
close approximations of stock charts but not any larger so as not to
artificially enlarge the bounding interval. On the other hand,
$\mathcal{H}$ should include only portfolios that can be implemented
in practice (albeit in an idealized way) as the introduction of more
powerful, but impractical, hedging strategies may artificially
shrink the bounding interval. The fact that a minimization is required over the set of portfolios
directs attention to the issue of membership to $\mathcal{H}$, it is well known that judicious
choice of portfolio sets can change substantially the properties of the associated market in continuous-time
(see for example results on non-semimartingale processes in \cite{cheridito}). We also present an instance of this phenomenon in Section \ref{sec:arbitrageFreeMarkets}.

We ask: what are the fundamental path properties, independent of the probability measure, of a discrete time martingale, that permits to obtain no arbitrage results?
The simple notion of {\it arbitrage-free node}, contained in Definition
\ref{localDefinitions}, allows for probability free developments
of arbitrage-free markets.
The no arbitrage conditions obtained in our paper, see Corollaries
\ref{upDownCorollary} and \ref{secondResultOnNoArbitrage},
should be contrasted with the analogous
conditions in \cite{jarrow} (see also \cite{bender2}) in stochastic settings.
Related no arbitrage results in terms of properties of paths are in \cite{bender}.

As indicated,  with a worst case point of view, we uncover the following phenomena: there exists a rational price interval for a given option that does not introduce a relative arbitrage (in the sense of \cite{fernholz})
even though there may be arbitrage opportunities in the market. In our setting, this is reflected on the fact that
the set of portfolios $\mathcal{H}$
allows to define  the notion of $0$-neutral market (introduced originally in \cite{BJN}, but, in that reference, associated with no arbitrage):
\begin{equation} \nonumber
\inf_{H \in \mathcal{H}} \left\lbrace \sup_{S \in \mathcal{S}} \left\lbrace -\sum_{i=0}^{N_H(S)-1}~H_i(S)~\Delta_iS \right\rbrace \right\rbrace=0,
\end{equation}
(for details see Definition \ref{0NeutralDefinition}). It turns out that this notion is  a weakening of the no arbitrage property that still allows for a price interval (see Theorem \ref{havingAnIntervalTheorem})
and many martingale-like properties to go through in a trajectory based setting
while permitting a limited notion of arbitrage. One should compare $0$-neutrality
with the normalization $\rho(0)=0$ for a convex measure of risk $\rho$ (\cite{follmer2}).

$0$-neutral markets are closely related to trajectory  sets
obeying the local $0$-neutral property (introduced in Definition \ref{localDefinitions});
this latter condition should be contrasted with the notion of sticky processes (\cite{guasoni},
\cite{bender3}) which is fundamental to guarantee the removal of any possible arbitrage
in a model with non-zero transaction costs. Reference \cite{ferrando} obtains a similar result
for trajectory sets obeying the local $0$-neutral property under the presence of transaction costs.

To obtain $0$-neutral markets under the assumption of local $0$-neutrality of $\mathcal{S}$, one notices the existence of contrarian trajectories, these are elements of $\mathcal{S}$ that move in a contrarian manner to a given investment $H$ in such a way that makes the potential profits arbitrarily small (or negative). Under natural financial conditions, it also follows that the market player stops or liquidates her portfolio. These results are
developed in Section \ref{sec:localCharacterizations}.


A trajectory set is implicit in a stochastic process model; making trajectory sets a central object of interest is of relevance, in particular, when
there is insufficient information to assign a probability distribution  with confidence. An example is given by  the modelling of crashes in \cite{desmettre} where, the number, timing and size of a downwards stock change (a {\it crash}) is treated without probabilistic assumptions.
More importantly, giving trajectory sets a primary role changes the usual paradigm to model financial situations. Stochastic modeling relies on stochastic processes and the main input for their construction is a probability distribution; by contrast, the properties of their paths result as a by-product.
References \cite{AFO2011} and  \cite{alvarez} present continuous-time examples of trajectory sets which do not correspond to semimartingales. In the present paper we describe a general discrete example of a class of trajectory sets extending substantially the model in \cite{BJN}, in particular, the example incorporates trajectory dependent volatility. A computational and more detailed analysis is developed in the companion paper \cite{degano}. Section \ref{relationToMartingales} also introduces trajectory sets associated to continuous-time martingale processes.

In the absence of a probability measure, modelling objects (trajectories, portfolios, stopping times, etc.) are treated here through a robust perspective (\cite{benTal})
and so are subject to relevant optimizations.
The logical constraints imposed by arbitrage related notions can be encoded by the supremum and infimum operators; which, most of the times, are being used to ascertain the existence of an object with the prescribed properties. These operators
also appear when we price options; in this respect,
a minmax perspective can be perceived as too extreme (\cite{sniedovich}) as it considers a worst scenario approach, this view can be deceptive as the meaning of worst scenario is tide up to the functional being optimized and the actual model.
In the case of option pricing,
the functional proposed in BJ\&N is the pathwise error and thus reflects the underlying purpose behind risk neutral pricing, namely pathwise hedging approximation. To provide support for this point of view we show that,
for a discretely attainable option
in a given continuous-time  martingale
market model, the risk neutral pricing can be seen as an example of the minmax pricing described in our paper for an associated discrete model $\mathcal{M}$. We also prove that such a market  $\mathcal{M}$ is $0$-neutral. Here, our approach becomes conceptually close
to model uncertainty, we mention \cite{nutz} (which deals with super-replication and model uncertainty) as a representative of this burgeoning literature. As mentioned, a main difference of our approach is that we give  central stage to the set $\mathcal{S}$, natural hypothesis on this set imply fundamental properties of the pricing functional.
Some minmax publications, with rather different points of view from our paper, with
applications to finance are: \cite{abernethy}, \cite{demarzo} and \cite{rustem}, among other references.

The emphasis of the  paper is to establish basic theoretical properties that follow from the proposed framework. A detailed computational analysis of related examples is available in \cite{degano}; we expect to make clear that
the present setting is quite flexible, several numerical examples and processing of market data can be found in \cite{rahnemaye}.

A summary of the paper contents is provided next.  Section \ref{generalDiscreteModel} introduces trajectory and portfolio sets leading to the trajectory based discrete market models to be used in the paper and remarks on the scope and generality of the framework. Section \ref{allMainConcepts} collects the main definitions and gives some hints of the relevance of these concepts for the rest of the paper. An augmented formalism, allowing for other sources of uncertainty, is also described.
Section \ref{abstractExample} describes an example illustrating the framework.
Section \ref{sec:0-neutral} elaborates on the minmax price bounds, defines option payoffs and a class of minmax functions, playing the role of integrable functions, are introduced as well. 
Section \ref{sec:0-neutralII} proves existence of a price interval $[\underline{V}(Z), \overline{V}(Z)]$ under general $0$-neutrality conditions. That section also compares the price interval with Merton's bounds; Section \ref{sec:pricingIn0-neutralMarkets} describes the meaning of the pricing interval when the market allows for arbitrage. Section \ref{sec:localCharacterizations}  provides general and natural sufficient conditions leading to $0$-neutral and no arbitrage markets. It also introduces concrete market assumptions leading to a price interval.
Section \ref{attainability} deals with attainable functions,
a generalization of this notion and some implications. Some analogues of martingale-like results are proven: in a $0$-neutral market, today's stock
price is the minmax price of future stock prices and we also establish a trajectory based version of the optional sampling theorem. Section \ref{sec:arbitrageFreeMarkets}  provides a general example of a discrete market free of arbitrage such that its trajectory set can not be the support of any martingale.
Section \ref{relationToMartingales} studies a general trajectory based market associated to a continuous-time martingale market model and draws connections between the introduced bounds and risk neutral pricing.
Section \ref{sec:conclusions} concludes. The appendices, collect further results, proofs, as well as some technical results needed in the main body of the paper.

\section{General, Discrete, Trajectory Based  Model} \label{generalDiscreteModel}

The paper concentrates entirely on discrete, non probabilistic,  market models
extending the model  in \cite{BJN}. The setting could be considered
as a discrete version of the non probabilistic, trajectory based, continuous-time models recently introduced in  \cite{AFO2011} and further developed in  \cite{alvarez}.
An example is given in Section \ref{motivationalExample} illustrating a general approach to constructing trajectory sets without using a priori probabilistic assumptions.

\subsection{General Setting}

We now proceed with  formal definitions.

\begin{definition}[Trajectory Set]  \label{trajectories}
Given a real number $s_0$ a set of (discrete) trajectories $\mathcal{S} = \mathcal{S}(s_0, \Sigma)$ is a subset of the
following set
\begin{equation}  \nonumber
\mathcal{S}_{\infty}= \mathcal{S}_{\infty}(s_0)= \{S = \{S_i\}_{i \geq 0}: ~S_i \in \Sigma_i,~~ S_0= s_0\},
\end{equation}
$\Sigma =\{\Sigma_i\}$ is a family of fixed subsets of  $\mathbb{R}$.
\end{definition}
\begin{definition}[Portfolio Set]  \label{locallyDefinedPortfolios}
A portfolio $H$  is a sequence of (pairs of) functions
$H = \{\Phi_i = (B_i, H_i)\}_{i\geq 0}$ with
$B_i, H_i: \mathcal{S} \rightarrow \mathbb{R}$, where
$\mathcal{S} \subseteq \mathcal{S}_{\infty}(s_0)$.
$H$ is said to be self-financing at $S \in \mathcal{S}$ if for all $i \geq 0$
\begin{equation} \label{selfFinancing}
H_i(S) ~S_{i+1} + B_i(S) = H_{i+1}(S) S_{i+1} + B_{i+1}(S).
\end{equation}
A portfolio $H$ is  called non-anticipative if for all $S, S' \in \mathcal{S}$,
satisfying $S'_k = S_k$ for all $0 \leq k \leq i$, it then follows that
 $\Phi_i(S) = \Phi_i(S')$.
\end{definition}

\begin{definition}[Trajectory Based Discrete Market]  \label{discreteMarkets}
For a given  real number $s_0$, a set of trajectories $\mathcal{S} \subseteq \mathcal{S}_{\infty}(s_0)$ and a set of portfolios $\mathcal{H}$, a trajectory based discrete market  $\mathcal{M}$ is a set satisfying the following properties:

\begin{enumerate}
\item $\mathcal{M} = \mathcal{S} \times \mathcal{H}$.
\item For each $(S, H) \in \mathcal{M}$ there exist an integer $N = N_H(S)$, such that

\hspace{1.5in} $[H_{k}(S) = H_{N-1}(S),~~~\forall k \geq N_H(S)]$ or $[ H_{k}(S) = H_{N}(S) =0, ~~~\forall k \geq N_H(S) ]$.
\item For $(S, H) \in \mathcal{M}$, $H$ is non-anticipative and  self-financing at $S$.
\end{enumerate}
\end{definition}
Let $ H = 0 = \{(0_i,0_i)\}_{i\geq 0}$ (where $0_i$
are the function $0_i(S) =0$) denote the $0$-portfolio;
for any discrete market $\mathcal{M}$ we will assume $\{H=0\} \in \mathcal{H}$, with $N_0 \equiv 1$.

$H_{k}(S) = H_{N-1}(S)$ for all $k \geq N_H(S)$ means rebalancing stops
at, or prior to, $N_H(S)-1$. The condition $H_{k}(S) = H_{N}(S)=0$ for all $k \geq N_H(S)$ means definite liquidation has taken place at, or prior to, $N_H(S)$;
such portfolio will be referred to as {\it liquidated}.

Shortly, we will extend the above setting to account for {\it other sources of uncertainty}, accommodating this extension is mostly a matter of notation and, hence, most of the paper will only employ the above introduced notation.

The mathematical definition of market model $\mathcal{M}$, when applied to an unfolding stock chart $S(t)$ and bank account $B(t)$, uses the following obvious interpretations.
The numbers $H_i(S)$ and $B_i(S)$ are interpreted, respectively, as the holdings on the stock and the balancing bank account {\it just after} the $i$-th. trading has taken place.
$S_i$ is the value taken by the unfolding chart at the $i$-th trading. To summarize: the portfolio values $(B_i(S), H_i(S))$ are held in the trading period
$(i, i+1]$, the definition of $\mathcal{M}$ includes trajectories and portfolio re-balancing, a trajectory dependent
number of times, until the position in the stock is liquidated or rebalancing stops. When valuing options $N_H(S)$ will be an instance before the
European option expires.

Given $(S, H) \in \mathcal{M}$, the self-financing property (\ref{selfFinancing})
implies that the portfolio value, defined  by $V_{H}(i, S) = B_i(S) + H_i(S) ~ S_i$ equals:
\begin{equation}  \label{portfolioValue}
V_{H}(i, S) = V_{H}(0, S_0) + \sum_{k=0}^{i-1} H_k(S) ~(S_{k+1} - S_{k}),
\end{equation}
during the period $[i, i+1)$ for $i =0, \ldots, N_H(S)-1$ and valid over $[N_H(S) , \infty)$ for the case
$i \geq N_H(S)$. Of course, $V_{H}(0, S_0) \equiv V_{H}(0, S) = B_0(S) + H_0(S) ~ S_0$.

Observe that, for simplicity we have assumed in last equation and in (\ref{selfFinancing}), and it will remain in the sequel, that the interest rate of the bank account is zero, and that there are no transaction costs.
\begin{remark}  \label{alternativeConstrOfSFPort}
As defined above, a portfolio $H$ is given by specifying the pairs of  functions $\{(B_i, H_i)\}$ so that (\ref{selfFinancing}) holds. In the remaining of the paper, we will define $H$ more conveniently by specifying the non-anticipative functions $H_i$ and an initial portfolio value $V_0= V_{H}(0, S_0)$, this will provide $B_0$, the remaining functions $B_i$, $i \geq 1$,  are then obtained by solving equations (\ref{selfFinancing}).
\end{remark}

The above definitions are natural generalizations of the ones introduced in \cite{BJN}
(see also the book presentation of the material in \cite{rebonato}). The definitions make explicit the notion
of market model by formalizing the notion of set of portfolios (left out informal in \cite{BJN}).

Informally, we explain the rather general nature of the above introduced framework.
Notice that nothing requires $H_i(S) \neq H_{i+1}(S)$, in particular, actual rebalancing of the stock holdings
could have stopped well before $N_H(S)-1$. $S_{i+1}$ is the stock
value at which investors $H \in \mathcal{H}$, that have invested so far $H_k(S)$, $0 \leq k \leq i$,
may rebalance their holdings to $H_{i+1}(S)$. The set of values $\Sigma_i$  taken by the stock components $S_i$ can be an arbitrary fixed subset of $\mathbb{R}$, for example, values of $S_i$ could be represented by a finite number of decimal digits. Similarly, the values $H_i(S)$ can
belong to an arbitrary fixed subsets of $\mathbb{R}$, for example, integer multiples of a given real number.


\vspace{.1in}\noindent Some results require that the functions $N_H:\Se \rightarrow \mathbb{N}$, introduced in Definition \ref{discreteMarkets}, are stopping times, according to the following definition.
\begin{definition}[Trajectory Based Stopping Times]\label{stoppingTimeDef}
Given a trajectory space $\mathcal{S}$ a {\it trajectory based stopping time} (or {\it stopping time} for short) is a function $\nu: \mathcal{S} \rightarrow \mathbb{N}$ such that:
\[\mbox{if}\quad S, S' \in \mathcal{S}\quad \mbox{and}\quad S_k=S'_k,~~~ \mbox{for} ~~~ 0 \leq k \leq \nu(S),\quad \mbox{then}\quad \nu(S') = \nu(S).\]
\end{definition}
\noindent

We refer to \cite{shiryaev} (see also \cite{alvarez}) for an account of the relationship between the above notion of trajectory based stopping time and filtration based stopping times.

\section{Global, Conditional and Local Concepts}   \label{allMainConcepts}

This section collects most of the basic concepts needed in the remaining of the paper and makes comments on their relevance and interrelationship.

\vspace{.1in}
Definition \ref{priceBounds} below provides  fundamental, worst case, pricing definitions by means of a global minmax optimization; they were introduced in \cite{BJN} in the context of trajectory based markets. Section \ref{sec:0-neutral} provides results showing the import of the minmax bounds, other properties are relegated to Appendix \ref{furtherPricingResults}.

\begin{definition}[Price Bounds]  \label{priceBounds}
Given a discrete market $\mathcal{M} = \mathcal{S} \times \mathcal{H}$ and a function $Z: \mathcal{S} \rightarrow \mathbb{R}$,
define the following quantities:
\begin{equation}  \label{upperBound}
\Vup(S_0, Z, \mathcal{M})= \inf_{H \in \mathcal{H}} \left\lbrace \sup_{S \in \mathcal{S}} \left\lbrace Z(S)-\sum_{i=0}^{N_H(S)-1} H_i(S)~\Delta_iS \right\rbrace \right\rbrace,
\end{equation}
and
\begin{equation}  \nonumber \label{lowerBound}
\Overline(S_0, Z, \mathcal{M})= \sup_{H \in \mathcal{H}} \left\lbrace \inf_{S \in \mathcal{S}} \left\lbrace Z(S) +\sum_{i=0}^{N_H(S)-1} H_i(S)~\Delta_iS \right\rbrace \right\rbrace.
\end{equation}
Clearly, $\Overline(S_0, Z, \mathcal{M}) = - \Vup(S_0, -Z, \mathcal{M})$.
\end{definition}

Notice that $\overline{V}$ and $\underline{V}$ are monotonic functions of $Z$.
Essentially, the quantity $\Vup(S_0, Z, \mathcal{M})$ is the smallest initial capital $V_0$ such that there exists a portfolio in $\mathcal{H}$ that, when used along with this initial capital, will upper-hedge the function $Z$ uniformly on the trajectory space $\mathcal{S}$. Similarly, $\Overline(S_0, Z, \mathcal{M})$ is the largest initial capital  such that there exists a portfolio in $\mathcal{H}$ that, when used along with this initial capital, will lower-hedge the function $Z$ uniformly. The precise statements are provided in Propositions \ref{superhedgeProposition} and \ref{tightSuperhedgeAndUnderhedge} in Section \ref{minimaxPricing}.

\vspace{.1in} The next definition is the notion of arbitrage used in the paper.

\begin{definition}[Arbitrage-Free Market] \label{ArbitrageDefinition}
Given a discrete market $\mathcal{M}$, we will call $H \in \mathcal{H}$ an arbitrage strategy if:
\begin{itemize}
\item $\forall S \in \mathcal{S}$,  $V_{H}(N_H(S), S)\geq V_{H}(0, S_0)$.
\item $\exists S^{\ast} \in \mathcal{S}$ satisfying $V_{H}(N_H(S^{\ast})    , S^{\ast})) > V_{H}(0, S_0)$.
\end{itemize}

We will say  $\mathcal{M}$ is arbitrage-free if $\mathcal{H}$ contains no arbitrage strategies.
\end{definition}
It is customary to add the extra condition $V_{H}(0, S_0) \leq 0$, by {\it not} imposing the constraint  $V_{H}(0, S_0) \leq 0$, Definition \ref{ArbitrageDefinition} reflects the fact that one could make a profit without risk even though an initial positive capital may be involved.

For $S\in \mathcal{S}$  we will use the notation $\Delta_iS \equiv S_{i+1}-S_i$ for $i\ge 0$.
Whenever convenient, the tuple $(S,k)$ or the triple $(S,H,k)$ will be referred generically as {\it a node}.

\subsection{$0$-Neutral Markets}
Consider the function $Z\equiv 0$. Since the null portfolio $H\equiv 0$ belongs to $\mathcal{H}$, it results that $\Vup(S_0, 0, \mathcal{M})\le 0$. Then, Proposition \ref{superhedgeProposition} from Section \ref{sec:0-neutral}, indicates that no  positive or negative number $\pi$ could be a fair price (this notion is introduced in Defininition \ref{relativeArbitrage}, Section \ref{minimaxPricing})  for $Z=0$ as those situations will create a relative riskless profit. So, $\pi=0$ should be the unique fair price for the function $Z\equiv 0$, this imposes a restriction on the market which leads to the next definition.
\begin{definition}[$0$-Neutral Market] \label{0NeutralDefinition}
A discrete market $\mathcal{M}$ is called $0$-neutral if
\begin{equation} \nonumber 
\inf_{H \in \mathcal{H}} \left\lbrace \sup_{S \in \mathcal{S}} \left\lbrace -\sum_{i=0}^{N_H(S)-1}~H_i(S)~\Delta_iS \right\rbrace \right\rbrace=0.
\end{equation}
\end{definition}
\noindent
Notice that $0$-neutrality means $\overline{V}(S_0, Z=0, \mathcal{M})= 0 =\underline{V}(S_0, Z=0, \mathcal{M})$.
It is easy to see that an arbitrage-free market is $0$-neutral (Corollary \ref{necessaryCondition},  Section \ref{sec:localCharacterizations}); it should also be clear that a general $0$-neutral market allows for arbitrage, a brief discussion is presented following Corollary \ref{necessaryCondition} in Section \ref{sec:localCharacterizations}.

\vspace{.1in}
\noindent The following conditional spaces will play a key role.
Given $\mathcal{M}$ and for $S  \in \mathcal{S}$ and $k \geq 0$ fixed, set:
\begin{equation} \nonumber
{\mathcal S}_{(S,k)}\equiv\{\tilde{S} \in \mathcal{S}: \tilde{S}_i= S_i, 0 \le i \le k\}.
\end{equation}
The multiplicity of these sets indicate the incomplete nature of the markets that we are
introducing. The analogue to the sets ${\mathcal S}_{(S,k)}$ in stochastic models are, in general, sets of measure zero.

We will need to generalize the above notions of minmax bounds to contemplate the possibility of conditioning
on given values of $S$ and trading instance $k$. We present the basic definitions next.

\begin{definition}[Conditional Minmax Bounds]
Given a discrete market $\mathcal{M} = \mathcal{S} \times
\mathcal{H}$, $S \in \mathcal{S}$ as well as an integer $k \ge 0$,
define
\begin{equation} \nonumber
 \overline{V}_k(S, Z, \mathcal{M}) \equiv ~\inf_{H \in \mathcal{H}}~\sup_{\tilde{S} \in \mathcal{S}_{(S, k)}} [Z(\tilde{S}) - \sum_{i=k}^{N_H(S)-1} H_{i}(\tilde{S})
 \Delta_i\tilde{S}].
\end{equation}
Also define $\underline{V}_k(S, Z, \mathcal{M}) = -
\overline{V}_k(S, -Z, \mathcal{M})$.
\end{definition}

\noindent
Notice that $\overline{V}_0(S, Z, \mathcal{M})= \overline{V}(S_0,
Z, \mathcal{M})$ and so, $\underline{V}_0(S, Z, \mathcal{M})=
\underline{V}(S_0, Z, \mathcal{M})$ as well.

\begin{definition}[Conditionally $0$-Neutral]  \label{conditionally0Neutral}
We say that a discrete market $\mathcal{M}$ is \emph{conditionally
$0$-neutral} at $S \in \mathcal{S}$, and $k \geq 0$, if
\begin{equation}  \nonumber 
 \overline{V}_k(S, Z=0, \mathcal{M}) =0.
\end{equation}
\end{definition}
\noindent
Observe that, for $k=0$, the conditional $0$-neutral property, which
depends on $S$ only through $S_0$, reduces to $0$-neutral.

\subsection{Local Notions}  \label{localNotions}

The next definition introduces two basic concepts: a local, and  portfolio independent, analogue on $\mathcal{S}$ of the $0$-neutral property of $\Me$ and a strengthening of this notion representing the local analogue of the arbitrage-free property. These local notions are instrumental as conditions on $\mathcal{S}$  ensuring  $\mathcal{M}=\mathcal{S}\times\mathcal{H}$ to be $0$-neutral or arbitrage-free (but conditions on $\mathcal{H}$, through $N_H$, are needed as well, See Section \ref{sec:localCharacterizations}).
\begin{definition} [$0$-Neutral \& Arbitrage-Free Nodes] \label{localDefinitions}
Given a trajectory space $\Se$ and a node $(S,j)$:

\begin{itemize}
\item $(S,j)$ is called a \emph{$0$-neutral node} if
\begin{equation} \label{cone}
\sup_{\tilde{S} \in \Se(S, j)}~~ (\tilde{S}_{j+1} - S_{j}) \geq    0~~~~~\mbox{and} ~~~~~
\inf_{\tilde{S} \in \Se(S, j)} ~~(\tilde{S}_{j+1} - S_{j}) \leq    0.
\end{equation}

\item $(S,j)$ is called an \emph{arbitrage-free node} if
\begin{equation} \label{upDownProperty}
\sup_{\tilde{S} \in \Se(S, j)}~~ (\tilde{S}_{j+1} - S_{j}) >   0 ~~~~~\mbox{and} ~~~~~
\inf_{\tilde{S} \in \Se(S, j)} ~~(\tilde{S}_{j+1} - S_{j}) <   0 \\
\end{equation}
or
\begin{equation} \label{flatNode}
 \sup_{\tilde{S} \in \Se(S, j)}~~ (\tilde{S}_{j+1} - S_{j}) =
\inf_{\tilde{S} \in \Se(S, j)} = 0 = (\tilde{S}_{j+1} - S_{j}).
\end{equation}
\end{itemize}
\noindent $\Se$ is called \emph{locally $0$-neutral} if (\ref{cone}) holds at each node $(S, j)$. $\Se$ is said to be \emph{locally arbitrage-free}  if either (\ref{upDownProperty}) or (\ref{flatNode}) hold at each node  $(S, j)$.

A node that satisfies (\ref{upDownProperty}) will be called an up-down node, and a node satisfying  (\ref{flatNode}) will be called a {\it flat node}. A node that is $0$-neutral but that is not an arbitrage-free node, will be called an {\it arbitrage node}.
\end{definition}
\begin{remark}
The developments need to distinguish related notions applicable to $\mathcal{S}$, $\mathcal{M}$ or to actual nodes (e.g. $\mathcal{M}$ is arbitrage-free, $\mathcal{S}$ is locally arbitrage-free, etc.). To help avoiding confusion we may use the words {\it global} when referring to properties of $\mathcal{M}$
and  {\it local} when referring to properties of $\mathcal{S}$.
\end{remark}
\noindent  An arbitrage-free node is clearly a  $0$-neutral node as well. If all nodes
$(\tilde{S}, k)$, $\tilde{S} \in \mathcal{S}_{(S, j)}$ and $k \geq j$,  are $0$-neutral  and $\He$ is a set of portfolios, it follows that:
\begin{equation} \label{0NeutralPropertyUsefulForInduction}
\inf_{H \in \He} \{ \sup_{\tilde{S} \in \Se_{(S,j)}}[-H_j(S)\Delta_j\tilde{S}\,]\}=0.
\end{equation}

The local arbitrage-free property of $\mathcal{S}$  plays the analogous role to the  martingale property in a stochastic setting, with this in mind one can try to prove martingale-type results. We provide one such example with a trajectory based version of the optional sampling theorem in Section \ref{attainability}.





\vspace{.1in} Given $\mathcal{M} = \mathcal{S} \times \mathcal{H}$, $S^*\in \Se$ and assuming  $N_H$ to be  bounded  for each $H \in \mathcal{H}$, the following results hold:
\begin{enumerate}
\item If all nodes $(S,j)$ in $\Se_{(S^*,k)}$, $j \geq k$, are  $0$-neutral then,
$\mathcal{M}$ is conditionally $0$-neutral at $(S^*, k)$.

\item If all nodes $(S,j)$ in $\Se$, $j \geq 0$ are arbitrage-free and $N_H$ is a stopping time, then
$\mathcal{M}$ is arbitrage-free.
\end{enumerate}
Item $(1)$ follows as a special case of Theorem \ref{0-neutral} and  item $(2)$ is a special case of Corollary \ref{upDownCorollary} (both results are found in Section \ref{sec:initiallyBounded}). We note that these results hold more generally for cases when $N_H$ is not necessarily bounded.

\subsection{Other Sources of Uncertainty}  \label{sec:otherSourcesOfUncertainty}

All results and definitions in the paper involving markets $\mathcal{M}$ and trajectory sets $\mathcal{S}$
can be generalized by incorporating another source of uncertainty besides
the stock. This extra source of uncertainty  will be denoted by $W = \{W_i\}$ which,  in financial terms, will be considered  to be an observable quantity. This is analogous to moving from the natural filtration to an augmented filtration in the stochastic setting.

The sequence elements $W_i$ are assumed to belong to abstract sets $\Omega_i$ from which we only require to have defined an equality relationship. We provide next the simple changes to the previous definitions
to accommodate for the new source of uncertainty. The arrow notation $\rightarrow$ indicates
how the objects change ($(s_0, w_0)$ is fixed).
\begin{align} \label{generalSet}
\mathcal{S}_{\infty}(s_0) &\rightarrow \mathcal{ S}^{\mathcal W}_{\infty}(s_0, w_0) \equiv \{{\bf S} = \{{\bf S}_i \equiv (S_i, W_i)\}_{i \geq 0}: S_i \in \Sigma_i\subset \mathbb{R}, W_i \in \Omega_i~ S_0 = s_0, W_0 = w_0\}.  \\ \nonumber
\mathcal{S} = \mathcal{S}(s_0) &\rightarrow \mathcal{S}^{\mathcal W}(s_0, w_0)
\subseteq \mathcal{S}^{\mathcal W}_{\infty}(s_0, w_0).\\  \nonumber
H_i(S) &\rightarrow H_i({\bf S})\\
\mathcal{S}_{(S, k)}  &\rightarrow \nonumber \mathcal{ S}^{\mathcal
W}_{({\bf S}, k)}(s_0, w_0) \equiv \{{\bf \tilde{S}} \in
\mathcal{S}^{\mathcal W}(s_0, w_0),~{\bf \tilde{S}}_i = {\bf
S}_i,~~~~ 0 \leq i \leq k\}.\\ \nonumber V_H(i, S) &\rightarrow
V_H(k, {\bf S}) = V_H(0, (S_0, w_0)) + \sum_{i=0}^{k-1} H_i({\bf S}) \Delta_i S.\\
\nonumber
\end{align}
Besides the above changes, that concern mostly trajectory sets and the
functional dependency $H_i({\bf })$ in terms of both variables $S_k, W_k$
(and some minor  notational changes),
all statements and  properties appearing in the paper, only involve the first coordinate $S_i$ (in the tuples $(S_i, W_i)$) in all algebraic manipulations. Clearly, $H_i$ is required to be non-anticipative with respect
two both variables $S_k$ and $W_k$ and the notion of trajectory based stopping time applies now to trajectories of the form ${\bf S} = \{(S_i, W_i)\}$.  These remarks can be used to show that all the results in the paper stay true in the extended/augmented formalism.
We explicitly use the extended formalism in subsection \ref{abstractExample} and Section \ref{relationToMartingales}.


\section{Example}  \label{motivationalExample}
To motivate and illustrate discrete markets $\mathcal{M}=\mathcal{S}\times\mathcal{H}$, we introduce a family of examples. These examples model discretizations of stock charts which, {\it for the moment}, are assumed to be given as a family of continuous-time functions $\mathcal{X}(x_0) \subseteq \mathcal{X}_{\infty}(x_0) \equiv \{ x\in \mathbb{R}^{[0,T]}: x(0)=x_0\}$, with $x_0$, $T >0$ and $s_0= e^{x_0}$. We will rely on some defininitions.

\vspace{.05in}\noindent {\it Refining Sequence of Partitions:}  Consider a sequence $\{\Pi_n\}_{n\ge 1}$, where $\Pi_n= \{r^n_i\}$ is a finite partition of $[0, T]$ with $r^n_0 =0$ and $\Pi_{n} \subseteq \Pi_{n+1}$. Let $\Pi \equiv \cup_{n\ge 1} \Pi_n$.

\vspace{.05in}\noindent {\it Selected times:} Let ${\bf t}=\{t_i\}_{i\ge 0}$ a sequence of functions $t_i:\mathcal{X}(x_0) \rightarrow \Pi$ such that $t_0=0$,
\[\forall x\in \mathcal{X}(x_0)\;\; \exists m(x)\in \mathbb{N}:\;\; t_{i}(x) < t_{i+1}(x),\;\;\mbox{if}\;\;0\le i < m(x), \;\;\; t_i(x)=T\;\;{if}\;\;i\ge m(x),\;\;\mbox{and}\]
\[\forall i\ge 1,\;\;\mbox{if}\;\;\tilde{x},x\in \mathcal{X}(x_0)\;\;\mbox{with}\;\; \tilde{x}(s)=x(s)\;\;\mbox{for}\;\; 0\le s \le t_i(x)\;\;\mbox{then}\;\; t_i(\tilde{x})=t_i(x).\]
Observe that for any $x\in \mathcal{X}(x_0)$ there exist $n\ge 1$ such that $\{t_i(x)\}_{i\ge 0} \subset \Pi_n$. This is so because $t_i(x)\in \Pi$ implies that there exists $n_i$, the minimum such that $t_i(x)\in \Pi_{n_i}$, and $n=\max \{n_i : 0\le i\le m(x)\}$.

Let $\Ref=(\Pi,{\bf t})$, define the following  general class of discrete trajectories
\begin{equation} \nonumber 
\Se(s_0, \mathcal{X},\Ref)\equiv \{S=\{S_i\}_{i\ge 0}:S_i=\exp(x(t_i)),\;\;i\ge 0,\,\,\mbox{for some}\;x\in \mathcal{X}(x_0)\}\quad \small {(t_i=t_i(x))}.
\end{equation}

\vspace{.1in}\noindent {\bf General Aspects.} A refining sequence of partitions reflects a financial situation where the investor re-balances her/his portfolio with a certain minimum time resolution but is willing to refine it further if deemed necessary. The case of a
fixed partition ($\Pi_n$ the same for all $n$) means that the investor will never rebalance more often than an a-priori  given time resolution.

There is no essential result in our paper that requires $S_i\geq 0$, so there is no need to use the exponential function in the definition $S_i = e^{x(t_i)}$ but, doing so makes it easier to connect with the usual geometric stochastic models as well as with \cite{BJN}.

We are interested in prescribing ``structured" subsets of $\mathcal{S}(s_0, \mathcal{X}, \Ref)$, we do this
by means of an observable functional $F$. For simplicity, the functional could be defined on
$\mathcal{X}(x_0)$ (and could depend on other variables as well) and takes values on $\mathbb{R} \cup \{\infty\}$. As a particular case, the functional $F$ selects those $x\in \mathcal{X}(x_0)$ such that $x$ has finite quadratic variation in the interval $[s,t]$ respect to $\Pi$, that is,  the following limit exists
\begin{equation} \label{qvFunctional}
F(x, s, t) = \lim_{n \rightarrow \infty} \sum_{s \leq r^n_{i}, ~~
r^n_{i+1} \leq t}~~(x(r^n_{i+1}) - x(r^n_{i}))^2.
\end{equation}
Other observable, additive, non-decreasing and non-negative quantities could be used as well, for example, the number of transactions from $t_0=0$ to $t$, or the total number of shares transacted from $t_0=0$ to $t$. Intuitively, each time a transaction takes place,
the value for the observable quantity represents the tick of a trajectory based clock (usually interpreted as  a ``business clock" with trajectory dependent rate.)

The following is an example of a structured discrete trajectory set. For $c,~ d>0$ real numbers and $Q\subset(0,\infty)$, define:
\begin{align} \label{classOfCharts}
\mathcal{S}(s_0, c, d, Q)&=\{S \in \mathcal{S}(s_0, \mathcal{X}, \Ref): |x(t_{i+1}) - x(t_i)| \leq d,
~~F(x, t_i, t_{i+1}) \leq c,~~F(x, t_0, t_{m(x)}) \in Q,~~0\leq i < m(x)\}.
\end{align}
For the case of the functional given by (\ref{qvFunctional}), the requirements defining $\mathcal{S}(s_0, c, d, Q)$ can be interpreted as imposing constraints on the maximum consumed quadratic variation and on the maximum absolute value
of the change on chart values, both in between consecutive trading instances.  In addition, the condition $F(x, t_0, t_{m(x)}) \in Q,$ means we deal with trajectories whose total quadratic
variation in the interval $[0, T]$ belongs to the a-priori given
subset $Q$. In effect, the imposed constraints restrict the outcomes resulting from the interaction between market fluctuations and portfolio rebalances.


\subsection{Construction of Trajectory Sets From Augmented Data}
\label{abstractExample}

Here we describe a set of trajectories that does not require
a continuous time model for the charts. The general principle guiding the construction is to isolate an observable quantity (representing a variable of interest) and proceed to define a trajectory space by imposing constraints relating the trajectories and a
free variable representing this observable. We work with observables given by a
functional, still denoted by $F$, but now defined on {\it charts} (i.e. the values of some unfolding financial data), $F$  may also depend on other variables as well.

The definition of $\mathcal{S}(s_0, c, d, Q)$ in (\ref{classOfCharts}) depends on having access to the functions $x\in \mathcal{X}$. We now  turn the tables around and re-define $\mathcal{S}(s_0, c, d, Q)$ as $\Swe(s_0, c, d, Q)$, a set which does not require any reference to a given class of continuous-time trajectories.
We still allow observable charts to unfold in continuous-time, in order to achieve these goals, we use the augmented formalism introduced in Section \ref{sec:otherSourcesOfUncertainty}. Trajectories are given by a sequence of tuples ${\bf S} = \{(S_i, W_i)\}_{i\ge 0}$, which will be associated to samples of continuous-time charts. This is a natural way to proceed, information deemed
important for modelling is lost when sampling hence, this information will be  encoded by the variable $W_i$ (associated to the functional $F$).


The definition below assumes given:  $w_0=0$,  $s_0$ and $c,~ d>0$ real numbers, $\sum_i \subseteq \mathbb{R}$ and sets $Q, \Omega_i \subset(0,\infty)$.

\begin{definition} \label{implicitDefinitionThroughConstraints}
$\mathcal{S}^{\mathcal W}(s_0, d, c, Q)$ will denote a subset of $\mathcal{S}^{\mathcal W}_{\infty}(s_0, w_0)$ (this last set as in (\ref{generalSet})) so that ${\bf S} \in \mathcal{S}^{\mathcal W}(s_0, d, c, Q)$  satisfies $S_i\in \Sigma_i$, $W_i \in \Omega_i$ and:
\begin{enumerate}  \label{generalConstraints}
\item $|\log S_{i+1} - \log S_i| \leq d$ for all $i \geq 0$,
\item $0 < W_{i+1} - W_i \leq c$ for all $i \geq 0$.\\
Moreover, there exists at least one $i^{\ast}$ such that
\item   $W_{i^{\ast}}  \in Q$.
\end{enumerate}
Associated discrete markets $\mathcal{S}^{\mathcal W}(s_0, c, d, Q) \times \mathcal{H}$ are required to satisfy: $H \in \mathcal{H}$ then
$W_{N_H({\bf S})} \in Q$.
\end{definition}
\begin{remark} \label{directlyModelingTheStock}
As already mentioned, the condition $|\log S_{i+1} - \log S_i| \leq d$ could equally be replaced by $|S_{i+1} -  S_i| \leq d$ (of course with an appropriately chosen value for $d$).
\end{remark}

We emphasize that $\mathcal{S}^{\mathcal W}(s_0, d, c, Q)$, as characterized above, does not need to be, in general, the set of {\it all} trajectories ${\bf S}$ satisfying the listed constraints in Definition (\ref{generalConstraints}); specific examples  are described in \cite{degano}. Comparing with (\ref{classOfCharts}), we see that we have allowed $F$ to be an independent variable $W$. For ${\bf S} \in \mathcal{S}^{\mathcal W}(s_0, d, c, Q)$ there could be multiple indexes $i^{\ast}$.

The set $\mathcal{S}^{\mathcal W}(s_0, c, d, Q)$ is used for modelling
the unfolding of a data chart $x(t_i)$ by mapping $\{(e^{x(t_i)},F(x,t_0,t_i))\}$, one index $i$ at a time (i.e. as the chart unfolds), to its closest path $\{(S_i, W_i)\}_{i\geq 0}$.

The trajectory set introduced in \cite{BJN} can be recovered as a
special case of Definition \ref{implicitDefinitionThroughConstraints} by
taking $Q = \{v_0\}$ and defining
\begin{equation}  \label{sampledQV}
W_i = \sum_{k=0}^{i-1} (\log S_{k+1} - \log S_k)^2,
\end{equation}
moreover we need to require the existence of $i^{\ast}$ satisfying $W_{i^{\ast}}= v_0$.
Therefore $W_{i+1}  - W_i = (\log S_{i+1} - \log S_i)^2$ and the
constraint $0 < W_{i+1} - W_i \leq c$  in Definition
\ref{implicitDefinitionThroughConstraints} corresponds to $c=d^2$. Moreover,  as  $W_i$ depends on
$S_k,~0 \leq k \leq i,$ there is no need to work with tuples $(S_i, W_i)$ in this
case.  Not imposing
(\ref{sampledQV}) allows to
incorporate $0$-neutral nodes which are arbitrage nodes (see Definition \ref{localDefinitions} and related comments afterwards.) An analysis of these considerations in the context of the example is outside the scope of the paper, details are given in \cite{degano}.

A natural discretization leading to an  implementation of $\Swe\!\!(s_0, c, d, Q)$ is obtained by introducing real numbers $\delta, \beta >0$. The
coordinates $S_i$ are then restricted to belong to the sets
$\Sigma_i=\Sigma(\delta) \equiv \{s_0 e^{k \delta},~~ k \in
\mathbb{Z}\}$ and $W_i$ to $\Omega_i \equiv \Omega(\beta) =
\{ j \beta^2, ~j \in \mathbb{N}\}$, thus $Q$ is now a set $Q(\beta) \subseteq \Omega(\beta)$.

For implementation purposes we need a finite version of the above discrete space. Towards this end we will take, for convenience,
$\frac{d}{\delta} =p$ a fixed integer. So, the jump bound $d$ is given as an integer multiple of $\delta$. For given $N_1, N_2 \in
\mathbb{N}$ define
\begin{equation} \nonumber
 \Sigma(\delta, N_1) = \{s_0e^{k~\delta}, k \in \{-N_1, -N_1 +1, \ldots, N_1\}\},\qquad \Omega(\beta, N_2)  =\{j~\beta^2, 0 \leq j\leq N_2\},
\end{equation}
and for a finite $m-tuple$ of positive integers $\Lambda=(n_1,\ldots, n_m)$, $n_j\le N_2$, define
\begin{equation} \nonumber
Q \equiv Q(\beta, \Lambda) = \{ n_k ~\beta^2: 1\leq k \leq m\}.
\end{equation}

We then obtain a finite version of Definition \ref{implicitDefinitionThroughConstraints}, assuming that
\[ S_i\in \Sigma(\delta, N_1)\quad\mbox{and}\quad W_i \in \Omega(\beta, N_2),\quad\mbox{for all}\quad i\ge 0.\]
Such finite versions of the sets $\mathcal{S}^{\mathcal{W}}(s_0, d, c, Q)$ will be denoted $\Se^{\mathcal{W}}(s_0, p, q, \Lambda, N_1, N_2, \delta, \beta)$.

{\bf Local Behavior.} The way of defining trajectory sets $\Swe(s_0,d,c,Q)$, or their finite versions, will make it easy to check if the local properties of $0$-neutral or up-down are satisfied. This is so because our constraints are given locally (i.e. at each node) and the combinatorial definitions will allow trajectories to move up or down. Next, as examples, we provide some general arguments on how to argue for the validity of these local properties.

Assume that the the sets $\Sigma_i$ in the trajectory space $\Swe(s_0,c,d,Q)$ of Definition \ref{implicitDefinitionThroughConstraints} do not attain minimum nor maximum and fix a node $(S_i,W_i)$ of a trajectory ${\bf S}$. Clearly, there exists the possibility of choosing trajectories $\tilde{{\bf S}},{\bf \hat{S}} \in \Swe_{({\bf S},i)}$ such that $\tilde{S}_{i+1}>S_i$, and $\hat{S}_{i+1}<S_i$ respectively, so any node is up-down, and in that case the market results locally arbitrage-free, see Definition \ref{localDefinitions}.
Specific instances of the sets $\Swe(s_0,c,d,Q)$ or their discrete or finite versions will impose further constraints beyond the ones listed in Definition \ref{implicitDefinitionThroughConstraints}. In each such case, 
we will need to check the validity of the needed local requirements, $0$-neutrality or arbitrage-free, so that our results hold.

Consider now $\Se^{\mathcal{W}}(s_0, p, q, \Lambda, N_1, N_2, \delta, \beta)$, in this case we can assume that $N_2 \beta^2 \in Q$. First note, as previously indicated,  that for any trajectory ${\bf S}=\{(S_i,W_i)\}_{i\ge 0}$,
since $W_{i^{\ast}}\in Q \subset \Omega(\beta,N_2)$, then $i^{\ast} \le N_2$.
Taking into account the constraint $d\ge |\log S_{i+1}-\log S_i|=|k_{i+1}-k_i|\delta$,
the largest value that $S_i$ can attain corresponds to the value $S_{N_2}$ and, in that case, $S_{N_2}=s_0e^{N_2\,p\delta}$, which shows  $N_1\le p~N_2$.


In the case that $N_1\le (N_2-1)\,p$, could exist trajectories containing the node
$(S_{i'}=s_0e^{N_1\,\delta}, i'\beta^2)$ with $i'\le N_2-1$ (for example, by choosing  $k_{i}=i\,p$ for $i>0$, when $i\le \frac{N_1}p$ and at most $k_{N_2-1}=N_1$). Such trajectory satisfies $W_{i}\le(N_2-1)\beta^2$ and so, one more step is available. Moreover, given that for any trajectory $\tilde{{\bf S}}\in \Swe_{({\bf S},i')}$ it follows that $\tilde{S}_{i'+1}\le \tilde{S}_{i'}=S_{i'}=s_0e^{{i'}p\delta}$, $({\bf S},{i'})$ is an arbitrage node.  These nodes present arbitrage opportunities.

For display purposes, consider a finite space $\mathcal{S}^{\mathcal{W}}(s_0, c, d, Q)$ consisting of {\it all}  trajectories satisfying the conditions in Definition \ref{implicitDefinitionThroughConstraints} with $N_2=N_1= 100$,  $\beta=\delta=0.0082$, $Q=\{N_2\beta^2\}$, $c=d^2$, $s_0=1$, $p=3$ (and so   $d=0.0246$). Figure \ref{fig:10} shows $200$ random samples of trajectories from such trajectory space. Figure \ref{fig:13} shows random samples of trajectories from two conditional spaces from the above trajectory space.

\begin{figure}[!ht]
\centering
\includegraphics[width=0.6\textwidth]{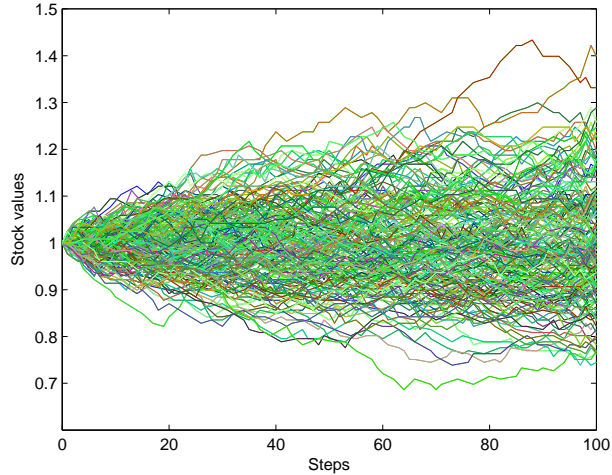}
\caption{Samples from trajectory space.}
\label{fig:10}
\end{figure}

\begin{figure}[!ht]
\centering
\includegraphics[width=0.6\textwidth]{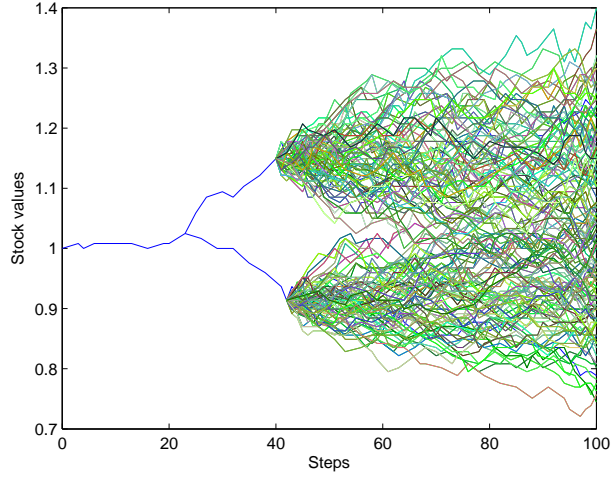}
\caption{Sampled trajectories from conditional spaces.}
\label{fig:13}
\end{figure}

\section{Minmax Bounds}  \label{sec:0-neutral}

Given a  future profile $Z(S)$, 
Definition \ref{priceBounds}  provide the ``price" bounds for the associated option. The present section develops basic results  following from the definitions while Section \ref{sec:0-neutralII} justifies the quantities introduced to be actually price bounds. Options and minmax functions are introduced as well.

\subsection{Minmax Bounds} \label{minimaxPricing}

The price bounds can be recast in a more familiar way:

\begin{equation}  \nonumber 
\Vup(S_0,Z, \mathcal{M}) =  \inf \{u: \exists H \in \mathcal{H},~~~
u +  \sum_{i=0}^{N_H(S)-1} H_i~(S)~\Delta_iS \geq Z(S)~\forall S \in \mathcal{S}\},
\end{equation}
\begin{equation} \nonumber
\underline{V}(S_0,Z, \mathcal{M}) = \sup \{u: \exists H \in \mathcal{H},~~~
u +  \sum_{i=0}^{N_H(S)-1} H_i~(S)~\Delta_iS \leq Z(S)~\forall S\in \mathcal{S}\},
\end{equation}
under the conventions that $\inf \emptyset = \infty$ and $\sup \emptyset = -\infty$.

The notion of relative arbitrage (see \cite{fernholz}) introduced below is useful in order to partially justify the above minmax definitions as price bounds.

\begin{definition}[Relative Arbitrage] \label{relativeArbitrage}
Let $Z: \mathcal{S} \rightarrow \mathbb{R}$ be a function defined on $\mathcal{S}$ and  $\mathcal{M} = \mathcal{S} \times \mathcal{H}$ a discrete market. $H \in \mathcal{H}$, with initial value $V_0$, is a \emph{relative arbitrage} with respect to $Z$ if:
\begin{equation} \nonumber 
V_0 + \sum_{i=0}^{N_H(S)-1} H_i(S)~\Delta_iS -Z(S) \ge 0~~~\forall ~S \in \mathcal{S}\quad \mbox{and strictly positive for some}\quad S^*\in \mathcal{S}.
\end{equation}
\mbox{Or,}
\begin{equation} \nonumber 
Z(S) - V_0 + \sum_{i=0}^{N_H(S)-1} H_i(S)~\Delta_iS \ge 0~~~\forall ~S \in \mathcal{S},\quad \mbox{and strictly positive for some}\quad S^*\in \mathcal{S}.
\end{equation}
\end{definition}
Comparing with Definition \ref{ArbitrageDefinition}, $H$ with initial capital $V_0=0$ is an arbitrage strategy if and only if it is a relative arbitrage with respect to the derivative function $Z=0$.
\begin{definition}[Fair Price]
We say that $\pi$ is a \emph{fair price} for a function $Z$ in a discrete market $\mathcal{M} = \mathcal{S} \times \mathcal{H}$ if there is no $H \in \mathcal{H}$, with initial value $V_H(0, S_0)= \pi$, that is a relative arbitrage for $Z$ .
\end{definition}

It is useful to keep in mind the following obvious result.
\begin{proposition}  \label{superhedgeProposition}
Consider a discrete market $\mathcal{M} = \mathcal{S}\times \mathcal{H}$, a function $Z$ defined on $\mathcal{S}$ and $\epsilon >0$.
\begin{itemize}
\item If  $\overline{V}(S_0, Z, \mathcal{M}) > - \infty$, then there exists $H^{\epsilon} \in \mathcal{H}$: \\
$Z(S) <  \Vup(S_0, Z, \mathcal{M}) + \sum_{i=0}^{N-1} H^{\epsilon}_i(S)~\Delta_iS + \epsilon,~\mbox{for all}~~ S \in \mathcal{S}$,
where $N= N_{H^{\epsilon}}(S).$

\vspace{.1in}
\item If  $\underline{V}(S_0, Z, \mathcal{M}) < \infty$, then there exists $\hat{H}^{\epsilon}  \in \mathcal{H}$:\\
$ \Overline(S_0, Z, \mathcal{M}) + \sum_{i=0}^{N-1} \hat{H}^{\epsilon}_i(S)~\Delta_iS - \epsilon < Z(S),~\mbox{for all}~~ S \in \mathcal{S}$,
where $N= N_{\hat{H}^{\epsilon}}(S).$
\end{itemize}
\end{proposition}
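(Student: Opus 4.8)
The plan is to observe that this statement is simply an unwinding of the definitions of infimum and supremum in Definition \ref{priceBounds}, so nothing beyond the greatest-lower-bound property is needed; the only care required is with the degenerate values $\pm\infty$ (which the finiteness hypotheses are designed to control) and with sign bookkeeping.

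For the first item, write $\Vup(S_0,Z,\mathcal{M})=\inf_{H\in\mathcal{H}} g(H)$ with $g(H)=\sup_{S\in\mathcal{S}}\bigl\{Z(S)-\sum_{i=0}^{N_H(S)-1}H_i(S)\,\Delta_i S\bigr\}$, and note that the hypothesis $\Vup(S_0,Z,\mathcal{M})>-\infty$ is exactly $\Vup>-\infty$. If $\Vup=+\infty$ the asserted inequality is vacuous (any $H\in\mathcal{H}$, e.g. $H=0$, works), so I assume $\Vup$ finite. Since $\Vup+\epsilon>\Vup=\inf_H g(H)$, the number $\Vup+\epsilon$ fails to be a lower bound of $\{g(H):H\in\mathcal{H}\}$, so there is $H^{\epsilon}\in\mathcal{H}$ with $g(H^{\epsilon})<\Vup+\epsilon$. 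The supremum $g(H^{\epsilon})$ dominates each of its terms, so every $S\in\mathcal{S}$ satisfies
\[
Z(S)-\sum_{i=0}^{N-1}H^{\epsilon}_i(S)\,\Delta_i S \;\le\; g(H^{\epsilon}) \;<\; \Vup(S_0,Z,\mathcal{M})+\epsilon, \qquad N=N_{H^{\epsilon}}(S),
\]
which rearranges to the claimed strict inequality.

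For the second item I would deduce it from the first applied to $-Z$, using the identity $\Overline(S_0,Z,\mathcal{M})=-\Vup(S_0,-Z,\mathcal{M})$ recorded in Definition \ref{priceBounds}. The hypothesis $\Overline(S_0,Z,\mathcal{M})<\infty$ says $\Vup(S_0,-Z,\mathcal{M})=-\Overline>-\infty$, which is precisely what is needed to apply the first item to $-Z$. That produces $H^{\epsilon}\in\mathcal{H}$ with $-Z(S)<\Vup(S_0,-Z,\mathcal{M})+\sum_{i}H^{\epsilon}_i(S)\,\Delta_i S+\epsilon$ for all $S$; substituting $\Vup(S_0,-Z,\mathcal{M})=-\Overline$ and multiplying through by $-1$ gives $\Overline-\sum_i H^{\epsilon}_i(S)\,\Delta_i S-\epsilon<Z(S)$. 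Setting $\hat{H}^{\epsilon}=-H^{\epsilon}$ (so that $\sum_i \hat{H}^{\epsilon}_i\Delta_i S=-\sum_i H^{\epsilon}_i\Delta_i S$ and $N_{\hat{H}^{\epsilon}}=N_{H^{\epsilon}}$) puts the inequality in the stated form; equivalently, one repeats the greatest-upper-bound argument directly on $\Overline=\sup_H\inf_S\{\cdots\}$.

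Honestly there is no substantial obstacle here --- the authors themselves label it ``obvious'' --- and the work is entirely bookkeeping. The two points that do demand attention are: (i) the degenerate values of the bounds, where the hypotheses $\Vup>-\infty$ and $\Overline<\infty$ exclude the genuinely problematic cases $\Vup=-\infty$ and $\Overline=+\infty$ (in which no near-optimizer exists), while the remaining infinite values $\Vup=+\infty$ and $\Overline=-\infty$ render the conclusion vacuous; and (ii) the sign flip $H\mapsto-H$ in the lower-bound case, which is consistent with the stated duality and presupposes that $\mathcal{H}$ is closed under negation with $N_H$ unaffected --- a natural property of the portfolio sets considered here.
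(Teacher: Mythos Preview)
Your argument is correct and matches what the paper intends: the authors present this proposition without proof, introducing it with ``It is useful to keep in mind the following obvious result,'' so your direct appeal to the infimum/supremum characterization is exactly the intended justification.

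One small caveat worth flagging: your reduction of the second item to the first via $\hat{H}^{\epsilon}=-H^{\epsilon}$ requires $-\mathcal{H}\subseteq\mathcal{H}$, which the paper does \emph{not} assume in general (indeed, several later results add ``$-H\in\mathcal{H}$'' as an explicit hypothesis). Your parenthetical alternative---repeating the greatest-upper-bound argument directly on $\underline{V}=\sup_{H}\inf_{S}\{Z(S)+\sum H_i\Delta_iS\}$---avoids this and is the cleaner route; it yields $\hat{H}^{\epsilon}\in\mathcal{H}$ with $Z(S)+\sum_i\hat{H}^{\epsilon}_i(S)\Delta_iS>\underline{V}-\epsilon$ for all $S$, which is the desired inequality up to the paper's own sign convention in the displayed formula (note the paper's recast characterization of $\underline{V}$ just below the proposition already carries the opposite sign on the sum, so this is a notational inconsistency in the paper rather than in your argument).
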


Observe that by Proposition \ref{superhedgeProposition}, neither $\pi > \Vup(S_0, Z, \mathcal{M})$ nor $\pi < \Vdo(S_0, Z, \mathcal{M})$,  is a fair price  for $Z$. In the next section we are going to show conditions under which the fair prices are confined to an interval, as it is known for stochastic models. The following simple result shows that the upper-hedging and lower-hedging results in Proposition \ref{superhedgeProposition} are tight in a trajectory based sense.
\begin{proposition} \label{tightSuperhedgeAndUnderhedge}
Assume a function $Z$ defined on a discrete market model $\mathcal{M}$ is given.
\begin{itemize}
\item If $H \in \mathcal{H}$ satisfies
 $ V_{H}(0, S_0) <  \overline{V}(S_0, Z, \mathcal{M})$, then:\\
$V_{H}(N_H(S^{\ast}), S^{\ast})= V_{H}(0, S_0) +  \sum_{i=0}^{N_H(S^{\ast})-1} H_i~\Delta_iS^{\ast} < Z(S^{\ast}),~\mbox{for some}~ S^{\ast} \in \mathcal{S}.$

\vspace{.1in}
\item If $H \in \mathcal{H}$ satisfies
$ V_{H}(0, S_0) >  \Overline(S_0, Z, \mathcal{M})$, then:\\
$V_{H}(N_H(S^{\sharp}), S^{\sharp})= V_{H}(0, S_0) +  \sum_{i=0}^{N_H(S^{\sharp})-1} H_i~(S^{\sharp})~\Delta_iS^{\sharp} > Z(S^{\sharp}),~\mbox{for some}~ S^{\sharp} \in \mathcal{S}.$
\end{itemize}
\end{proposition}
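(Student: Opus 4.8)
The plan is to read both assertions directly off the hedging characterizations of the bounds recorded just before the statement, together with the value formula (\ref{portfolioValue}). Write $G_H(S) = \sum_{i=0}^{N_H(S)-1} H_i(S)\,\Delta_i S$, so that by (\ref{portfolioValue}) the terminal value of any $(S,H)\in\mathcal{M}$ is $V_H(N_H(S),S) = V_H(0,S_0) + G_H(S)$. In this notation the upper bound reads $\overline{V}(S_0,Z,\mathcal{M}) = \inf_{H'\in\mathcal{H}}\sup_{S\in\mathcal{S}}\,[Z(S)-G_{H'}(S)]$ and the lower bound reads $\underline{V}(S_0,Z,\mathcal{M}) = \sup_{H'\in\mathcal{H}}\inf_{S\in\mathcal{S}}\,[Z(S)-G_{H'}(S)]$ (the sub-hedging form, matching the recast $\underline{V}=\sup\{u:\exists H',\,u+G_{H'}\le Z\}$). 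Each bullet is then an immediate consequence of the elementary fact that, for a \emph{fixed} $H$, the inner $\sup$ (resp. $\inf$) over $\mathcal{S}$ dominates (resp. is dominated by) the outer $\inf$ (resp. $\sup$) over $\mathcal{H}$.

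For the first bullet, fix $H\in\mathcal{H}$ with $V_H(0,S_0) < \overline{V}(S_0,Z,\mathcal{M})$. Since $H$ is one admissible choice in the outer infimum defining $\overline{V}$, we have $\sup_{S\in\mathcal{S}}[Z(S)-G_H(S)] \ge \overline{V}(S_0,Z,\mathcal{M}) > V_H(0,S_0)$. Because this supremum strictly exceeds $V_H(0,S_0)$, the number $V_H(0,S_0)$ is not an upper bound for $\{Z(S)-G_H(S):S\in\mathcal{S}\}$, so there is $S^*\in\mathcal{S}$ with $Z(S^*)-G_H(S^*) > V_H(0,S_0)$, i.e. $V_H(N_H(S^*),S^*) = V_H(0,S_0)+G_H(S^*) < Z(S^*)$, as claimed. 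The strict hypothesis is exactly what converts ``$\ge$'' into a strict witness, so no attainment of the supremum is needed.

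The second bullet is the mirror image. Fix $H$ with $V_H(0,S_0) > \underline{V}(S_0,Z,\mathcal{M})$. Taking $H$ as one competitor in the outer supremum defining $\underline{V}$ gives $\inf_{S\in\mathcal{S}}[Z(S)-G_H(S)] \le \underline{V}(S_0,Z,\mathcal{M}) < V_H(0,S_0)$, so $V_H(0,S_0)$ fails to be a lower bound for $\{Z(S)-G_H(S):S\in\mathcal{S}\}$ and there is $S^\sharp$ with $Z(S^\sharp)-G_H(S^\sharp) < V_H(0,S_0)$, i.e. $V_H(N_H(S^\sharp),S^\sharp) > Z(S^\sharp)$. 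Equivalently, one may deduce this from the first bullet via the duality $\underline{V}(Z)=-\overline{V}(-Z)$ applied to the reflected portfolio, whenever $\mathcal{H}$ is closed under $H\mapsto -H$.

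There is no substantial obstacle here: the statement is a direct unwinding of the $\inf$/$\sup$ definitions, and the only point demanding care is the sign bookkeeping in the lower bound, where one must use the sub-hedging form $\underline{V}=\sup_{H'}\inf_S[Z-G_{H'}]$ (equivalently, the recast $u+G_{H'}\le Z$) so that the sign of $G_H$ lines up with the terminal value $V_H(0,S_0)+G_H$ appearing in the conclusion; with that identification both strict inequalities drop out at once.
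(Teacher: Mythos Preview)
Your argument is correct and is precisely the direct unwinding of the $\inf/\sup$ definitions that the paper has in mind; the paper itself omits the proof, calling the result ``simple,'' so there is nothing to compare beyond noting that your route---using the recast characterizations of $\overline{V}$ and $\underline{V}$ given immediately before the proposition together with (\ref{portfolioValue})---is exactly the intended one. Your remark on sign bookkeeping for the lower bound is apt: you correctly rely on the sub-hedging form $\underline{V}=\sup\{u:\exists H',\,u+G_{H'}\le Z\}$ that the paper records, which is what makes the sign of $G_H$ line up with the terminal value in the conclusion.
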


\vspace{.1in}
The following proposition shows that in a general discrete market
the quantities $\overline{V}(S_0, Z, \mathcal{M}), \underline{V}(S_0, Z, \mathcal{M})$ may behave in an unexpected way (but not so in a $0$-neutral market).
\begin{proposition} \label{noAssumptionsProperty}
Given a discrete market model $\mathcal{M}= \mathcal{S} \times \mathcal{H}$ and $c$ an arbitrary constant, it follows that:
\begin{equation} \nonumber
 \mbox{if} ~~~Z(S) = c~\mbox{for all}~~S \in \mathcal{S}, ~~\overline{V}(S_0, Z, \mathcal{M}) \leq c \leq \underline{V}(S_0, Z, \mathcal{M}).
\end{equation}
In contrast, notice that if $\mathcal{M}$ is $0$-neutral then,
$\overline{V}(S_0, Z, \mathcal{M}) = c = \underline{V}(S_0, Z, \mathcal{M})$.
\end{proposition}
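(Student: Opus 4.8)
The plan is to reduce the whole statement to two elementary facts that are already available: the null portfolio $H=0$ lies in $\mathcal{H}$ (with $N_0 \equiv 1$), and both $\overline{V}$ and $\underline{V}$ transform additively under a constant shift of the payoff $Z$. The null portfolio produces the one-sided inequalities valid in any discrete market, while the additive shift, combined with the definition of $0$-neutrality, upgrades them to equalities. No probabilistic or structural assumption on $\mathcal{S}$ is needed.

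For the inequalities I would evaluate the outer optimization at $H=0$. Substituting $H=0$ — so that the inner sum $\sum_{i=0}^{N_0(S)-1} 0\cdot \Delta_i S$ vanishes — into the definition of $\overline{V}(S_0, Z, \mathcal{M})$ with $Z \equiv c$ produces the candidate value $\sup_{S \in \mathcal{S}} \{c\} = c$. Since $\overline{V}$ is the infimum over $H \in \mathcal{H}$ of such quantities, it can only be smaller, giving $\overline{V}(S_0, Z, \mathcal{M}) \le c$. The same substitution in $\underline{V}(S_0, Z, \mathcal{M})$ yields $\inf_{S \in \mathcal{S}} \{c\} = c$, and because $\underline{V}$ is a supremum over $H$, it can only be larger, giving $\underline{V}(S_0, Z, \mathcal{M}) \ge c$. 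This establishes $\overline{V}(S_0, Z, \mathcal{M}) \le c \le \underline{V}(S_0, Z, \mathcal{M})$ with no hypothesis on $\mathcal{M}$.

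For the $0$-neutral case I would use that a constant $c$, being independent of both $S$ and $H$, can be pulled outside the nested $\sup$--$\inf$. Explicitly,
\[
\overline{V}(S_0, c, \mathcal{M}) = \inf_{H \in \mathcal{H}} \sup_{S \in \mathcal{S}} \Big\{ c - \sum_{i=0}^{N_H(S)-1} H_i(S)\,\Delta_i S \Big\} = c + \inf_{H \in \mathcal{H}} \sup_{S \in \mathcal{S}} \Big\{ - \sum_{i=0}^{N_H(S)-1} H_i(S)\,\Delta_i S \Big\},
\]
and the trailing term is exactly $\overline{V}(S_0, 0, \mathcal{M})$, which equals $0$ by Definition \ref{0NeutralDefinition}; hence $\overline{V}(S_0, c, \mathcal{M}) = c$. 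The identical shift applied to $\underline{V}(S_0, c, \mathcal{M})$ reduces it to $c + \underline{V}(S_0, 0, \mathcal{M})$, and $\underline{V}(S_0, 0, \mathcal{M}) = 0$ follows from $0$-neutrality together with the remark recorded after Definition \ref{0NeutralDefinition} that $0$-neutrality means $\overline{V}(S_0, 0, \mathcal{M}) = 0 = \underline{V}(S_0, 0, \mathcal{M})$. Therefore $\underline{V}(S_0, c, \mathcal{M}) = c$ as well.

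There is no serious obstacle here; the proposition is a bookkeeping consequence of the definitions. The only points requiring any care are the convention that the inner sum associated with $H=0$ contributes nothing, and the routine justification that pulling the constant $c$ through the $\sup$ and $\inf$ is legitimate — immediate because $c$ depends on neither $S$ nor $H$, and in the $0$-neutral computation all quantities involved are finite, so no $\pm\infty$ ambiguity arises.
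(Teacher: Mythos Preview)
Your proof is correct, and in fact cleaner than the paper's own argument.

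The paper establishes $\overline{V}(S_0,c,\mathcal{M})\le c$ by splitting into two cases. If $\mathcal{M}$ is \emph{not} $0$-neutral, then $\overline{V}(S_0,0,\mathcal{M})<0$, so there exists some $H\in\mathcal{H}$ with $\sup_{S}\big[-\sum_{i=0}^{N_H(S)-1}H_i(S)\Delta_iS\big]<0$; adding $c$ inside the sup gives a value $\le c$, hence the infimum over $H$ is $\le c$. In the $0$-neutral case the paper simply asserts $\overline{V}(S_0,c,\mathcal{M})=c$ directly. The bound for $\underline{V}$ is then obtained from the duality $\underline{V}(S_0,c,\mathcal{M})=-\overline{V}(S_0,-c,\mathcal{M})$. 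By contrast, you bypass the case split entirely: evaluating the outer optimization at the null portfolio $H=0$ gives $\overline{V}\le c$ and $\underline{V}\ge c$ in one stroke, uniformly for any discrete market. Your treatment of the $0$-neutral case---pulling the constant through the $\inf\sup$ and invoking $\overline{V}(S_0,0,\mathcal{M})=\underline{V}(S_0,0,\mathcal{M})=0$---is essentially the same as the paper's, just made explicit. The net effect is that your argument is shorter and avoids an artificial dichotomy; nothing is lost.
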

\begin{proof}
Consider first the case that $\mathcal{M}$ is not $0$-neutral; it follows that  $\overline{V}(S_0, Z=0, \mathcal{M}) < 0$ in that case, this implies that
there exists  $H \in \mathcal{H}$ such that:
\begin{equation} \nonumber
-\sum_{i=0}^{N_H(S)-1} H_i~(S)~\Delta_iS < 0,~\mbox{for all}~ S \in \mathcal{S},
~~\mbox{so}~~
 \sup_{S \in \mathcal{S}} [c - \sum_{i=0}^{N_H(S)-1} H_i~(S)~\Delta_iS] \le c,
\end{equation}
which leads to $\overline{V}(S_0, Z=c, \mathcal{M}) \leq c$.

\noindent Consider now that $\mathcal{M}$ is $0$-neutral;
it is then clear that $\Vup(S_0, Z=c, \mathcal{M}) =c=\Overline(S_0, Z=c, \mathcal{M}).$\\
Hence $\overline{V}(S_0, Z=c, \mathcal{M}) \leq c$ in all cases,
then $\underline{V}(S_0, Z=c, \mathcal{M}) = - \overline{V}(S_0, -Z=-c, \mathcal{M}) \geq c.$
\end{proof}

\subsection{Minmax Functions. Conditions for Boundedness of  $\underline{V}(Z)$ and $\overline{V}(Z)$}  \label{sec:minMaxFunctions}

The integrability conditions, required for payoffs in a
probabilistic setting, are replaced in the proposed framework
by the, so called, minmax functions.
The general setting  works with a general function $Z: \mathcal{S} \rightarrow \mathbb{R}$.


$Z$ is called an {\it European option} when there exists an integer $M$ and $\hat{Z}: \mathbb{R}^M \rightarrow \mathbb{R}$ and stopping
times $\tau_i \leq \tau_{i+1}$, $i=1, \ldots, M$, so that
$Z(S) = \hat{Z}(S_{\tau_1(S)}, \ldots, S_{\tau_M(S)})$.
The function $\hat{Z}$ will be called a {\it payoff}; the setting allows for path dependency.
For a European call or put option (and so $M=1$) portfolios in $\mathcal{H}$ could/should be required to satisfy
$N_H(S) \leq \tau_1(S)$ for all $H \in \mathcal{H}$ and for all $S$.

\begin{definition}[Upper and Lower Minmax Functions]\label{minmaxFunctions}
Given a finite sequence of stopping times $(\nu_i)_{i=1}^{n}$ with
$\nu_i < \nu_{i+1}$ for $1 \le i < n$, a  real sequence
$(a_i)_{i=1}^{n}$, and $b\in \mathbb{R}$, we call $Z$ an
\emph{upper minmax function} if
\begin{equation}  \nonumber
Z(S) \leq \sum_{i=1}^{n}a_i~S_{\nu_i(S)} + b,~\forall S \in
\mathcal{S}.
\end{equation}
Similarly, $Z$ is called a \emph{lower minmax function} if
\begin{equation} \nonumber
Z(S) \geq \sum_{i=1}^{n}a_i~S_{\nu_i(S)} + b,~\forall S \in
\mathcal{S}.
\end{equation}
\end{definition}

The following examples show that some common options belong to
the class of minmax functions.


\vspace{.1in}
\noindent {\bf Examples:}

\vspace{.1in}
\noindent
{\rm (1)}~ If $Z$ is an European call option with
strike price $K>0$ and  $N(S)$  a stopping time,
\[ Z(S)=(S_{N(S)}-K)^+ \le S_{N(S)}, \]
then $Z$ is an upper minmax function with $n=1$, $a_1=1$,
$\nu_1(S)=N(S)$, and $b=0$.

\vspace{.1in}
\noindent
{\rm (2)}~  If $Z$ is an European put option with
strike price $K>0$ and $N(S)$ a stopping time,
\[ Z(S)=(K-S_{N(S)})^+ \le K, \]
then $Z$ is an upper minmax function with $n=1$, $a_1=0$ and
$b=K$.

Clearly, the above two examples are also lower minmax functions.

\vspace{.1in}
\noindent
{\rm (3)}~ Under the assumption $S_k \geq 0$ for all
$S\in \mathcal{S}$ and all $k \geq 0$; if
 \[Z(S)=a~\max_{1\le i\le
n}~S_{\nu_i(S)} + b \quad \mbox{with} \quad a>0, ~~\mbox{then}~~~
Z(S) \le \sum_{i=1}^n a~S_{\nu_i(S)} + b\]
and so, $Z$ is an upper minmax function with $a_i=a$ for all
$i=1,\dots,n$.

\vspace{.1in}
\noindent
{\rm (4)}~ If
\[ Z(S)=\frac{1}{n}\sum_{i=1}^n S_{\nu_i(S)}, \]
then $Z$ is an upper minmax function with $a_i=\frac{1}{n}$ for
all $i=1, \dots, n$ and $b=0$.

Notice that, in particular,  if $S_j$ is uniformly bounded from below by a constant, for all $j$, then examples $(3)$ and $(4)$  are lower minmax functions as well.

\begin{remark} Under some assumptions on the market $\Me$, such as conditionally $0$-neutral, it can be proven that the conditional bounds $\Vup_k(S, Z, \mathcal{M})$ and/or $\Vdo_k(S, Z, \mathcal{M})$ are finite when $Z$ is an upper or lower minimax function, for reasons of space details are provided elsewhere (\cite{degano}).
\end{remark}

\section{Pricing with Arbitrage in $0$-Neutral Markets}  \label{sec:0-neutralII}

We provide general conditions resulting in a worst case price interval for the possible prices for an European option.
The notion of conditionally $0$-neutral market is the essential ingredient for the result to hold. We compare the minmax bounds  with Merton's bounds and give a detailed justification for the quantities introduced to be actual market prices. As already indicated, $0$-neutrality is a weakening of the no arbitrage condition and indeed the price interval exists even when there is a certain kind of arbitrage opportunity in the market (see discussion after Corollary \ref{necessaryCondition} in Section \ref{sec:localCharacterizations}).

\vspace{.1in}
The definition of addition of two portfolios, implicitly required in the next theorem, is introduced just before the statement of Lemma \ref{lemmaToHaveInterval} in Appendix \ref{furtherPricingResults}.

\begin{theorem} \label{havingAnIntervalTheorem}
Consider a discrete market $\Me= \Se\times \He$, a function $Z$ defined on $\Se$, $S \in \Se$ and  $k \geq 0$ fixed. Assume either that  $N_H$ is  a stopping time for all $H\in\He$ or all $H \in \mathcal{H}$ are liquidated. If $\Se\times(\He+\He)$ is conditionally $0$-neutral at node $(S, k)$, then
\begin{equation} \label{finallyTheInterval}
\underline{V}_k(S, Z, \mathcal{M}) \le \Vup_k(S, Z, \mathcal{M}),
\end{equation}
in particular,
\begin{equation}\nonumber
\underline{V}(S_0, Z, \mathcal{M}) \le \Vup(S_0,  Z, \mathcal{M}).
\end{equation}
\end{theorem}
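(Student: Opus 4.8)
The plan is to establish the conditional inequality \eqref{finallyTheInterval} for the given $(S,k)$ and then obtain the global statement by specializing to $k=0$, where $\Vup_0(S,Z,\Me)=\Vup(S_0,Z,\Me)$, $\underline{V}_0(S,Z,\Me)=\underline{V}(S_0,Z,\Me)$, and conditional $0$-neutrality at $(S,0)$ is exactly $0$-neutrality of $\Se\times(\He+\He)$. For $H\in\He$ and $\tilde S\in\Se_{(S,k)}$ I would abbreviate the gains accumulated from instance $k$ by $g_H(\tilde S)=\sum_{i=k}^{N_H(S)-1}H_i(\tilde S)\,\Delta_i\tilde S$. Unwinding the relation $\underline{V}_k(S,Z,\Me)=-\Vup_k(S,-Z,\Me)$ gives $\underline{V}_k(S,Z,\Me)=\sup_{H\in\He}\inf_{\tilde S\in\Se_{(S,k)}}\big[Z(\tilde S)+g_H(\tilde S)\big]$, while by definition $\Vup_k(S,Z,\Me)=\inf_{H'\in\He}\sup_{\tilde S\in\Se_{(S,k)}}\big[Z(\tilde S)-g_{H'}(\tilde S)\big]$. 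Hence it suffices to prove that for every fixed pair $H,H'\in\He$ one has $\inf_{\tilde S}[Z+g_H]\le\sup_{\tilde S}[Z-g_{H'}]$, after which taking $\sup_{H}$ on the left and $\inf_{H'}$ on the right yields \eqref{finallyTheInterval}.

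Fix $H,H'\in\He$. First I would record the elementary pointwise identity
\begin{equation}\nonumber
Z(\tilde S)+g_H(\tilde S)=\big[Z(\tilde S)-g_{H'}(\tilde S)\big]+\big[g_H(\tilde S)+g_{H'}(\tilde S)\big],\qquad \tilde S\in\Se_{(S,k)}.
\end{equation}
Bounding the first bracket on the right by the constant $\sup_{\tilde S}[Z-g_{H'}]$ and then taking the infimum over $\tilde S$ on both sides produces
\begin{equation}\nonumber
\inf_{\tilde S}\big[Z+g_H\big]\le\sup_{\tilde S}\big[Z-g_{H'}\big]+\inf_{\tilde S}\big[g_H+g_{H'}\big].
\end{equation}
The whole statement therefore reduces to the single inequality $\inf_{\tilde S}\big[g_H(\tilde S)+g_{H'}(\tilde S)\big]\le 0$.

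To obtain this I would pass to the combined portfolio $H+H'\in\He+\He$. By the definition of portfolio addition given just before Lemma \ref{lemmaToHaveInterval}, together with the standing hypothesis that every $N_H$ is a stopping time or every $H\in\He$ is liquidated, $H+H'$ is an admissible portfolio whose gains add, that is $g_{H+H'}(\tilde S)=g_H(\tilde S)+g_{H'}(\tilde S)$ for all $\tilde S\in\Se_{(S,k)}$. Conditional $0$-neutrality of $\Se\times(\He+\He)$ at $(S,k)$ reads $\inf_{G\in\He+\He}\sup_{\tilde S}[-g_G(\tilde S)]=0$; since $0$ is then a lower bound of this family and $H+H'\in\He+\He$, we get $\sup_{\tilde S}[-g_{H+H'}]\ge 0$, i.e. $\inf_{\tilde S}[g_H+g_{H'}]=\inf_{\tilde S}g_{H+H'}=-\sup_{\tilde S}[-g_{H+H'}]\le 0$. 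Substituting back completes the pairwise estimate and hence \eqref{finallyTheInterval}, and the global inequality then follows at $k=0$. The one genuinely delicate point is the gain-additivity $g_{H+H'}=g_H+g_{H'}$: since the summation limits $N_H(S)$ and $N_{H'}(S)$ generally differ, one must reconcile the contributions beyond the smaller stopping instance, and it is precisely here that the stopping-time-or-liquidation dichotomy and the appendix's construction of $\He+\He$ are needed; this is the step I expect to require the most care.
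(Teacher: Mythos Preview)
Your proof is correct and is essentially the same argument as the paper's, which invokes Lemma~\ref{lemmaToHaveInterval} with $\He^1=\He^2=\He$: that lemma first proves subadditivity $\Vup_k(S,Z_1+Z_2,\widetilde{\Me})\le\Vup_k(S,Z_1,\Me)+\Vup_k(S,Z_2,\Me)$ via gain additivity, then sets $Z_1=-Z$, $Z_2=Z$ and uses $\Vup_k(S,0,\widetilde{\Me})=0$. You have simply unrolled this lemma into a direct pairwise estimate, using exactly the same two ingredients---the gain additivity $g_{H+H'}=g_H+g_{H'}$ furnished by the appendix construction under the stopping-time-or-liquidation hypothesis, and the fact that conditional $0$-neutrality forces $\sup_{\tilde S}[-g_G]\ge 0$ for every $G\in\He+\He$.
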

\begin{proof}
Taking $\He^1=\He^2=\He$, the result follows directly from the conclusion (\ref{generalInterval}) of Lemma \ref{lemmaToHaveInterval} in Appendix \ref{furtherPricingResults}.
\end{proof}

Notice that assuming $\Se \times (\He + \He)$ to be conditionally $0$-neutral at node $(S, k)$ implies $\mathcal{M}$ to be conditionally $0$-neutral at that node as well. Assuming the stronger hypothesis $\mathcal{H} + \mathcal{H} = \mathcal{H}$ is not necessary, as it is clearly shown by Corollaries \ref{localHavingAnInterval} and \ref{localHavingAnIntervalII}  in
Section \ref{sec:localCharacterizations}, which provide assumptions implying the conditional $0$-neutral property; those conditions will also imply (\ref{finallyTheInterval}) and do not require
that $\mathcal{H}$ is closed under addition. \\

\vspace{0.05in}
The following is another condition on $\mathcal{S}$ that also ensures (\ref{finallyTheInterval}); the proof is immediate and so omitted.
\begin{proposition}  \label{constantTrajectoryInterval}
Consider a discrete market $\mathcal{M}= \mathcal{S}\times
\mathcal{H}$, a function $Z$  defined on $\mathcal{S}$, a fixed $S \in \mathcal{S}$ and $k \geq 0$. If there exists a sequence $S^0 \in \mathcal{S}_{(S,k)}$ such that $S^0_i=S_k~$ for all $i \ge k$,  then, $\mathcal{M}$ is conditionally $0$-neutral at $(S,k)$, and
\begin{equation}  \nonumber 
\underline{V}_k(S, Z, \mathcal{M})\le Z(S^0)\le \Vup_k(S, Z, \mathcal{M}).
\end{equation}
\end{proposition}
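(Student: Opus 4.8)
The plan is to exploit the single structural fact that the flat continuation $S^0$ carries no price increments beyond time $k$: since $S^0_i = S_k$ for every $i \ge k$, we have $\Delta_i S^0 = S^0_{i+1} - S^0_i = 0$ for all $i \ge k$. Consequently, for \emph{any} portfolio $H \in \mathcal{H}$, every term of the hedging sum $\sum_{i=k}^{N_H(S^0)-1} H_i(S^0)\,\Delta_i S^0$ vanishes, so the whole sum is zero. This is the only computation the argument needs, and it is insensitive to whether the upper summation limit reads $N_H(S)$ or $N_H(S^0)$, since all indices involved are $\ge k$.

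First I would establish conditional $0$-neutrality at $(S,k)$. Plugging the null portfolio $H = 0 \in \mathcal{H}$ into the definition of $\overline{V}_k(S, Z=0, \mathcal{M})$ makes the inner supremum equal to $0$, whence $\overline{V}_k(S, 0, \mathcal{M}) \le 0$. For the reverse inequality, fix an arbitrary $H \in \mathcal{H}$; since $S^0 \in \mathcal{S}_{(S,k)}$ and the hedging sum along $S^0$ vanishes, the supremum over $\tilde{S} \in \mathcal{S}_{(S,k)}$ is at least the value at $\tilde{S} = S^0$, namely $0$. Taking the infimum over $H$ preserves this bound, giving $\overline{V}_k(S, 0, \mathcal{M}) \ge 0$, hence equality, which is exactly Definition \ref{conditionally0Neutral}.

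Next I would obtain the interval. For any $H \in \mathcal{H}$, evaluating the bracket in $\overline{V}_k(S, Z, \mathcal{M})$ at $\tilde{S} = S^0$ yields $Z(S^0) - 0 = Z(S^0)$, so the inner supremum is $\ge Z(S^0)$; passing to the infimum over $H$ gives $Z(S^0) \le \overline{V}_k(S, Z, \mathcal{M})$. The lower inequality then follows by the duality $\underline{V}_k(S, Z, \mathcal{M}) = -\overline{V}_k(S, -Z, \mathcal{M})$ from the definition: applying the bound just proved to $-Z$ gives $-Z(S^0) \le \overline{V}_k(S, -Z, \mathcal{M})$, and negating yields $\underline{V}_k(S, Z, \mathcal{M}) \le Z(S^0)$.

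The argument is genuinely immediate and there is no real obstacle; the one point worth flagging is that it uses neither a stopping-time hypothesis on $N_H$ nor any closure of $\mathcal{H}$ under addition. In contrast with Theorem \ref{havingAnIntervalTheorem}, which routes through $\mathcal{S} \times (\mathcal{H} + \mathcal{H})$, here the existence of a single flat continuation $S^0$ directly furnishes the common witness $Z(S^0)$ lying in the interval, so $\overline{V}_k$ and $\underline{V}_k$ cannot cross.
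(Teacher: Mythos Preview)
Your proof is correct and is precisely the immediate argument the paper has in mind; indeed the paper states that ``the proof is immediate and so omitted.'' Your observation that the vanishing of all increments $\Delta_i S^0$ for $i \ge k$ makes the hedging sum zero regardless of whether the upper index is $N_H(S)$ or $N_H(S^0)$ is exactly the one-line computation required, and your remark that no stopping-time or additivity hypothesis is needed here (in contrast with Theorem \ref{havingAnIntervalTheorem}) is apt.
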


We provide next the simple connection between the minmax bounds and Merton's bounds \cite{merton}. For a call option $C_K(x)=(x-K)_+$, with $K>0$, Merton's bounds are $C_K(S_0)$ and $S_0$.

\begin{proposition}[Merton's Bounds Comparison] \label{MertonBounds} Consider a  discrete market
$\mathcal{M}$, an integer valued function $N= N(S)$, $S \in \mathcal{S}$, and a function $Z$ defined on $\mathcal{S}$. Assume there exists $H\in \He$ such that $H_i(S)= 1$ for any $S\in\Se$ and $0 \leq i\le N_{H}(S) \equiv N(S)$.
 We obtain:

\vspace{.1in}
($a$) If $Z(S) = C_K(S_N)$ and $\mathcal{M}$ is $0$-neutral then, $C_K(S_0)\le \Vdo(S_0,Z,\mathcal{M})$.

($b$) If $Z(S)\le S_N$ for all $S\in\Se$, $\Vup(S_0,Z,\mathcal{M})\le S_0$.
\end{proposition}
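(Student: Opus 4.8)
The plan is to build, in each part, an explicit hedging portfolio out of the unit buy-and-hold strategy $H$ supplied by the hypothesis, and then to invoke the recast hedging characterizations of $\Vdo$ and $\Vup$ from Section \ref{minimaxPricing} together with the monotonicity of the bounds in $Z$. The single algebraic fact used throughout is that $H$ telescopes: since $H_i(S)=1$ for $0\le i\le N(S)$ with $N_H(S)=N(S)$,
\begin{equation} \nonumber
\sum_{i=0}^{N(S)-1} H_i(S)\,\Delta_i S = S_{N(S)}-S_0 = S_N - S_0,\qquad \forall\, S\in\Se .
\end{equation}

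For part $(a)$ I would first use $H$ as a lower-hedge with initial capital $u=S_0-K$. For every $S\in\Se$,
\begin{equation} \nonumber
u + \sum_{i=0}^{N(S)-1} H_i(S)\,\Delta_i S = (S_0-K)+(S_N-S_0) = S_N-K \le (S_N-K)^+ = Z(S),
\end{equation}
so $u=S_0-K$ is admissible in the recast characterization of $\Vdo$, giving $\Vdo(S_0,Z,\Me)\ge S_0-K$. Separately, since $Z=(S_N-K)^+\ge 0$ and $\Me$ is $0$-neutral, monotonicity of $\Vdo$ in $Z$ yields $\Vdo(S_0,Z,\Me)\ge \Vdo(S_0,0,\Me)=0$. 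Combining the two lower bounds gives $\Vdo(S_0,Z,\Me)\ge \max\{S_0-K,\,0\}=(S_0-K)^+=C_K(S_0)$, which is the claim.

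For part $(b)$ I would use $H$ as an upper-hedge with initial capital $u=S_0$ against the dominating payoff $\tilde Z(S)\equiv S_N$. For every $S\in\Se$,
\begin{equation} \nonumber
u + \sum_{i=0}^{N(S)-1} H_i(S)\,\Delta_i S = S_0 + (S_N-S_0) = S_N = \tilde Z(S),
\end{equation}
so $u=S_0$ is admissible in the recast characterization of $\Vup$, giving $\Vup(S_0,\tilde Z,\Me)\le S_0$. Since $Z(S)\le S_N=\tilde Z(S)$ for all $S$, monotonicity of $\Vup$ in $Z$ yields $\Vup(S_0,Z,\Me)\le \Vup(S_0,\tilde Z,\Me)\le S_0$, as required.

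The computations are elementary, so there is no genuine obstacle; the only point worth emphasizing is the precise role of $0$-neutrality in part $(a)$. The buy-and-hold bound $\Vdo(S_0,Z,\Me)\ge S_0-K$ on its own is insufficient exactly when $S_0<K$, where $C_K(S_0)=0>S_0-K$; it is $0$-neutrality (which forces $\Vdo(S_0,0,\Me)=0$) that supplies the missing nonnegativity of the lower bound. Part $(b)$, by contrast, needs neither $0$-neutrality nor any stopping-time or liquidation hypothesis on $H$, only its membership in $\He$ and the telescoping identity.
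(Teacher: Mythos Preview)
Your proof is correct and follows essentially the same approach as the paper's: both use the telescoping identity for the unit buy-and-hold portfolio $H$ to bound $\Vdo$ and $\Vup$ directly, and both invoke $0$-neutrality in part $(a)$ precisely to cover the case $S_0\le K$. The only cosmetic difference is that the paper splits part $(a)$ into the two cases $S_0\le K$ and $S_0>K$, whereas you establish the two inequalities $\Vdo\ge S_0-K$ and $\Vdo\ge 0$ uniformly and then take their maximum; your organization is if anything slightly cleaner.
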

\begin{proof}
Fix $S\in\Se$, $0$-neutrality implies $\Vdo(S_0,Z,\mathcal{M}) \geq 0$, so ($a$) is clearly valid if $S_0\le K$. If $S_0>K$
\[C_K(S_0)=S_0-K \le (S_N-K)_+ - (S_N-S_0) = Z(S) - \sum_{i=0}^{N-1} \Delta_iS.\]
Thus, \[C_K(S_0)\le \inf_{S\in\Se}[Z(S) - \sum_{i=0}^{N-1} \Delta_iS]\le \Vdo(S_0,Z,\mathcal{M}).\]

($b$) $S_N - S_0 = \sum_{i=0}^{N-1} \Delta_iS$, then
\[S_0 = S_N - \sum_{i=0}^{N-1} \Delta_iS \ge Z(S) - \sum_{i=0}^{N-1} \Delta_iS.\]
Consequently
\[S_0 \ge \sup_{S\in\Se}[Z(S) - \sum_{i=0}^{N-1} \Delta_iS]\ge \Vup(S_0,Z,\mathcal{M}).
\]
\end{proof}

In a situation where Proposition \ref{constantTrajectoryInterval} and Proposition \ref{MertonBounds}, item $a)$, are both applicable, we obtain the interesting result $C_K(S_0)= \Vdo(S_0,Z,\mathcal{M})$. This shows that a characteristic of the trajectory space namely, the presence of a globally constant trajectory, implies that the lower
Merton bound is attained.

\subsection{Meaning of Option Prices in  $0$-Neutral Discrete Markets} \label{sec:pricingIn0-neutralMarkets}
Having in mind the assumptions leading to the conclusion $\underline{V}(S, Z, \mathcal{M}) \le \Vup(S,  Z, \mathcal{M})$ (as in Theorem \ref{havingAnIntervalTheorem}, Proposition \ref{constantTrajectoryInterval}, Corollary \ref{localHavingAnInterval} and Corollary \ref{localHavingAnIntervalII}),
we introduce the following definition of price interval.

\begin{definition}  \label{pricingMarket}
Consider a discrete market  $\mathcal{M}= \mathcal{S} \times \mathcal{H}$
and a function $Z$ on $\mathcal{S}$. Under the assumption
that $\underline{V}(S_0, Z, \mathcal{M}) \le \Vup(S_0,  Z, \mathcal{M})$,
we will call
$[\Overline(S_0, Z, \mathcal{M}), \Vup(S_0, Z, \mathcal{M})]$ the price interval of $Z$ relative to $\mathcal{M}$.
\end{definition}
\noindent
Observe that under the referred assumptions $\pi \in (\Overline(S_0, Z, \mathcal{M}), \Vup(S_0, Z, \mathcal{M}))$ is a fair price  for $Z$.

\vspace{.1in}
The assumptions  in Theorem \ref{havingAnIntervalTheorem} guarantee a pricing interval and at the same time allow for
arbitrage in the market (see Corollary \ref{necessaryCondition} and discussion afterwards). It should be noted that the presence of arbitrage
nodes will impact the actual value of the option  bounds. Examples for the extent to which this could happen are documented in \cite{degano}.

Under the assumption that an option contract has been traded, the existence of the minmax price interval,
independently of the presence of an arbitrage strategy, is substantiated
on the need to have enough funds to match the {\it certainty} of future financial obligations. This is in contrast to an investment in an arbitrage opportunity which profits are {\it uncertain} as they may not materialize in a $0$-neutral market. In a general $0$-neutral market, an investment following an arbitrage portfolio will not guarantee enough returns under all scenarios in order to cover the obligations required by $Z$.

The simplest mathematical example illustrating such a financial situation
is given by a one-step market $\mathcal{M}$ where $N_H(S) =1$ for all $(S,H)$
and $\sup_{\tilde{S} \in \mathcal{S}}~~ (\tilde{S}_{1} - s_{0}) >    0~~\mbox{and}~
\inf_{\tilde{S} \in \mathcal{S}} ~~(\tilde{S}_{1} - s_{0}) =  0$
and that the infimum
is realized at a unique  $\hat{S} \in \mathcal{S}$.
So, the market is locally $0$-neutral; furthermore,
if $\{h \equiv H_0(S): S  \in \mathcal{S},~H \in \mathcal{H}\} = \mathbb{R}$ one can also see that
$\overline{V}(s_0, Z, \mathcal{M})= Z(\hat{S}_1)$ where
$Z(S) = Z(S_1)$ is a European call option.
A risk neutral price is not available in this case but
the minmax price provides a solution reflecting the needs of investors dealing with the option. Namely, if the option selling price is smaller than $\overline{V}(s_0, Z, \mathcal{M})$ the potential obligation $Z(S_1)$ could not be matched under all scenarios through investing on the arbitrage (as actual profits may not materialize) resulting in a shortage of funds, under some scenarios, for the seller of the option. So, it is the worst case approach, requiring coverage under all scenarios, that allows for the co-existence of arbitrage and a price interval in a $0$-neutral market.
If $0$-neutrality does not hold, it is easy to see that
the minmax optimization falls back into the arbitrage opportunity by giving
$\overline{V}(s_0, Z, \mathcal{M})= - \infty$ and the optimal
investment $h$ given by $h^{\ast} = \infty$ in the above example.

See also related arguments in \cite{karatzas} where, in a context of portfolio selection, a numeraire portfolio is shown to exist under conditions that allow for arbitrage opportunities.

\section{Trajectory Based Conditions for  $0$-neutral and arbitrage-free Markets}\label{sec:localCharacterizations}

Theorem \ref{havingAnIntervalTheorem}, in Section \ref{sec:0-neutralII}, shows the key role of conditional $0$-neutrality
 in order to obtain a worst case price interval.
The present section provides natural and general sufficient conditions that imply a discrete market $\Me=\Se\times\He$ to be conditionally $0$-neutral or arbitrage-free. Under mild assumptions, these conditions are easily seen to be necessary conditions as well.
The key ingredients are the local conditions, introduced in Definition \ref{localDefinitions}, that allow trajectories to move in a contrarian way to an arbitrary investment. There is also a need for global conditions
related to how market participants may stop their portfolio rebalances. We provide two
general financial settings leading to such global conditions, these assumptions also imply
 existence of a price interval.

\vspace{.1in}

\vspace{.1in}
\noindent
The following definition will be a main tool.

\begin{definition}[$\epsilon$-contrarian]  \label{contrarianTrajectoryDefinition}
Given $H\in\He, S\in\Se$, $\epsilon \geq 0$  and $n\ge 1$, if
\begin{equation}  \label{contrarianTrajectory}
\exists ~~ S^{n, \epsilon} \in \mathcal{S}_{(S,n)} \quad \mbox{and} \quad
 \sum_{i=n}^{N_H(S^{n, \epsilon})-1} H_i(S^{n, \epsilon}) \Delta_i S^{n, \epsilon} < \epsilon,
\end{equation}
we will say that $H$ and $S^{n, \epsilon}$ are $\epsilon$-contrarian beyond $n$.
\end{definition}
Notice that $S^{n, \epsilon} = S$ trivially  satisfies the above definition for the case $n \geq N_H(S)$ and $\epsilon > 0$.
\begin{remark}
When the portfolio $H$ is clearly understood from the context, we will
just simply say  $S^{n, \epsilon}$ is $\epsilon$-contrarian beyond $n$. 
Also, saying ``beyond $n$'' is synonymous to the fact that $S^{n, \epsilon} \in \mathcal{S}_{(S,n)}$ where, also, $S$ is understood from the context.
\end{remark}

The following two propositions, presented without proofs, are stated so that they resemble each other, some of the similarity is lost because we have decided to deal only with the notion of arbitrage (and no arbitrage) starting only at $k=0$ (i.e. we have not introduced conditional versions of these concepts).
\begin{proposition}\label{0-neutralCharacterization}
$\Me$ is conditionally $0$-neutral at $(S,k)$,  with  $S\in\Se$ and $k\ge 0$ if and only if for each $H\in\He$
and $\epsilon >0$, there exists $S^{k, \epsilon}$ which is $\epsilon$-contrarian beyond $k$. 
\end{proposition}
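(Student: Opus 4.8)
The plan is to reduce both sides of the stated equivalence to a condition on the single quantity
\[
g(H) \;\equiv\; \sup_{\tilde{S} \in \Se_{(S,k)}} \Big[\, - \sum_{i=k}^{N_H(\tilde{S})-1} H_i(\tilde{S})\, \Delta_i\tilde{S} \,\Big],
\]
and then to connect these through the elementary characterization of a supremum. By the very definition of the conditional minmax bound, $\Vup_k(S, Z=0, \Me) = \inf_{H \in \He} g(H)$, so the whole statement is an unwinding of this identity.

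First I would invoke the null portfolio. Since $H=0 \in \He$ and every summand then vanishes, the bracket is identically $0$ and hence $g(0)=0$; consequently $\inf_{H\in\He} g(H) \le 0$ holds unconditionally. Therefore the defining equality $\Vup_k(S,0,\Me)=0$ is equivalent to the one–sided bound $\inf_{H\in\He} g(H) \ge 0$. Because the infimum is the greatest lower bound, this bound holds precisely when $0$ is itself a lower bound of $\{g(H):H\in\He\}$, that is, when
\[
g(H) \ge 0 \qquad \text{for every } H \in \He .
\]

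It then remains to identify, for a fixed $H$, the inequality $g(H)\ge 0$ with the contrarian condition of Definition \ref{contrarianTrajectoryDefinition}. Writing $f(\tilde S) = -\sum_{i=k}^{N_H(\tilde S)-1} H_i(\tilde S)\,\Delta_i\tilde S$, the statement $\sup_{\tilde S} f(\tilde S) \ge 0$ is equivalent, by the definition of supremum, to: for every $\epsilon>0$ there exists $S^{k,\epsilon}\in\Se_{(S,k)}$ with $f(S^{k,\epsilon}) > -\epsilon$. Moving the sign across, the latter reads $\sum_{i=k}^{N_H(S^{k,\epsilon})-1} H_i(S^{k,\epsilon})\,\Delta_i S^{k,\epsilon} < \epsilon$, which is exactly the assertion that $H$ and $S^{k,\epsilon}$ are $\epsilon$-contrarian beyond $k$. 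Quantifying this last equivalence over all $H\in\He$ and chaining it with the previous reduction yields the proposition.

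This argument is essentially a definitional unwinding, so I do not anticipate a genuine obstacle; the only delicate point is the first reduction, where the membership $H=0\in\He$ is precisely what pins the infimum at $\le 0$ and thereby lets me replace the equality ``$=0$'' by the one–sided inequality ``$\ge 0$''. Beyond that, care is needed only with the sign when passing between $f(S^{k,\epsilon})>-\epsilon$ and the contrarian inequality, and with the degenerate empty-sum cases (for instance $k \ge N_H$), which are already covered by the remark following Definition \ref{contrarianTrajectoryDefinition}.
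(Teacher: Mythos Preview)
Your argument is correct: it is precisely the definitional unwinding that the paper has in mind, and indeed the paper presents this proposition explicitly ``without proof'', treating it as an immediate consequence of the definitions of conditional $0$-neutrality and of $\epsilon$-contrarian trajectories. Your use of $H=0\in\He$ to reduce the equality $\Vup_k(S,0,\Me)=0$ to the one-sided bound $g(H)\ge 0$ for all $H$ is exactly the right observation, and the remaining supremum characterization is routine.
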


\begin{proposition}\label{arbitrageFreeCharacterization}
$\Me$ is arbitrage-free if and only if for each $H\in\He$ we have:
\begin{equation} \nonumber
\exists ~~ S^{0} \in \Se~~\mbox{such that}~~ H ~ \mbox{and} ~~ S^0 ~\mbox{are}~~~0\mbox{-contrarian beyond}~ n=0,\quad
\mbox{or}\quad \sum_{i=0}^{N_H(S)-1} H_i(S) \Delta_i S = 0\quad \forall S \in \Se.
\end{equation}
\end{proposition}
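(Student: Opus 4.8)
The plan is to reduce both sides of the equivalence to statements about the single terminal gain functional
\[
G_H(S) \equiv \sum_{i=0}^{N_H(S)-1} H_i(S)\,\Delta_i S,
\]
and then observe that, once the definitions are unwound, the two conditions are logically identical. First I would record, from the self-financing identity (\ref{portfolioValue}), that $G_H(S) = V_{H}(N_H(S),S) - V_{H}(0,S_0)$; this is exactly the profit of the strategy $H$ realized along the trajectory $S$. In these terms, Definition \ref{ArbitrageDefinition} says that $H$ is an arbitrage strategy precisely when $G_H(S)\ge 0$ for all $S\in\Se$ and $G_H(S^{*})>0$ for some $S^{*}\in\Se$.

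Next I would translate the right-hand side. Since every trajectory shares the initial value $S_0=s_0$, the conditional set $\Se_{(S,0)}$ coincides with $\Se$ itself; hence, specializing Definition \ref{contrarianTrajectoryDefinition} to $n=0$ and $\epsilon=0$, the assertion that $H$ and $S^0$ are $0$-contrarian beyond $n=0$ is precisely that $G_H(S^0)<0$ for some $S^0\in\Se$. Thus the displayed right-hand condition reads, for each $H\in\He$: either $G_H(S^0)<0$ for some $S^0$, or $G_H(S)=0$ for all $S\in\Se$.

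The core step is then a short logical negation carried out for a fixed $H$. By the reformulation above, $H$ fails to be an arbitrage strategy iff it is not the case that both $G_H\ge 0$ everywhere and $G_H>0$ somewhere; equivalently, iff either $G_H(S)<0$ for some $S$, or $G_H(S)\le 0$ for all $S$. In the second alternative, if in addition $G_H(S)<0$ holds for no $S$, then $G_H\ge 0$ and $G_H\le 0$ everywhere force $G_H\equiv 0$. Consequently $H$ is not an arbitrage strategy iff either $G_H(S^0)<0$ for some $S^0$ or $G_H\equiv 0$, which is exactly the per-$H$ right-hand condition. Taking the conjunction over all $H\in\He$ — recalling that $\Me$ is arbitrage-free precisely when no $H\in\He$ is an arbitrage strategy — yields the claimed equivalence.

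Since the argument is pure definition-chasing, there is no substantive analytic obstacle. The only point requiring care is the logical bookkeeping in the negation: the observation that ``$G_H\ge 0$ everywhere but strictly positive nowhere'' collapses to ``$G_H\equiv 0$'' is what produces the second disjunct on the right-hand side, and the identification $\Se_{(S,0)}=\Se$ is what lets the $0$-contrarian clause be stated globally rather than conditionally.
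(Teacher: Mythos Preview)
Your proposal is correct. The paper itself omits the proof of this proposition (it explicitly states that Propositions \ref{0-neutralCharacterization} and \ref{arbitrageFreeCharacterization} are ``presented without proofs''), precisely because the argument is the direct definition-unwinding you carry out: rewriting the arbitrage condition via $G_H(S)=V_H(N_H(S),S)-V_H(0,S_0)$, identifying $\Se_{(S,0)}=\Se$, and performing the logical negation to obtain the two disjuncts.
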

Clearly if $H$ and $S$ are $\epsilon'$-contrarian then, they will be also $\epsilon$-contrarian if  $\epsilon \geq \epsilon'$, this implies the following corollary, which shows that $0$-neutral is a necessary condition for a discrete market to be arbitrage-free.
\begin{corollary}\label{necessaryCondition}
If $\Me$ is arbitrage-free, then $\Me$ is $0$-neutral.
\end{corollary}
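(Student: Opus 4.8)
The plan is to chain together the two characterizations already established, namely Proposition \ref{arbitrageFreeCharacterization} for the arbitrage-free hypothesis and Proposition \ref{0-neutralCharacterization} for the $0$-neutral conclusion. The observation linking them is the remark recorded just before the corollary: being $0$-contrarian (that is, $\epsilon$-contrarian with $\epsilon'=0$) forces $\epsilon$-contrarian for every $\epsilon \geq 0$. Since, by the comment following Definition \ref{conditionally0Neutral}, $0$-neutrality of $\Me$ is exactly conditional $0$-neutrality at $k=0$ (which depends on $S$ only through $S_0$), it suffices to verify the hypothesis of Proposition \ref{0-neutralCharacterization} at $k=0$: for each $H \in \He$ and each $\epsilon > 0$, I must exhibit some $S^{0,\epsilon} \in \Se$ that is $\epsilon$-contrarian beyond $n=0$.

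Fix $H \in \He$ and $\epsilon > 0$. Because $\Me$ is arbitrage-free, Proposition \ref{arbitrageFreeCharacterization} gives a dichotomy. In the first alternative there exists $S^0 \in \Se$ that is $0$-contrarian beyond $n=0$, i.e. $\sum_{i=0}^{N_H(S^0)-1} H_i(S^0)\,\Delta_i S^0 < 0$; by the monotonicity of the contrarian notion in $\epsilon$ this same $S^0$ is $\epsilon$-contrarian beyond $0$, so I may take $S^{0,\epsilon} = S^0$. In the second alternative $\sum_{i=0}^{N_H(S)-1} H_i(S)\,\Delta_i S = 0$ for all $S \in \Se$; then any $S \in \Se$ satisfies $\sum_{i=0}^{N_H(S)-1} H_i(S)\,\Delta_i S = 0 < \epsilon$, so any $S^{0,\epsilon} \in \Se$ works. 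Either way the required $\epsilon$-contrarian trajectory exists.

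Since $H$ and $\epsilon>0$ were arbitrary, Proposition \ref{0-neutralCharacterization} (at $k=0$) yields conditional $0$-neutrality at $(S,0)$, which is precisely $0$-neutrality of $\Me$. There is essentially no hard step; the only points requiring care are the two bookkeeping identifications (that conditional $0$-neutrality at $k=0$ coincides with $0$-neutrality, and that $0$-contrarian implies $\epsilon$-contrarian for $\epsilon>0$), both already recorded in the surrounding text. As a sanity check one can also argue straight from the definitions: the null portfolio $H\equiv 0$ gives $\sup_{S\in\Se}\{-\sum_{i=0}^{N_0(S)-1} H_i(S)\,\Delta_i S\} = 0$, so the infimum defining $0$-neutrality is $\le 0$; and were it $<0$ for some $H$, then $\sum_{i=0}^{N_H(S)-1} H_i(S)\,\Delta_i S > 0$ for \emph{all} $S \in \Se$, producing an arbitrage strategy and contradicting the hypothesis. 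Hence the infimum equals $0$.
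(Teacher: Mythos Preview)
Your proof is correct and follows exactly the route the paper intends: you chain Proposition \ref{arbitrageFreeCharacterization} with Proposition \ref{0-neutralCharacterization} via the monotonicity of the $\epsilon$-contrarian notion, handling both alternatives of the dichotomy explicitly. The additional direct argument from the definitions at the end is also valid and self-contained.
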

\noindent
The converse of Corollary \ref{necessaryCondition} does not hold in general.
Consider a market with $N_H=1,~~\forall ~~H \in \mathcal{H}$. If
$\sup_{S \in \Se}~~ \Delta_{0}S >   0~~\mbox{and}~ \inf_{S \in \Se} ~~\Delta_{0}S =  0$, it provides a clear arbitrage with $H_0 \equiv 1$, nonetheless the market is $0$-neutral. We have seen in the previous section that a well defined option pricing methodology is still possible.

\vspace{.1in} The local $0$-neutral property of $\mathcal{S}$ makes it possible to obtain trajectories which are {\it almost} $\epsilon$-contrarian, this is shown in Lemma \ref{fugitiveTrajectory} in Appendix \ref{contrarianAuxiliaryMaterial}. Nevertheless, this local property, which does not involve $\He$, is not enough to obtain conditions guaranteeing conditional (global) $0$-neutrality of $\mathcal{M}$. We will tackle this shortcoming by
imposing global financially-based conditions of a  general nature that, when supplementing the trajectory based local conditions, will provide the existence of contrarian trajectories and of a price interval.

\subsection{Initially Bounded $N_H$}  \label{sec:initiallyBounded}
The following definition reflects the situation of an investor who {\it decides}  conditionally on a bounded number of transactions, that he/she will stop trading after a certain fixed number of future trades.  The setting allows for unbounded $N_H$.

\begin{definition}[Initially Bounded]\label{initiallyBoundedDef}
Given a discrete  market $\Me=\Se\times\He$ and $H \in \mathcal{H}$; we will call $N_H$ \emph{initially bounded} if there exists a bounded function $\rho:\Se \rightarrow\mathbb{N}$ (which may depend on $H$) such that 
for all $S\in \Se$:
\begin{equation}  \label{initiallyBounded}
N_H ~~~~ \mbox{is bounded on}~~~~\Se_{(S, \rho(S))}.
\end{equation}
\end{definition}
If (\ref{initiallyBounded}) holds, $\tilde{S} \in \Se_{(S, \rho(S))}$ and $\rho$ is a stopping time, then $\rho(\tilde{S})=\rho(S)$. Also,
if $N_H$ is bounded, then it is actually initially bounded by taking $\rho=N_H$.

\vspace{.1in}
We are now able to provide a general setting ensuring that a discrete market $\mathcal{M}$ is $0$-neutral.
\begin{theorem}\label{0-neutral}
Given a market $\Me=\Se\times\He$, $k\ge 0$ and $S^k\in\Se$, assume that $N_H$ is initially bounded 
for any $H\in \He$ and each node $(S,j)$, with $S \in \Se_{(S^k, k)}$ and $j\ge k$, is $0$-neutral. Then, $\Me$ is conditionally $0$-neutral at $(S^k,k)$.
\end{theorem}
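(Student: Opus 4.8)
The plan is to prove the equivalent statement $\Vup_k(S^k, 0, \Me) = 0$, which is exactly conditional $0$-neutrality at $(S^k,k)$ by Definition \ref{conditionally0Neutral}. The inequality $\Vup_k(S^k,0,\Me)\le 0$ is immediate, since the null portfolio $H=0$ lies in $\He$ and contributes an identically vanishing sum, so the inner supremum is $0$ and the outer infimum is $\le 0$. All the work is in the reverse inequality. For each fixed $H\in\He$, if one can exhibit, for every $\epsilon>0$, a trajectory $\tilde S\in\Se_{(S^k,k)}$ with $\sum_{i=k}^{N_H(\tilde S)-1}H_i(\tilde S)\,\Delta_i\tilde S<\epsilon$ (that is, $\tilde S$ is $\epsilon$-contrarian beyond $k$ in the sense of Definition \ref{contrarianTrajectoryDefinition}), then $\sup_{\tilde S\in\Se_{(S^k,k)}}[-\sum_{i=k}^{N_H(\tilde S)-1}H_i(\tilde S)\,\Delta_i\tilde S]\ge -\epsilon$; letting $\epsilon\downarrow 0$ and then taking the infimum over $H$ gives $\Vup_k(S^k,0,\Me)\ge 0$. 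This reduction is precisely the content of Proposition \ref{0-neutralCharacterization}, so it remains to build such contrarian trajectories.

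First I would fix $H$ and $\epsilon$, and allocate a summable budget $\eta_j:=\epsilon\,2^{-(j-k+1)}$, so that $\sum_{j\ge k}\eta_j=\epsilon$; this avoids having to know in advance how many increments enter the sum. Then I would build a contrarian prefix one node at a time. Given the prefix up to level $j\ge k$, the corresponding node $(\tilde S,j)$ satisfies $\tilde S\in\Se_{(S^k,k)}$ and is therefore $0$-neutral by hypothesis, and non-anticipativity of $H$ makes $H_j(\tilde S)$ depend only on this prefix. I then choose the continuation according to the sign of $H_j(\tilde S)$: if $H_j(\tilde S)>0$ the $0$-neutral condition $\inf_{\hat S\in\Se_{(\tilde S,j)}}(\hat S_{j+1}-\tilde S_j)\le 0$ produces a continuation with $H_j(\tilde S)\,\Delta_j\tilde S<\eta_j$; if $H_j(\tilde S)<0$ the companion condition $\sup_{\hat S\in\Se_{(\tilde S,j)}}(\hat S_{j+1}-\tilde S_j)\ge 0$ does the same; and if $H_j(\tilde S)=0$ any continuation gives $0<\eta_j$. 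Each such choice keeps the extended prefix a genuine prefix of some element of $\Se$.

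The main obstacle is ensuring this recursion closes up inside $\Se$: a purely infinite diagonal construction need not land on an actual trajectory of $\Se$, and without control on the rebalancing horizon the partial sum need never terminate. This is exactly where the \emph{initially bounded} hypothesis is used. Letting $R$ be a bound for the associated function $\rho$, I would run the contrarian construction up to level $R$. For every extension $\bar S$ of the resulting prefix one has $\rho(\bar S)\le R$, hence $\Se_{(\bar S,R)}\subseteq\Se_{(\bar S,\rho(\bar S))}$; since $\Se_{(\bar S,R)}$ depends only on the common prefix up to $R$, the initially bounded hypothesis yields a single integer $B$ with $N_H\le B$ throughout $\Se_{(\bar S,R)}$. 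I then continue the contrarian choices up to level $B$, and because the last choice is drawn from a continuation lying in $\Se$, I extract an actual $S^{*}\in\Se$ sharing this prefix and set $\tilde S:=S^{*}$. Then $\tilde S\in\Se_{(S^k,k)}$ and $N_H(\tilde S)\le B$, so every increment with $k\le i\le N_H(\tilde S)-1$ was chosen contrarian; summing the per-step bounds gives $\sum_{i=k}^{N_H(\tilde S)-1}H_i(\tilde S)\,\Delta_i\tilde S<\sum_{j\ge k}\eta_j=\epsilon$. This provides the required $\epsilon$-contrarian trajectory for each $H$ and $\epsilon$, and hence $\Vup_k(S^k,0,\Me)=0$, completing the proof.
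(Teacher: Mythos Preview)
Your proof is correct and follows essentially the same approach as the paper. The paper modularizes the argument by first invoking Lemma~\ref{fugitiveTrajectory} (your step-by-step contrarian construction with the geometric budget) and then Lemma~\ref{contrarianTrajectoryConstruction}(a) (your use of the bound $R$ on $\rho$ followed by the induced bound $B$ on $N_H$ to terminate); you carry out both steps inline, but the content and the use of Proposition~\ref{0-neutralCharacterization} to reduce to finding $\epsilon$-contrarian trajectories are identical.
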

\begin{proof} From Proposition \ref{0-neutralCharacterization}, for any $H\in\He$ and a given $\epsilon>0$, it is enough to show the existence of an $\epsilon$-contrarian trajectory with respect to $H$, extending $S^k$ beyond $k$. From the hypothesis on $0$-neutrality of nodes and fixed $\epsilon>0$, observe that  Lemma \ref{fugitiveTrajectory} from Appendix \ref{contrarianAuxiliaryMaterial} is applicable giving a sequence of trajectories $\{S^m\}_{m\ge k}$ verifying
\begin{equation}\label{controlledSum}
S^{m}\in \Se_{(S^{m-1},m-1)}, \quad \mbox{and}\quad
\sum_{i=k}^{n-1} H_{i}(S^m) \Delta_iS^{m}< \sum_{i=k}^{n-1}\frac \epsilon{2^i} < \epsilon, \quad
\mbox{for}\quad m\ge n > k.
\end{equation}
Since $S^{k+1}$ and $H$ result $\epsilon$-contrarian beyond $k$ if $N_H(S^{k+1})\le k+1$, we only need to consider the case where $N_H(S^{k+1})> k+1$. The result then follows from Lemma \ref{contrarianTrajectoryConstruction} $(a)$ in Appendix \ref{contrarianAuxiliaryMaterial}, taking $\kappa=k+1$ in that lemma.
\end{proof}

\vspace{.1in}
The following corollary provides existence of the pricing interval
in the setting of Theorem \ref{0-neutral}. The sum of portfolios, used in the proof, is presented before Lemma \ref{lemmaToHaveInterval}
in Appendix \ref{furtherPricingResults}.

\begin{corollary} \label{localHavingAnInterval} Consider a discrete market $\Me = \Se\times \He$, a function $Z$  defined on $\Se$, $S \in \Se$ and $k \geq 0$ fixed.
Assume $N_H$ to be initially bounded for all $H\in\He$ and either  $N_H$ is  a stopping time for all $H\in\He$ or all $H \in \mathcal{H}$ are liquidated. Then,
if each node $(\tilde{S},j)$, with $\tilde{S} \in \Se_{(S, k)}$ and $j\ge k$, is $0$-neutral:
\begin{equation} \nonumber 
\Vdo_k(S, Z, \Me) \le \Vup_k(S, Z, \Me).
\end{equation}
\end{corollary}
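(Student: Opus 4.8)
The plan is to reduce the statement to Theorem \ref{havingAnIntervalTheorem}. That theorem yields exactly the desired inequality $\Vdo_k(S, Z, \Me) \le \Vup_k(S, Z, \Me)$, provided two hypotheses hold: first, that either $N_H$ is a stopping time for all $H \in \He$ or every $H \in \He$ is liquidated, which is assumed here verbatim; and second, that the enlarged market $\Se \times (\He + \He)$ is conditionally $0$-neutral at the node $(S,k)$. So the entire task reduces to establishing this conditional $0$-neutrality of $\Se \times (\He + \He)$.

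To obtain it, I would invoke Theorem \ref{0-neutral}, but applied to the market $\Se \times (\He + \He)$ rather than to $\Me$ itself (taking the fixed trajectory $S^k := S$). Theorem \ref{0-neutral} has two hypotheses. The node hypothesis, namely that every node $(\tilde{S}, j)$ with $\tilde{S} \in \Se_{(S,k)}$ and $j \ge k$ is $0$-neutral, depends only on the trajectory space $\Se$ through the conditional sets $\Se_{(\tilde{S}, j)}$ (see Definition \ref{localDefinitions}) and not on the portfolio set; hence it transfers unchanged from the corollary's assumption to the enlarged market. It therefore remains only to verify the other hypothesis: that $N_H$ is initially bounded for every $H \in \He + \He$.

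This last verification is the crux. Let $H = H^1 + H^2$ with $H^1, H^2 \in \He$, and let $\rho_1, \rho_2 : \Se \to \mathbb{N}$ be bounded functions witnessing the initial boundedness of $N_{H^1}$ and $N_{H^2}$ in the sense of Definition \ref{initiallyBoundedDef}. I would set $\rho := \max(\rho_1, \rho_2)$, which is again bounded. Since $\rho(S') \ge \rho_i(S')$ for every $S'$ and each $i$, conditioning on more coordinates gives the inclusion $\Se_{(S', \rho(S'))} \subseteq \Se_{(S', \rho_i(S'))}$, so both $N_{H^1}$ and $N_{H^2}$ are bounded on $\Se_{(S', \rho(S'))}$. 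From the definition of the portfolio sum (given just before Lemma \ref{lemmaToHaveInterval} in the appendix), the stopping index of $H^1 + H^2$ is dominated by those of its summands, $N_{H^1+H^2} \le \max(N_{H^1}, N_{H^2})$, because once both components have stopped rebalancing or liquidated their sum is likewise constant or zero; thus $N_{H^1+H^2}$ is bounded on $\Se_{(S', \rho(S'))}$ as well, which is exactly initial boundedness. The main obstacle I anticipate is precisely this bookkeeping: keeping the monotonicity of the conditional sets straight and confirming from the appendix definition that the stopping index of a sum is controlled by the maximum of the summands' indices in both the stopping-time and the liquidated cases.

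With initial boundedness of $N_H$ for all $H \in \He + \He$ in hand, Theorem \ref{0-neutral} yields conditional $0$-neutrality of $\Se \times (\He + \He)$ at $(S,k)$. Theorem \ref{havingAnIntervalTheorem}, whose remaining hypothesis on stopping times or liquidation is assumed, then delivers $\Vdo_k(S, Z, \Me) \le \Vup_k(S, Z, \Me)$, completing the argument.
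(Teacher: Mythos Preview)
Your proposal is correct and follows essentially the same approach as the paper's own proof: both verify that initial boundedness is closed under portfolio addition via $\rho = \max(\rho_1,\rho_2)$ and the inclusion $\Se_{(S',\rho(S'))}\subseteq \Se_{(S',\rho_j(S'))}$, then apply Theorem~\ref{0-neutral} to obtain conditional $0$-neutrality of $\Se\times(\He+\He)$ at $(S,k)$, and finally invoke Theorem~\ref{havingAnIntervalTheorem}. The only cosmetic difference is that the paper writes out an explicit sandwich chain $\Vdo_k(S,Z,\Me)\le\Vdo_k(S,Z,\widetilde{\Me})\le\Vup_k(S,Z,\widetilde{\Me})\le\Vup_k(S,Z,\Me)$, whereas you apply Theorem~\ref{havingAnIntervalTheorem} directly to $\Me$; your route is in fact slightly more direct. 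One minor remark: by the paper's definition of the sum (just before Lemma~\ref{lemmaToHaveInterval}), $N_{H^1+H^2}$ equals $\max(N_{H^1},N_{H^2})$ exactly, not merely $\le$, though your inequality suffices.
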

\begin{proof}  Observe that the initially bounded property is closed under addition.
Indeed, let $\rho_1,\rho_2$, the functions required by Definition \ref{initiallyBoundedDef} for $H^1,H^2\in\He$ respectively. Then, set $H=H^1+H^2$ and $\rho\equiv \max\{\rho_1,\rho_2\}$; since $\Se_{(S,\rho(S))}\subset \Se_{(S,\rho_j(S))}\;j=1,2$ and so, $N_{H}$ is bounded in $\Se_{(S,\rho(S))}$. Therefore, Theorem \ref{0-neutral} applies implying $~~\widetilde{\Me} =\Se\times(\He+\He)$ is conditionally $0$-neutral at $(S,k)$.
It follows that
\[\Vdo_k(S, Z, \Me) \le \Vdo_k(S, Z, \widetilde{\Me}) \le \Vup_k(S, Z, \widetilde{\Me}) \le \Vup_k(S, Z, {\Me}),\]
where the innermost inequality follows from Theorem \ref{havingAnIntervalTheorem}.
\end{proof}
\begin{remark} \label{hiding0NeutralityOfTheSum}
A more basic result is concealed in the proof of the last corollary,  indeed,  under those hypothesis $\Se\times(\He+\He)$ is conditionally $0$-neutral.
\end{remark}
In order to obtain sufficient conditions implying that a market $\mathcal{M}$ is arbitrage-free, it is conceptually clearer to work with the notion of {\it local arbitrage}. That concept represents the situation when we know a trajectory and an instance where an arbitrage opportunity will arise. It also assumes the existence of a portfolio that takes advantage of the arbitrage opportunity.

\emph{ A discrete market model $\Me = \Se \times \He$ is said to have a \emph{local arbitrage} if there exist
$S \in \Se$, $H \in \He$ and $j\ge 0$ satisfying:
\begin{equation} \label{localArbitrage}
~\inf_{\tilde{S} \in \Se_{(S,j)}}~~ [H_j(S) ~\Delta_j\tilde{S}] \geq   0,\qquad \mbox{and} \quad\qquad
~\sup_{\tilde{S} \in \Se_{(S,j)}}~~ [H_j(S) ~\Delta_j\tilde{S}] >   0.
\end{equation}}
\noindent The logical negation of the conditions in (\ref{localArbitrage}) will give local sufficient conditions leading to (global) no arbitrage results:

\emph{ A discrete market  $\Me$ is said to be \emph{free of local arbitrage} if it has no local arbitrage at any node $(S,H,j)$, that is, the following holds at any node $(S, H,j)$:
\begin{equation}\label{investPositiveAndOnlyUpII}
~\inf_{\tilde{S} \in \Se_{(S,j)}}~~ [H_j(S) ~\Delta_j\tilde{S}] <  0,
\end{equation}
or
\begin{equation}\label{investNegativeAndOnlyDownII}
~\sup_{\tilde{S} \in \Se_{(S,j)}}~~ [H_j(S) ~\Delta_j\tilde{S}] \le   0.
\end{equation}
}

\begin{remark}
Notice that above we refer to $\mathcal{M}$ as being free of local arbitrage, this is in contrast to saying, in Definition \ref{localDefinitions}, that $\mathcal{S}$ is locally arbitrage-free.
The obvious relationship is spelled out in Corollary \ref{upDownCorollary} below.
\end{remark}
The above conditions and the requirement of $N_H$ being initially bounded ensure the existence of $0$-contrarian trajectories with respect to a given $H$; that is,  we have the following result.
\begin{theorem}\label{arbitrageFreeCondition} Consider a discrete market $\Me=\Se\times \He$ free of local arbitrage. If for any $H\in \He$, $N_H$ is an initially bounded stopping time, then $\Me$ is arbitrage-free.
\end{theorem}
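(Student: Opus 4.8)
The plan is to use the characterization of arbitrage-freeness given in Proposition \ref{arbitrageFreeCharacterization}, so the goal reduces to the following: for each $H \in \He$, either there exists $S^0 \in \Se$ that is $0$-contrarian beyond $n=0$ (i.e.\ $\sum_{i=0}^{N_H(S^0)-1} H_i(S^0)\,\Delta_i S^0 < 0$), or the portfolio gain is identically zero along every trajectory. Fix an arbitrary $H \in \He$. The local hypothesis (freedom from local arbitrage) says that at every node $(S,H,j)$ at least one of (\ref{investPositiveAndOnlyUpII}) or (\ref{investNegativeAndOnlyDownII}) holds; the global hypotheses say $N_H$ is an initially bounded stopping time. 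The strategy is to build, node by node, a single trajectory $S^0$ along which the accumulated portfolio gain never strictly increases, and to show that either it strictly decreases at some step (giving $0$-contrarian) or it stays flat throughout (giving the zero-gain alternative).

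\medskip
\noindent\textbf{Key steps.} First I would construct the contrarian trajectory inductively. Starting from $S^0$ agreeing with a base trajectory up to index $0$, at each node $(S^0,j)$ with $j < N_H(S^0)$ I invoke freedom from local arbitrage. If (\ref{investPositiveAndOnlyUpII}) holds there is a continuation $\tilde S \in \Se_{(S^0,j)}$ with $H_j(S^0)\,\Delta_j\tilde S < 0$, so I select that continuation and the running gain strictly decreases; this immediately yields a $0$-contrarian trajectory once I stop the construction at $N_H$, which is finite because $N_H$ is a stopping time. If instead only (\ref{investNegativeAndOnlyDownII}) holds at that node, then $\sup_{\tilde S}[H_j(S^0)\Delta_j\tilde S]\le 0$, so \emph{every} continuation gives a nonpositive increment; I pick any continuation and the running gain does not increase. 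The second step is to control termination: because $N_H$ is initially bounded, I can appeal to the auxiliary construction in Lemma \ref{contrarianTrajectoryConstruction} (the same machinery already used to prove Theorem \ref{0-neutral}) to splice the locally chosen continuations into a single trajectory $S^0 \in \Se$ for which $N_H(S^0)$ is finite and the total gain $\sum_{i=0}^{N_H(S^0)-1} H_i(S^0)\,\Delta_i S^0 \le 0$.

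\medskip
\noindent\textbf{The dichotomy.} The final step is to sort the resulting trajectory into the two alternatives of Proposition \ref{arbitrageFreeCharacterization}. If at some node along $S^0$ alternative (\ref{investPositiveAndOnlyUpII}) was used, the total gain is strictly negative, so $S^0$ is $0$-contrarian beyond $0$ and the first alternative holds. Otherwise alternative (\ref{investNegativeAndOnlyDownII}) holds at \emph{every} reachable node for this $H$; I then argue that the gain must be identically zero along \emph{all} trajectories. Indeed if some $S\in\Se$ had strictly positive total gain, there would be a first node where the partial sum turns strictly positive, forcing $\sup_{\tilde S}[H_j(S)\Delta_j\tilde S]>0$ there, i.e.\ a local arbitrage, contradicting the standing hypothesis together with the nonpositivity already extracted. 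Hence $\sum_{i=0}^{N_H(S)-1} H_i(S)\,\Delta_i S = 0$ for all $S$, which is the second alternative. Since $H$ was arbitrary, Proposition \ref{arbitrageFreeCharacterization} yields that $\Me$ is arbitrage-free.

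\medskip
\noindent\textbf{Main obstacle.} The delicate point is the termination/splicing in the second step: freedom from local arbitrage is a statement about individual nodes, whereas $0$-contrarian is a statement about a finite-horizon accumulated sum, so one must ensure that the node-by-node selection assembles into a \emph{single admissible} trajectory of $\Se$ that actually stops (i.e.\ reaches $N_H$) rather than drifting off to infinity. This is exactly where the \emph{initially bounded stopping time} assumption does the work, and where I expect to lean most heavily on Lemma \ref{contrarianTrajectoryConstruction}; getting the bookkeeping right so that the chosen continuations remain consistent with the stopping time $N_H$ (so that the relevant partial sums are over the correct, finite range) is the crux of the argument.
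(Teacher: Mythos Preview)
Your overall architecture matches the paper's: reduce via Proposition~\ref{arbitrageFreeCharacterization}, build a trajectory node by node using freedom from local arbitrage, and invoke Lemma~\ref{contrarianTrajectoryConstruction} together with the initially bounded stopping time hypothesis to terminate. Where you diverge is in how you set up the dichotomy, and this is where a genuine gap appears.

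Your ``otherwise'' branch assumes that if alternative (\ref{investPositiveAndOnlyUpII}) was never triggered \emph{along the specific trajectory $S^0$ you constructed}, then the gain is identically zero along \emph{every} trajectory. That inference is not justified: the construction only probes nodes of the form $(S^0,j)$, and at each such node you chose one continuation. A different trajectory $S$ may branch away from $S^0$ at some step and reach nodes you never examined; at such a node (\ref{investPositiveAndOnlyUpII}) could well hold. Your attempted rescue --- ``if some $S$ had strictly positive gain, at the first node where the partial sum turns positive we get $\sup>0$, i.e.\ a local arbitrage'' --- is incorrect: $\sup_{\tilde S}[H_j(S)\Delta_j\tilde S]>0$ alone is \emph{not} a local arbitrage; you also need $\inf\ge 0$, which you have not established. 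So no contradiction follows, and the ``nonpositivity already extracted'' refers only to $S^0$, not to the node $(S,j)$ in question.

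The paper avoids this by setting up the dichotomy \emph{globally before} constructing anything, via Lemma~\ref{arbitrageTrajectory}: for each $S$, either the gain vanishes, or there is a \emph{first} index $\nu(S)$ at which (\ref{investPositiveAndOnlyUpII}) holds; moreover, before $\nu(S)$ every increment $H_j(\tilde S)\Delta_j\tilde S$ vanishes for all $\tilde S\in\Se_{(S,j)}$ (item~(1) of that lemma). If no $S^*$ has $\nu(S^*)<N_H(S^*)$, the second alternative of Proposition~\ref{arbitrageFreeCharacterization} follows for every $S$. If such an $S^*$ exists, the paper starts the contrarian construction \emph{from that} $S^*$ (Lemma~\ref{arbitrageTrajectory}(3)), guaranteeing a strictly negative partial sum from step $\nu(S^*)+1$ onward, and then applies Lemma~\ref{contrarianTrajectoryConstruction}(b). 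The fix to your argument is exactly this: move the case split to the global level (existence of any node $(S,j)$, $j<N_H(S)$, where (\ref{investPositiveAndOnlyUpII}) holds), and begin the node-by-node construction at such a node rather than at index $0$ along an arbitrary path.
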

\begin{proof}
Fix $H\in \He$, $\hat{S}\in \Se$ and $k\ge 0$.

First observe that by Lemma \ref{arbitrageTrajectory} item $(2)$ in Appendix \ref{contrarianAuxiliaryMaterial}, for any $S \in \Se_{(\hat{S}, k)}$ either
$\sum_{i=k}^{N_H(S)-1} H_{i}(S) \Delta_iS= 0$, or there exists a smallest integer $\nu(S) \ge k$ such that (\ref{investPositiveAndOnlyUpII}) holds for ($S$,H,$\nu(S)$). Consequently, under the latter scenario
\[H_j(\tilde{S})\Delta_j\tilde{S}=0,\quad  \mbox{for} \quad k\le j < \nu(S),~~ \tilde{S}\in\Se_{(S,j)}.\]

If, for some  $S\in\Se_{(\hat{S},k)}$, (\ref{investPositiveAndOnlyUpII}) does not hold or in case (\ref{investPositiveAndOnlyUpII}) holds and $N_H(S)\le \nu(S)$ is verified,
then the second condition in Proposition \ref{arbitrageFreeCharacterization} is satisfied.
Therefore, we may assume a case in which (\ref{investPositiveAndOnlyUpII}) is valid  for some node $(S^*,H,\nu(S^*))$
with $S^*\in\Se_{(\hat{S},k)}$ and $N_H(S^*)\le \nu(S^*)$.
Applying Lemma \ref{arbitrageTrajectory}, item (\ref{3}), with $S^*$ as $S^k$, we obtain a sequence of trajectories $\left(S^m\right)_{m\ge k}$, verifying
\[
S^m = S^* \quad \mbox{for}\quad k\le m \le \nu,\quad S^{m}\in \Se_{(S^{m-1},m-1)}\quad \mbox{for}\quad m > \nu,
\]
\[
\sum_{i=k}^{\nu-1} H_{i}(S^m) \Delta_iS^{m}=0,\qquad \mbox{and }\qquad \sum_{i=k}^{n-1} H_{i}(S^m) \Delta_iS^{m}< 0, \quad \mbox{for}\quad m\ge n > \nu.
\]
Since $S^{\nu+1}\in \Se_{(S^{\nu},\nu)}= \Se_{(S^*,\nu)}$, then $N_H(S^{\nu+1}) \ge \nu +1$. So by Lemma \ref{contrarianTrajectoryConstruction} $(b)$ with $\kappa \equiv \nu+1$, there exist a trajectory $S^0$ such that $S^0$ and $H$ are $0$-contarian beyond $k$ w.r.t. $S^*$ and also to $\hat{S}$, since $\Se_{(\hat{S},k)}=\Se_{(S^*,k)}$ which, by Proposition \ref{arbitrageFreeCharacterization}, concludes the proof.
\end{proof}
Observe that the proven result is actually more general than the result stated in the theorem, it proves that the market is arbitrage-free in a conditional sense, i.e.  at each node $(\hat{S},k)$. We have not needed to formally pursue this conditional notion in the paper.

\vspace{.1in} The following result provides sufficient
conditions, involving the local arbitrage-free property of $\mathcal{S}$, leading to arbitrage-free markets.

\begin{corollary}  \label{upDownCorollary} Consider a trajectory space $\mathcal{S}$ that
is locally arbitrage-free (as per Definition \ref{localDefinitions}) and $\He$ a portfolio set. Then, $\Me = \Se \times \He$
is free of local arbitrage (as per equations (\ref{investPositiveAndOnlyUpII}) and (\ref{investNegativeAndOnlyDownII})). Moreover, if $N_H$ is an initially bounded stopping time for each $H \in \He$, $\Me$ is arbitrage-free as well.
\end{corollary}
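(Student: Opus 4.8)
The plan is to split the two assertions. The first---that local arbitrage-freeness of $\Se$ forces $\Me=\Se\times\He$ to be free of local arbitrage---is a node-by-node verification, while the second is an immediate appeal to Theorem \ref{arbitrageFreeCondition}.

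For the first claim I would fix an arbitrary node $(S,H,j)$ and set $h\equiv H_j(S)$. Since $H$ is non-anticipative and every $\tilde{S}\in\Se_{(S,j)}$ agrees with $S$ up to index $j$, the value $H_j(\tilde{S})$ is constant and equal to $h$ on $\Se_{(S,j)}$; moreover $\Delta_j\tilde{S}=\tilde{S}_{j+1}-S_j$ there, so $H_j(S)\,\Delta_j\tilde{S}=h\,(\tilde{S}_{j+1}-S_j)$. By the local arbitrage-free hypothesis the node $(S,j)$ is either flat \eqref{flatNode} or up-down \eqref{upDownProperty}, and I would run the verification by cases. If the node is flat, then $\tilde{S}_{j+1}-S_j=0$ for all $\tilde{S}$, hence the product is identically $0$ and $\sup_{\tilde{S}}[h\,\Delta_j\tilde{S}]=0\le 0$, which is \eqref{investNegativeAndOnlyDownII}. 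If the node is up-down, I split on the sign of $h$: when $h=0$ the product is again identically $0$ and \eqref{investNegativeAndOnlyDownII} holds; when $h>0$ the $\inf<0$ half of \eqref{upDownProperty} gives a $\tilde{S}$ with $\tilde{S}_{j+1}-S_j<0$, so $h\,(\tilde{S}_{j+1}-S_j)<0$ and therefore $\inf_{\tilde{S}}[h\,\Delta_j\tilde{S}]<0$, which is \eqref{investPositiveAndOnlyUpII}; when $h<0$ the symmetric argument uses the $\sup>0$ half of \eqref{upDownProperty} to produce a strictly negative product, and again \eqref{investPositiveAndOnlyUpII} holds. In every case one of \eqref{investPositiveAndOnlyUpII}, \eqref{investNegativeAndOnlyDownII} holds at $(S,H,j)$; since the node was arbitrary, $\Me$ is free of local arbitrage.

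For the second assertion, once freeness of local arbitrage is established and $N_H$ is an initially bounded stopping time for each $H\in\He$, the hypotheses of Theorem \ref{arbitrageFreeCondition} are met verbatim, so $\Me$ is arbitrage-free, completing the proof.

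I do not anticipate a genuine obstacle here: the content is the bookkeeping that turns the geometric up/down dichotomy on price increments into the sign dichotomy on the signed position $h\,\Delta_j\tilde{S}$. The only point requiring care is the appeal to non-anticipativity to justify that $h=H_j(S)$ is well defined as a constant on $\Se_{(S,j)}$, so that the suprema and infima in \eqref{investPositiveAndOnlyUpII}--\eqref{investNegativeAndOnlyDownII} factor as $h$ times the corresponding suprema/infima of $\tilde{S}_{j+1}-S_j$, with the sign of $h$ swapping $\sup$ and $\inf$ when $h<0$.
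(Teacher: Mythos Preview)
Your proposal is correct and follows essentially the same route as the paper: a node-by-node case split on whether $H_j(S)=0$ and on whether the node is flat or up-down, then a direct appeal to Theorem~\ref{arbitrageFreeCondition}. Your version is simply more explicit than the paper's terse argument; the remark about non-anticipativity is harmless but not strictly needed, since conditions \eqref{investPositiveAndOnlyUpII}--\eqref{investNegativeAndOnlyDownII} already fix $H_j(S)$ rather than $H_j(\tilde{S})$.
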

\begin{proof} Fix $H\in\He$, $S\in\Se$ and $j\ge 0$. If $H_j(S)=0$, (\ref{investNegativeAndOnlyDownII}) is clearly
verified. While if $H_j(S)\neq 0$, then (\ref{investPositiveAndOnlyUpII}) or (\ref{investNegativeAndOnlyDownII}) are satisfied, whenever either (\ref{upDownProperty}) or (\ref{flatNode}) are valid. 
 The last assertion then follows from Theorem \ref{arbitrageFreeCondition}.
\end{proof}

\subsection{Debt Limited Portfolios}
Here we introduce a second set of financially motivated hypothesis, of a general nature,  that, when combined with
the local $0$-neutral (local arbitrage-free) assumption on $\mathcal{S}$, provide conditionally $0$-neutral (arbitrage-free) markets $\mathcal{M}$.
In fact, the following theorem shows that for all practical financial purposes, as long as the number of arbitrage and flat nodes are bounded along each trajectory, the assumption of existence of contrarian trajectories is always satisfied.
The results rely on limiting the capital that a portfolio owner may be able to borrow; this condition is usually used to exclude  arbitrage opportunities created by doubling strategies (\cite{bender2}). The setting allows for unbounded $N_H$.

\vspace{.1in}
The next theorem provides another natural and general setting, besides the one given in Theorem \ref{0-neutral}, ensuring that a discrete market is conditionally $0$-neutral.

\begin{theorem}  \label{limitedCreditStopsThePortfolio}
Given a market $\Me=\Se\times\He$, $S \in \mathcal{S}$ and $n \geq 0$. Assume each node $(S',j)$, $S' \in \Se_{(S, n)}$ with $j\ge n$, is $0$-neutral.
We further  assume:

\begin{enumerate}
\item The number of arbitrage $0$-neutral  and flat nodes (as per Definition \ref{localDefinitions}) allowed in each trajectory
is bounded by an absolute constant $\hat{m}$.

Also, for  $H \in \mathcal{H}$:

\item For given $A= A(H) \geq 0$, a constant  independent of $S$ and $k$, we have:
\begin{equation}  \label{lowerBoundForPortfolioValue}
V_0 + \sum_{i=0}^{k-1} H_i(S') \Delta_i S' \geq - A,~~~~0 \leq k \leq N_H(S'),~~~~\forall S' \in \Se_{(S, n)}.
\end{equation}

\item For given $\delta$, an absolute constant, we have:
\begin{equation}  \label{someDiscretness}
\mbox{if}~~[H_i(S') ~\Delta_i S'] \neq 0~~\mbox{then}~~|H_i(S')~\Delta_i S'| \geq \delta >0,~~~~~~\forall S' \in \Se_{(S, n)},~~~~~i \geq n.
\end{equation}
\end{enumerate}
Then,  for any  $\epsilon > 0$, there exists  $S^{n, \epsilon} \in \Se_{(S,n)}$ so that $H$ and $S^{n, \epsilon}$ are $\epsilon$-contrarian beyond $n$.
In particular, if  hypothesis $(2)$ and $(3)$ above are satisfied for all $H \in \He$,
$\Me$ is conditionally $0$-neutral at $(S, n)$.
\end{theorem}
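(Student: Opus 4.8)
The plan is to verify the criterion of Proposition \ref{0-neutralCharacterization}: it suffices to exhibit, for each fixed $H\in\He$ and $\epsilon>0$, a trajectory $S^{n,\epsilon}\in\Se_{(S,n)}$ that is $\epsilon$-contrarian beyond $n$, i.e. with $\sum_{i=n}^{N_H(S^{n,\epsilon})-1}H_i(S^{n,\epsilon})\,\Delta_iS^{n,\epsilon}<\epsilon$. Fix $H$ and $\epsilon$. The whole construction rests on a single local claim: at every node $(S',j)$ with $S'\in\Se_{(S,n)}$, $j\ge n$ and $H_j(S')\ne 0$, there is an extension $\tilde S\in\Se_{(S',j)}$ whose one-step contribution $H_j(S')\,\Delta_j\tilde S$ is \emph{non-positive}, and is in fact $\le-\delta$ whenever the node admits a strictly contrarian move.

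I would prove this claim uniformly from $0$-neutrality of $(S',j)$ together with the discreteness hypothesis (\ref{someDiscretness}). Since $(S',j)$ is $0$-neutral, $\sup_{\tilde S}\Delta_j\tilde S\ge 0$ and $\inf_{\tilde S}\Delta_j\tilde S\le 0$, so regardless of the sign of $H_j(S')$ one has $\inf_{\tilde S}\bigl(H_j(S')\,\Delta_j\tilde S\bigr)\le 0$. Because (\ref{someDiscretness}) confines the achievable contributions to $\{0\}\cup(-\infty,-\delta]\cup[\delta,\infty)$ while this infimum is $\le 0$, either some achievable contribution is $\le-\delta$, or else $0$ itself must be achievable (otherwise every contribution would exceed $\delta$, contradicting the infimum being $\le 0$). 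In both cases a non-positive contribution is realized by an actual $\tilde S$. At an up--down node $\inf_{\tilde S}\Delta_j\tilde S<0$, so the realized value is strictly negative, hence $\le-\delta$; at a flat node it is identically $0$. This is exactly where discreteness is indispensable: it promotes the merely ``almost contrarian'' steps furnished by local $0$-neutrality (Lemma \ref{fugitiveTrajectory}) into honest non-positive steps, and in particular removes the need to pay an $\epsilon/2^i$ toll at each stage.

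With the local claim I would build $S^{n,\epsilon}$ greedily from time $n$ onward, choosing at each active node ($H_j\ne 0$) an extension with contribution $\le 0$ and an arbitrary admissible extension when $H_j=0$. Every partial sum $\sum_{i=n}^{k}H_i\Delta_iS^{n,\epsilon}$ is then non-increasing, so in particular $\sum_{i=n}^{N_H-1}H_i\Delta_iS^{n,\epsilon}\le 0<\epsilon$, which is the required $\epsilon$-contrarian (indeed $0$-contrarian) property. The financial hypotheses are what make this greedy scheme realizable inside $\Se$ as a single trajectory. Each strictly negative step lowers the portfolio value $V_0+\sum_{i=0}^{k-1}H_i\Delta_iS'$ by at least $\delta$, and (\ref{lowerBoundForPortfolioValue}) bounds that value below by $-A$; hence at most $\bigl(V_0+\sum_{i=0}^{n-1}H_i\Delta_iS+A\bigr)/\delta$ such steps can occur. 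By hypothesis $(1)$ at most $\hat m$ nodes of the trajectory are flat or arbitrage nodes, while at every other (up--down) node the local claim yields $\le-\delta$ as soon as $H_j\ne0$. Therefore only finitely many nodes can carry a nonzero contribution, and the construction stabilizes.

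The main obstacle is precisely this stabilization and assembly step, which I would carry out with the appendix machinery exactly as in the proof of Theorem \ref{0-neutral}: Lemma \ref{fugitiveTrajectory} produces a sequence $\{S^m\}\subset\Se$ with $S^m\in\Se_{(S^{m-1},m-1)}$ and controlled partial sums, and Lemma \ref{contrarianTrajectoryConstruction} splices these finite choices into one admissible trajectory once only finitely many constrained moves remain. The delicate point to justify is that the debt bound (\ref{lowerBoundForPortfolioValue}) together with the discreteness (\ref{someDiscretness}) really does force the portfolio to stop generating strictly negative steps before the floor $-A$ is breached, so that the infinitely many local choices collapse to a genuine element of $\Se$ with finite $N_H$; the bound $\hat m$ on flat and arbitrage nodes is what prevents the degenerate zero-contribution steps from obstructing this collapse. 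Granting this for every $H\in\He$ under hypotheses $(2)$ and $(3)$, Proposition \ref{0-neutralCharacterization} delivers that $\Me$ is conditionally $0$-neutral at $(S,n)$.
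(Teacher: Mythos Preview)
Your local claim and the greedy construction are correct and coincide with the paper's approach: at each node with $H_j\neq 0$, local $0$-neutrality together with the discreteness hypothesis (\ref{someDiscretness}) lets you pick an extension whose one-step contribution is $\le 0$, and in fact $\le -\delta$ at every up--down node. The counting argument bounding the number of strictly negative steps by $(V_0+\sum_{i=0}^{n-1}H_i\Delta_i S + A)/\delta$, with at most $\hat m$ additional zero-contribution steps coming from arbitrage or flat nodes, is also the right idea.

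The gap is in how you propose to close the argument. You say you would finish ``exactly as in the proof of Theorem~\ref{0-neutral}'' by invoking Lemma~\ref{contrarianTrajectoryConstruction}, but that lemma has as a standing hypothesis that $N_H$ is \emph{initially bounded}. Theorem~\ref{limitedCreditStopsThePortfolio} does not assume this; indeed the paper introduces it precisely as an alternative to the initially-bounded route of Theorem~\ref{0-neutral}, remarking just before the statement that ``the setting allows for unbounded $N_H$''. So Lemma~\ref{contrarianTrajectoryConstruction} is unavailable here, and the termination cannot be outsourced to it. What replaces it is the direct argument you describe in words but do not carry out: the debt bound (\ref{lowerBoundForPortfolioValue}) is only asserted for $0\le k\le N_H(S')$. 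Along the greedily constructed trajectory $S^{n,m}$, once enough $\le -\delta$ steps have been taken the running value $V_0+\sum_{i=0}^{n_m}H_i(S^{n,m})\Delta_i S^{n,m}$ drops below $-A$; compatibility with (\ref{lowerBoundForPortfolioValue}) then forces $n_m\ge N_H(S^{n,m})>n$. That very $S^{n,m}\in\Se$ is the required $\epsilon$-contrarian (in fact $0$-contrarian) trajectory. No ``splicing'' is needed, because each $S^{n,m}$ is already a bona fide element of $\Se$; the debt bound itself supplies the missing termination that, in Theorem~\ref{0-neutral}, came from initial boundedness.
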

Item $(1)$ only allows a constant maximum $\hat{m}$ of arbitrage $0$-neutral and flat  nodes along each trajectory but,  those nodes, are allowed to be arbitrarily distributed along such trajectory.
\begin{proof}
It is enough to consider the case  $n < N_H(S')$ for any $S' \in \Se_{(S,n)}$, we will establish the existence of $S^{*} \in \Se_{(S,n)}$
such that $- \sum_{i=n}^{N_H(S^{*})-1} H_i(S^{*}) \Delta_i S^{*} \geq 0$, this will conclude the proof.

Let $n_0$ be the smallest integer satisfying  $n\le n_0 < N_H(S)$ and $H_{n_0}(S) \neq 0$. If such $n_0$ does not exist we take $S^{*} \equiv S$.
There are two possibilities: a) $(S, n_0)$ is an arbitrage $0$-neutral node, b) $(S, n_0)$ is an up-down node. In case $a)$, it follows from (\ref{someDiscretness}) that there exists $S^{n, 0} \in \Se_{(S,n_0)}\subset \Se_{(S,n)}$ satisfying $\Delta_{n_0} S^{n, 0} =0$ hence
\begin{equation}  \label{firstStepa}
-H_{n_0}(S^{n, 0}) \Delta_{n_0} S^{n, 0} \geq 0.
\end{equation}
In case $b)$, from the up down property, there exists $S^{n, 0} \in \Se_{(S,n)}$
such that
\begin{equation}  \label{firstStep}
-H_{n_0}(S^{n, 0}) \Delta_{n_0} S^{n, 0} \geq \delta.
\end{equation}
If $N_H(S^{n, 0}) \le n_0+1$, since $H_i(S)=0,~~~~ n\le i <n_0$,  then (\ref{firstStepa}) or (\ref{firstStep}) show that $S^{*} \equiv S^{n, 0}$ satisfies the conditions of a contrarian trajectory we are looking for. So, assume  $n_0+1 < N_H(S^{n, 0})$.



\noindent Proceeding recursively, we may assume that we have either constructed the desired trajectory or  we have at our disposal a trajectory $S^{n,k} \in \Se_{(S^{n, k-1}, n_k)}$,
satisfying  \[-H_{n_k}(S^{n,k}) \Delta_{n_k} S^{n,k} \geq 0\quad \mbox{or}\quad -H_{n_k}(S^{n,k}) \Delta_{n_k} S^{n,k} \geq \delta\]
as well as
\[-H_{i}(S^{n, k}) \Delta_{i} S^{n, k} =0 \quad \mbox{for} \quad n_{k-1} < i < n_k.\] We then look for the smallest
$n_{k+1}$ satisfying $n_k < n_{k+1} <
N_H(S^{n,k})$ and  $H_{n_{k+1}}(S^{n,k}) \neq 0$. If such $n_{k+1}$ does not exist the construction terminates by taking $S^{*} \equiv S^{n,k}$ and so concluding the proof. Otherwise, there exists $S^{n, k+1} \in \mathcal{S}_{(S^{n, k}, n_{k+1})}$, and by means of the alternatives $a)$ and $b)$, and other considerations above, we obtain that the following holds:
\begin{equation}  \nonumber
-H_{n_{k+1}}(S^{n, k+1}) \Delta_{n_{k+1}} S^{n, k+1} \geq  0
\quad \mbox{or}\quad -H_{n_{k+1}}(S^{n, k+1}) \Delta_{n_{k+1}} S^{n, k+1} \geq \delta,
\end{equation}
as well as
\begin{equation} \nonumber
-H_{i}(S^{n, k}) \Delta_{i} S^{n, k+1} =0 \quad \mbox{for} \quad n_k < i < n_{k+1}.
\end{equation}

Continuing in this way, we have the following exclusive alternatives:
$i)$ we managed to  construct the desired trajectory and, hence, the recursion terminates. $ii)$  The recursion continues indefinitely, in which case we have:
\begin{equation}  \label{diverges}
-\sum_{k=0}^{m} H_{n_k}(S^{n, k}) \Delta_{n_k} S^{n, k}  =  -\sum_{i= n}^{n_{m}} H_{i}(S^{n, m}) \Delta_{i} S^{n, m} = -\sum_{i= n_0}^{n_{m}} H_{i}(S^{n, m}) \Delta_{i} S^{n, m} \geq  [m +1 - \hat{m})] \delta,~~~~\forall m > \hat{m},
\end{equation}
where we used the fact that $H_{n_k}(S^{n,k}) \Delta_{n_k}S^{n,k} = H_{n_k}(S^{n,m}) \Delta_{n_k}S^{n,m}$
for $0\leq k \leq m$.

Let us show that (\ref{diverges})
conflicts with (\ref{lowerBoundForPortfolioValue}):
(recall $H_i(S^{n,m})=H_i(S)=0,~~~~ n\le i <n_0$)
\begin{equation} \nonumber
 V_0 +  \sum_{i=0}^{n_{m}}H_i(S^{n, m}) \Delta_i S^{n, m}
=  V_0 + \sum_{i=0}^{n-1} H_i(S) \Delta_i S+ \sum_{i=n}^{n_{m}}H_i(S^{n, m}) \Delta_i S^{n, m}
\end{equation}
\begin{equation} \label{usedAllTheCreditAndMore}
 \leq V_0 + \sum_{i=0}^{n-1} H_i(S) \Delta_i S - [m   +1-\hat{m})]~\delta  <  -A,
\end{equation}
where we obtained  the last inequality by taking $m$ sufficiently large; let us denote the smallest integer satisfying (\ref{usedAllTheCreditAndMore}) by $m^{\ast}$.
This argument just proves that we can not have  $ n_{m^{\ast}} \leq N_H(S^{n,m^{\ast}})$ as otherwise we have a contradiction with (\ref{lowerBoundForPortfolioValue}); it then follows that $n_{m^{\ast}}  > N_H(S^{n,m^{\ast}}) > n$.
To sum up: $-H_i(S^{n, m^{\ast}}) \Delta_i S^{n, m^{\ast}} \geq 0$
for all $n \leq i < N_H(S^{n, m^{\ast}})$, $S^{n, m^{\ast}} \in \mathcal{S}_{(S,n)}$ 
hence $S^{*} \equiv S^{n, m^{\ast}}$ is a contrarian trajectory that extends $S$ beyond $n$. The conditionally $0$-neutral property then follows from Proposition \ref{0-neutralCharacterization}.
\end{proof}
\begin{remark} Notice that we have established more than is required in Definition \ref{contrarianTrajectoryDefinition} as each term in (\ref{contrarianTrajectory}) has proven to be non-negative.
The hypothesis (\ref{someDiscretness}) is only needed to extract the needed information from
the up-down nodes, the fact that that hypothesis is also used for
the arbitrage $0$-neutral and flat nodes is not essential.
\end{remark}

An study of the  proof of Theorem \ref{limitedCreditStopsThePortfolio} in conjunction with Proposition \ref{arbitrageFreeCharacterization} gives the following corollary.
\begin{corollary}  \label{secondResultOnNoArbitrage}
Assume the same hypothesis as in Theorem \ref{limitedCreditStopsThePortfolio} and, furthermore, require
$\hat{m}=0$. Then, $\mathcal{M}= \mathcal{S} \times \mathcal{H}$
is arbitrage-free.
\end{corollary}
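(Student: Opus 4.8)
The plan is to verify the characterization of arbitrage-freeness supplied by Proposition \ref{arbitrageFreeCharacterization}: it suffices to show that for every $H \in \He$, either $\sum_{i=0}^{N_H(S)-1} H_i(S)\,\Delta_i S = 0$ for all $S \in \Se$, or there exists $S^0 \in \Se$ that is $0$-contrarian beyond $n=0$, i.e. $\sum_{i=0}^{N_H(S^0)-1} H_i(S^0)\,\Delta_i S^0 < 0$. The imposition $\hat{m}=0$ is the crucial upgrade over Theorem \ref{limitedCreditStopsThePortfolio}: since every node is $0$-neutral by hypothesis (applied at $n=0$, where $\Se_{(S,0)}=\Se$) and none may be an arbitrage $0$-neutral node or a flat node, every node is necessarily an up-down node in the sense of (\ref{upDownProperty}).

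First I would fix $H \in \He$ and dispose of the trivial alternative: if $\sum_{i=0}^{N_H(S)-1} H_i(S)\,\Delta_i S = 0$ for every $S \in \Se$, the second clause of Proposition \ref{arbitrageFreeCharacterization} holds outright and $H$ is not an arbitrage. Otherwise there is a trajectory $S^{\sharp}$ along which the cumulative gain is nonzero, whence $H_{i}(S^{\sharp}) \neq 0$ for some $i < N_H(S^{\sharp})$; in particular the starting index $n_0$ of the recursion in the proof of Theorem \ref{limitedCreditStopsThePortfolio} exists when that construction is run from $S^{\sharp}$ with $n=0$ (recall $\Se_{(S^{\sharp},0)}=\Se$, since all trajectories share $S_0=s_0$).

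Next I would re-run precisely that recursive construction, observing that with $\hat{m}=0$ only alternative b)---the up-down case---can occur at each active index $n_k$. Hence every active step produces a \emph{strict} gain $-H_{n_k}(S^{0,k})\,\Delta_{n_k} S^{0,k} \geq \delta$ via hypothesis (\ref{someDiscretness}), rather than the merely non-negative contribution that arbitrage or flat nodes would permit. The credit bound (\ref{lowerBoundForPortfolioValue}) forces the recursion to terminate, exactly as in Theorem \ref{limitedCreditStopsThePortfolio}, with a trajectory $S^0 \in \Se$ satisfying $-H_i(S^0)\,\Delta_i S^0 \geq 0$ at every $i$ (the non-negativity noted in the Remark following that theorem, with the gain at each active step preserved across the construction). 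Since at least the $n_0$ step contributes $\geq \delta$, summation yields $\sum_{i=0}^{N_H(S^0)-1} H_i(S^0)\,\Delta_i S^0 \leq -\delta < 0$, so $S^0$ is genuinely $0$-contrarian beyond $0$; thus the first clause of Proposition \ref{arbitrageFreeCharacterization} holds, and $\Me$ is arbitrage-free.

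The only delicate point---and precisely where $\hat{m}=0$ is indispensable---is upgrading the non-strict conclusion $\sum \leq 0$ of Theorem \ref{limitedCreditStopsThePortfolio} to the strict $\sum < 0$ demanded by $0$-contrarianness. This rests entirely on the fact that, once arbitrage $0$-neutral and flat nodes are banned, each nonzero holding must be exercised at an up-down node and so contributes a uniform strictly positive amount $\delta$ to $-\sum$. I expect no further obstacle, since no stopping-time assumption on $N_H$ is required by Proposition \ref{arbitrageFreeCharacterization}, and the termination and non-negativity facts may be imported verbatim from the proof of Theorem \ref{limitedCreditStopsThePortfolio}.
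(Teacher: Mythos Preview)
Your proposal is correct and takes essentially the same approach as the paper, whose own proof consists solely of the instruction to study the proof of Theorem \ref{limitedCreditStopsThePortfolio} together with Proposition \ref{arbitrageFreeCharacterization}. You have carried out exactly that study, correctly noting that $\hat{m}=0$ forces every node to be up-down so that each active step of the recursion yields a strict $\delta$-gain, thereby upgrading the non-strict conclusion of Theorem \ref{limitedCreditStopsThePortfolio} to the $0$-contrarian inequality required by Proposition \ref{arbitrageFreeCharacterization}.
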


\vspace{.1in}
The following corollary provides existence of the pricing interval
in the setting of Theorem \ref{limitedCreditStopsThePortfolio}, we borrow all assumptions from that theorem but need to strengthen (\ref{someDiscretness}) so that the addition of portfolios obeys that equation as well.

\begin{corollary} \label{localHavingAnIntervalII} Consider a discrete market $\Me = \Se\times \He$, a function $Z$  defined on $\Se$, $S \in \Se$ and $n\geq 0$ fixed. Assume that all hypothesis of Theorem \ref{limitedCreditStopsThePortfolio} are satisfied, and that either  $N_H$ is  a stopping time for all $H\in\He$ or all $H \in \mathcal{H}$ are liquidated. Moreover,
we strengthen (\ref{someDiscretness}) by assuming there are absolute constants $\delta_H >0$, $\delta_S >0$:
\begin{equation}  \label{someMoreDiscretness}
H_i(S') \in \{k ~\delta_H: k \in \mathbb{Z}\}~~~\mbox{and, whenever, }~~\Delta_i S' \neq 0~~\mbox{then}~~|\Delta_i S'| \geq \delta_S.
\end{equation}

Then:
\begin{equation} \nonumber 
\Vdo_n(S, Z, \Me) \le \Vup_n(S, Z, \Me).
\end{equation}
\end{corollary}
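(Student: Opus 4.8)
The plan is to follow the proof of Corollary \ref{localHavingAnInterval} verbatim in structure, replacing the appeal to Theorem \ref{0-neutral} by one to Theorem \ref{limitedCreditStopsThePortfolio}. Concretely, I would verify that the entire list of hypotheses of Theorem \ref{limitedCreditStopsThePortfolio} is inherited by the enlarged portfolio set $\He + \He$, deduce from that theorem that $\widetilde{\Me} = \Se \times (\He + \He)$ is conditionally $0$-neutral at $(S, n)$, and then invoke Theorem \ref{havingAnIntervalTheorem} to obtain the asserted inequality $\Vdo_n(S, Z, \Me) \le \Vup_n(S, Z, \Me)$.

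First I would dispose of the hypotheses not involving the portfolio. The assumption that every node $(S', j)$ with $S' \in \Se_{(S,n)}$ and $j \ge n$ is $0$-neutral, together with item $(1)$ bounding by $\hat{m}$ the number of arbitrage and flat nodes along each trajectory, are properties of $\Se$ alone and so are untouched when $\He$ is replaced by $\He + \He$. For the debt bound (\ref{lowerBoundForPortfolioValue}), if $H^1, H^2 \in \He$ satisfy it with constants $A_1, A_2$, then, writing the value of $H = H^1 + H^2$ as the sum of the two values under the addition of portfolios recalled before Lemma \ref{lemmaToHaveInterval} (with initial value $V_0 = V_0^1 + V_0^2$), the running value of $H$ is bounded below by $-(A_1 + A_2)$, so (\ref{lowerBoundForPortfolioValue}) holds for $H$ with $A = A_1 + A_2$.

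The crucial step, and the reason the strengthened hypothesis (\ref{someMoreDiscretness}) is imposed, is showing that the discreteness condition (\ref{someDiscretness}) survives addition. The raw product bound $|H_i(S')\,\Delta_i S'| \ge \delta$ is not stable under summing portfolios, since cancellation in the holdings could turn $H_i^1 + H_i^2$ into an arbitrarily small nonzero multiple of a step. The factored form in (\ref{someMoreDiscretness}) avoids this: the grid $\{k \delta_H : k \in \mathbb{Z}\}$ is closed under addition, so $H_i = H_i^1 + H_i^2$ again lies on it and hence $|H_i(S')| \ge \delta_H$ whenever $H_i(S') \ne 0$, while $|\Delta_i S'| \ge \delta_S$ is a property of $\Se$. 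Consequently, whenever $H_i(S')\,\Delta_i S' \ne 0$ both factors are nonzero and $|H_i(S')\,\Delta_i S'| \ge \delta_H \delta_S$, so $\He + \He$ satisfies (\ref{someDiscretness}) with $\delta = \delta_H \delta_S$; the same argument applied to iterated sums shows the property persists if needed.

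With all hypotheses of Theorem \ref{limitedCreditStopsThePortfolio} verified for $\He + \He$, that theorem gives that $\widetilde{\Me} = \Se \times (\He + \He)$ is conditionally $0$-neutral at $(S, n)$. Since the corollary assumes that either $N_H$ is a stopping time for every $H \in \He$ or all portfolios in $\He$ are liquidated, Theorem \ref{havingAnIntervalTheorem} applies directly and yields $\Vdo_n(S, Z, \Me) \le \Vup_n(S, Z, \Me)$. I expect the main obstacle to be precisely the additivity of the discreteness constraint discussed above, the gap that (\ref{someDiscretness}) alone cannot bridge and that (\ref{someMoreDiscretness}) is designed to close, while the debt-bound and node verifications are routine once the addition of portfolios from Appendix \ref{furtherPricingResults} is in hand.
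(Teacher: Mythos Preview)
Your proposal is correct and follows essentially the same route as the paper: verify that the hypotheses of Theorem \ref{limitedCreditStopsThePortfolio} pass to $\He+\He$ (node conditions are about $\Se$, the debt bound adds with $A(H^1+H^2)=A(H^1)+A(H^2)$, and the strengthened discreteness (\ref{someMoreDiscretness}) yields (\ref{someDiscretness}) for the sum with $\delta=\delta_H\delta_S$), conclude that $\Se\times(\He+\He)$ is conditionally $0$-neutral at $(S,n)$, and then apply Theorem \ref{havingAnIntervalTheorem}. The only cosmetic difference is that the paper writes out the sandwich $\Vdo_n(S,Z,\Me)\le\Vdo_n(S,Z,\widetilde{\Me})\le\Vup_n(S,Z,\widetilde{\Me})\le\Vup_n(S,Z,\Me)$ explicitly, whereas you invoke Theorem \ref{havingAnIntervalTheorem} directly; both are equivalent.
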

\begin{proof} Let $\widetilde{\He}\equiv\He+\He$, we will argue that Theorem \ref{limitedCreditStopsThePortfolio}
is applicable to $\mathcal{S} \times \widetilde{\He}$ and borrow the notation used
in that theorem. 
Assumption $(2)$ in Theorem \ref{limitedCreditStopsThePortfolio} can be made to hold for $\widetilde{\He}$
by defininig $A(H^1+H^2) = A(H^1) + A(H^2)$ whenever $H^k \in \mathcal{H}$, $k=1,2$.
Also assumption $(3)$ in Theorem \ref{limitedCreditStopsThePortfolio} holds with $\delta \equiv \delta_S ~\delta_H$ for $\widetilde{\He}$ given our assumption (\ref{someMoreDiscretness}).
Therefore, 
$~~\widetilde{\Me} =\Se\times\widetilde{\He}$ is conditionally $0$-neutral at $(S,n)$. It follows that
\[\Vdo_n(S, Z, \Me) \le \Vdo_n(S, Z, \widetilde{\Me}) \le \Vup_n(S, Z, \widetilde{\Me}) \le \Vup_n(S, Z, {\Me}),\]
where the innermost inequality follows from Theorem \ref{havingAnIntervalTheorem}.
\end{proof}

\section{Attainability. Formal Martingale Properties} \label{attainability}

This section concerns the  notion of attainability, as well as a generalization of this notion and some implications. Under the assumption of attainability the minmax bounds are additive and behave much like an integration operator. We also present  results providing formal analogues of martingale properties, in particular, a trajectory based optional stopping theorem is proven. In some cases, for the sake of generality and clarity, we directly assume the existence of a worst case pricing interval, results providing such interval are given in: Theorem \ref{havingAnIntervalTheorem}, Proposition \ref{constantTrajectoryInterval}, Corollary \ref{localHavingAnInterval} and Corollary \ref{localHavingAnIntervalII}.

\begin{definition}
Given a discrete market $\mathcal{M} = \mathcal{S} \times \mathcal{H}$,
and non-negative numbers $\epsilon^{\uparrow}$, $\epsilon^{\downarrow}$.
A function $Z$ is called $\epsilon^{\uparrow}$-upward attainable if there exists $H^{{\uparrow}} \in \mathcal{H}$
and a number $V^{\uparrow}$ such that
\begin{equation} \label{uniformBoundForProfitFromAbove}
0 \leq  V^{\uparrow}+ \sum_{i=0}^{N_{H^{\uparrow}}(S)-1} H^{\uparrow}_i(S)~\Delta_i S- Z(S) \leq
\epsilon^{\uparrow}~~~\forall ~~S \in \mathcal{S}.
\end{equation}
Analogously,
$Z$ is called $\epsilon^{\downarrow}$-downward attainable if there exists $H^{{\downarrow}} \in \mathcal{H}$
and a number $V^{\downarrow}$ such that
\begin{equation} \label{uniformBoundForProfitFromBelow}
0 \leq  -V^{\downarrow}- \sum_{i=0}^{N_{H^{\downarrow}}(S)-1} H^{\downarrow}_i(S)~\Delta_i S+ Z(S) \leq
\epsilon^{\downarrow}~~~\forall ~~S \in \mathcal{S}.
\end{equation}
Finally $Z$ is called attainable if it is $0$-upward attainable, in such a case we use the notation $H^z= H^{\uparrow}$ and $V_{H^z}= V^{\uparrow}$.
Notice that $Z$ is $0$-upward attainable if and only if it is
$0$-downward attainable.
\end{definition}

The next proposition shows that the distance separating the price bounds is bounded by the maximum profits.

\begin{proposition} \label{boundingThePriceInterval}
Let $\mathcal{M}= \mathcal{S}\times
\mathcal{H}$ be a discrete market, $S^*\in\mathcal{S}$, $k\ge0$ and $Z$ a function on $\mathcal{S}$. Consider the statements:\\
a)  $Z$ is $\epsilon^{\uparrow}$-upward attainable
and $- H^{\uparrow}  \in \mathcal{H}$.\\
b)  $Z$ is $\epsilon^{\downarrow}$-downward attainable
and $- H^{\downarrow}  \in \mathcal{H}$.\\
Then, the following holds:
\begin{equation}  \label{intervalLengthBoundedByMaximumProfits}
\overline{V}_k(S^*,Z, \mathcal{M})-
\underline{V}_k(S^*,Z, \mathcal{M}) \leq \epsilon,
\end{equation}
where $\epsilon = \epsilon^{\uparrow}$ if $a)$ holds,
$\epsilon = \epsilon^{\downarrow}$ if $b)$ holds
and $\epsilon = \epsilon^{\uparrow} \wedge \epsilon^{\downarrow}$ if $a)$ and $b)$ hold.
\end{proposition}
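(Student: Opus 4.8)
The plan is to exploit the definitions of the conditional bounds directly, using the attainability hypotheses to produce explicit near-optimal portfolios for both $\overline{V}_k$ and $\underline{V}_k$, and then combine them. First I would establish the upper estimate on $\overline{V}_k(S^*, Z, \mathcal{M})$. Recall that
\[
\overline{V}_k(S^*, Z, \mathcal{M}) = \inf_{H \in \mathcal{H}} \sup_{\tilde{S} \in \mathcal{S}_{(S^*, k)}} \Big[ Z(\tilde{S}) - \sum_{i=k}^{N_H(\tilde{S})-1} H_i(\tilde{S})\,\Delta_i\tilde{S} \Big].
\]
Under hypothesis $a)$, the upper attainability inequality (\ref{uniformBoundForProfitFromAbove}) gives, for all $S$, the pointwise bound $Z(S) \leq V^{\uparrow} + \sum_{i=0}^{N_{H^{\uparrow}}(S)-1} H^{\uparrow}_i(S)\,\Delta_i S$; restricting to $\tilde{S} \in \mathcal{S}_{(S^*,k)}$ and using that the initial segment $i < k$ is common, I would rearrange to bound the conditional sup by $V^{\uparrow}$ plus the fixed partial sum up to $k$, yielding a control of $\overline{V}_k$ from above. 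Symmetrically, using the lower attainability inequality (\ref{uniformBoundForProfitFromBelow}) under hypothesis $b)$, I would bound $\underline{V}_k(S^*, Z, \mathcal{M})$ from below by the analogous quantity built from $H^{\downarrow}$ and $V^{\downarrow}$.

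The key role of the side conditions $-H^{\uparrow} \in \mathcal{H}$ and $-H^{\downarrow} \in \mathcal{H}$ is what I would address next: these guarantee that the \emph{same} portfolio (with reversed sign) is an admissible competitor in the opposite optimization. Concretely, the attainability of $Z$ via $H^{\uparrow}$ from above automatically certifies, upon negating the strategy, a lower-hedge of $Z$ of the same quality; so under hypothesis $a)$ alone one obtains both an upper bound on $\overline{V}_k$ and a matching lower bound on $\underline{V}_k$, the two differing by exactly $\epsilon^{\uparrow}$ because the two-sided inequality (\ref{uniformBoundForProfitFromAbove}) sandwiches the hedging error in $[0, \epsilon^{\uparrow}]$. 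Subtracting the two estimates then gives (\ref{intervalLengthBoundedByMaximumProfits}) with $\epsilon = \epsilon^{\uparrow}$. The argument under hypothesis $b)$ is the mirror image, producing $\epsilon = \epsilon^{\downarrow}$, and when both hold one simply takes whichever bound is smaller, giving $\epsilon = \epsilon^{\uparrow} \wedge \epsilon^{\downarrow}$.

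The main obstacle I anticipate is purely bookkeeping rather than conceptual: one must carefully track the role of the initial capital $V^{\uparrow}$ (resp.\ $V^{\downarrow}$) versus the conditional index $k$, and verify that the partial sum $\sum_{i=0}^{k-1}$ over the common segment cancels correctly when passing between the global attainability inequalities (stated with sums from $0$) and the conditional bounds (with sums from $k$). Since every $\tilde{S} \in \mathcal{S}_{(S^*,k)}$ agrees with $S^*$ up to index $k$ and the portfolios are non-anticipative, these initial terms are constant over the conditional space and drop out of the difference $\overline{V}_k - \underline{V}_k$; making this cancellation explicit is the one place requiring care. Once that is handled, the inequality follows immediately from the two-sided nature of (\ref{uniformBoundForProfitFromAbove}) and (\ref{uniformBoundForProfitFromBelow}).
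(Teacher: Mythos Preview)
Your proposal is correct and follows essentially the same approach as the paper's proof: introduce the conditional initial value $V^{\uparrow}(k,S^*)\equiv V^{\uparrow}+\sum_{i=0}^{k-1}H^{\uparrow}_i(S^*)\,\Delta_i S^*$, use $H^{\uparrow}$ as a competitor to get $\overline{V}_k\le V^{\uparrow}(k,S^*)$ and $-H^{\uparrow}\in\mathcal{H}$ as a competitor in $\overline{V}_k(S^*,-Z,\mathcal{M})$ to get $\underline{V}_k\ge V^{\uparrow}(k,S^*)-\epsilon^{\uparrow}$, then subtract; the case $b)$ is handled symmetrically. Your identification of the bookkeeping point (that the common initial segment makes $\sum_{i=0}^{k-1}$ constant over $\mathcal{S}_{(S^*,k)}$ by non-anticipativity) is exactly the one technical detail the paper's proof relies on.
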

\begin{proof}
Introduce the notation $V^{\uparrow}(k, S^{\ast}) \equiv V^{\uparrow}+ \sum_{i=0}^{k-1} H^{\uparrow}_i(S^{\ast})~\Delta_i S^{\ast}$  and
$V^{\downarrow}(k, S^{\ast}) \equiv V^{\downarrow}+ \sum_{i=0}^{k-1} H^{\downarrow}_i(S^{\ast})~\Delta_i S^{\ast}$. If $a)$ holds,
it follows from (\ref{uniformBoundForProfitFromAbove})  and $- H^{\uparrow}  \in \He$ that
\[-\Vdo_k(S^*,Z, \Me)\le \sup_{S\in\Se_{(S^*,k)}}[-Z(S)-\sum_{i=k}^{N_{H^{\uparrow}}(S)-1}-H_i^{\uparrow}(S)\Delta_iS] = -V^{\uparrow}(k, S^{\ast}) + \epsilon^{\uparrow}.\]
\begin{equation}  \nonumber
\overline{V}_k(S^*,Z, \mathcal{M}) \leq V^{\uparrow}(k, S^{\ast}) ~~\mbox{and}~~~
\underline{V}_k(S^*,Z, \mathcal{M}) \geq  V^{\uparrow}(k, S^{\ast}) - \epsilon^{\uparrow}.
\end{equation}
So (\ref{intervalLengthBoundedByMaximumProfits}) holds.

Similarly  If $b)$ holds,
it follows  from (\ref{uniformBoundForProfitFromBelow}) that
\begin{equation} \nonumber
\overline{V}_k(S^*,Z, \mathcal{M}) \leq V^{\downarrow}(k, S^{\ast}) + \epsilon^{\downarrow} ~~\mbox{and}~~~
\underline{V}_k(S^*,Z, \mathcal{M}) \geq  V^{\downarrow}(k, S^{\ast}).
\end{equation}
So (\ref{intervalLengthBoundedByMaximumProfits}) holds.
\end{proof}

\vspace{.1in}
In general, the bounds are not linear as functions of the  payoff, the following proposition presents a case where the bounds are additive.

\begin{corollary} \label{exactReconstructionAndLinearity}
Consider a discrete market $\mathcal{M}= \mathcal{S}\times \mathcal{H}$, $S^*\in\mathcal{S}$, $k\ge0$ and $Z$ a function on $\mathcal{S}$ and assume the conditions for having ~~$\Vdo_k(S^*,Z,\Me)\le \Vup_k(S^*,Z,\Me)$~~
hold.
\\
a) If $Z$ is  attainable with portfolio $H^z$ and $-H^z \in \mathcal{H}$ then:
\begin{equation}  \label{onePointIntervalForAttainable}
\overline{V}_k(S^*,Z, \mathcal{M}) =
\underline{V}_k(S^*,Z, \mathcal{M})= V_{H^z}+ \sum_{i=0}^{k-1} H^z_i(S^{\ast}) \Delta_i S^{\ast}.
\end{equation}

\noindent
b) If $Z_j$, $j=1,2$, are  attainable with portfolios $H^{z_j}$ satisfying: $-H^{z_j} \in \mathcal{H}$ and  $H^{z_1} + H^{z_2} \in \mathcal{H}$, then,
\begin{equation} \label{additiveOnePointBoundsIfAttainable}
\overline{V}_k(S^*,Z_1+Z_2, \mathcal{M}) = \overline{V}_k(S^*,Z_1, \mathcal{M})
+ \overline{V}_k(S^*,Z_2, \mathcal{M}).
\end{equation}
\end{corollary}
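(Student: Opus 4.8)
The plan is to settle part (a) as an essentially immediate corollary of Proposition \ref{boundingThePriceInterval}, and then to bootstrap part (b) from part (a) by exhibiting an attaining portfolio for the sum $Z_1+Z_2$.

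For part (a): since $Z$ is attainable it is in particular $0$-upward attainable, so I would apply Proposition \ref{boundingThePriceInterval} with $\epsilon^{\uparrow}=0$; the extra hypothesis $-H^z\in\He$ is precisely condition a) of that proposition. Writing $V^{\uparrow}(k,S^*)=V_{H^z}+\sum_{i=0}^{k-1}H^z_i(S^*)\Delta_i S^*$, the proposition delivers $\overline{V}_k(S^*,Z,\Me)\le V^{\uparrow}(k,S^*)$ together with $\underline{V}_k(S^*,Z,\Me)\ge V^{\uparrow}(k,S^*)-\epsilon^{\uparrow}=V^{\uparrow}(k,S^*)$. Combining these with the standing assumption $\underline{V}_k(S^*,Z,\Me)\le\overline{V}_k(S^*,Z,\Me)$ produces the chain $V^{\uparrow}(k,S^*)\le\underline{V}_k\le\overline{V}_k\le V^{\uparrow}(k,S^*)$, forcing all three to coincide with $V^{\uparrow}(k,S^*)$, which is exactly (\ref{onePointIntervalForAttainable}).

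For part (b): first I would verify that $Z_1+Z_2$ is attainable, with attaining portfolio $H^{z_1}+H^{z_2}$ and value $V_{H^{z_1}}+V_{H^{z_2}}$. Adding the two replication identities $Z_j(S)=V_{H^{z_j}}+\sum_{i=0}^{N_{H^{z_j}}(S)-1}H^{z_j}_i(S)\Delta_i S$, the two cumulative gains must combine into the cumulative gain of the summed portfolio; past its own index each summand is constant or liquidated and contributes no further $\Delta_i S$ term, so the total is $\sum_{i=0}^{N_{H^{z_1}+H^{z_2}}(S)-1}(H^{z_1}_i+H^{z_2}_i)(S)\Delta_i S$. With this in hand I would invoke part (a) three times: applied to $Z_1$ and to $Z_2$ it gives $\overline{V}_k(S^*,Z_j,\Me)=V_{H^{z_j}}+\sum_{i=0}^{k-1}H^{z_j}_i(S^*)\Delta_i S^*$, and applied to $Z_1+Z_2$ it gives $\overline{V}_k(S^*,Z_1+Z_2,\Me)=(V_{H^{z_1}}+V_{H^{z_2}})+\sum_{i=0}^{k-1}(H^{z_1}_i+H^{z_2}_i)(S^*)\Delta_i S^*$, which is manifestly the sum of the first two, yielding (\ref{additiveOnePointBoundsIfAttainable}).

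The main obstacle is that applying part (a) to $Z_1+Z_2$ requires its attaining portfolio to meet the hypothesis of part (a), namely $-(H^{z_1}+H^{z_2})\in\He$, and not merely $H^{z_1}+H^{z_2}\in\He$. The latter secures only the hedging (upper) bound $\overline{V}_k(S^*,Z_1+Z_2,\Me)\le$ replication value; the matching lower bound, via $\underline{V}_k(S^*,Z_1+Z_2,\Me)=-\overline{V}_k(S^*,-(Z_1+Z_2),\Me)$ together with the standing inequality $\underline{V}_k\le\overline{V}_k$, needs one to hedge $-(Z_1+Z_2)$ with $(-H^{z_1})+(-H^{z_2})=-(H^{z_1}+H^{z_2})$. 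This is exactly why the hypotheses carry both $-H^{z_j}\in\He$ and $H^{z_1}+H^{z_2}\in\He$: the portfolio-addition operation of Appendix \ref{furtherPricingResults} must send these combinations back into $\He$. I expect the delicate point to be this bookkeeping of portfolio addition across unequal stopping times $N_{H^{z_1}},N_{H^{z_2}}$, ensuring both that gains genuinely add and that the summed portfolio together with its negative lie in $\He$. An alternative route that avoids invoking part (a) on the sum is to write $Z_1=(Z_1+Z_2)+(-Z_2)$ and combine subadditivity of $\overline{V}_k$ with part (a) for the attainable $-Z_2$ (whose negative $H^{z_2}$ lies in $\He$), obtaining $\overline{V}_k(S^*,Z_1,\Me)\le\overline{V}_k(S^*,Z_1+Z_2,\Me)-\overline{V}_k(S^*,Z_2,\Me)$, the reverse of the subadditive bound already established.
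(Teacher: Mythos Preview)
Your approach mirrors the paper's exactly: part (a) via Proposition \ref{boundingThePriceInterval} with $\epsilon^{\uparrow}=0$ and the standing inequality $\underline{V}_k\le\overline{V}_k$, and part (b) by observing that $Z_1+Z_2$ is attainable with $H^{z_1}+H^{z_2}$ and value $V_{H^{z_1}}+V_{H^{z_2}}$, then invoking part (a). The paper's proof of (b) is the one-line remark that the result ``follows from (\ref{onePointIntervalForAttainable}) after noticing that $Z\equiv Z_1+Z_2$ is attainable and $V_{H^z}(k,S^*)=V_{H^{z_1}}(k,S^*)+V_{H^{z_2}}(k,S^*)$.''

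Your caution about needing $-(H^{z_1}+H^{z_2})\in\mathcal{H}$ to invoke part (a) on the sum is well founded: the stated hypotheses give $-H^{z_j}\in\mathcal{H}$ and $H^{z_1}+H^{z_2}\in\mathcal{H}$, but not their combination, and the paper glosses over this point just as you suspected. Your proposed alternative route via subadditivity, however, does not close the gap cleanly either: Lemma \ref{lemmaToHaveInterval} establishes subadditivity of $\overline{V}_k$ only for the enlarged market $\mathcal{S}\times(\mathcal{H}+\mathcal{H})$, not for $\mathcal{M}$ itself, so writing $Z_1=(Z_1+Z_2)+(-Z_2)$ and appealing to subadditivity on $\mathcal{M}$ presupposes exactly the kind of closure under addition that is at issue. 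In short, both your argument and the paper's rely implicitly on $-(H^{z_1}+H^{z_2})\in\mathcal{H}$ (or some equivalent closure), and this is a genuine lacuna in the hypotheses as stated rather than a flaw in your reasoning.
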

\begin{proof}
By assumption,
\begin{equation} \nonumber
Z(S) = V_{H^z}(k, S) + \sum_{i=k}^{N_{H^z}(S)-1} H^z_{i}(S)\Delta_i S,~~\forall~~~S \in \mathcal{S}.
\end{equation}
Where we have used the abbreviation $V_{H^z}(k, S) \equiv V_{H^z}+ \sum_{i-0}^{k-1} H^z_i(S) \Delta_i S$, with some abuse of notation (as $V_{H^z}$ may not be necessarily equal to $V_{H^z}(0,S)$).
It then follows that,
\[
\overline{V}_k(S^*,Z, \mathcal{M})\le V_{H^z}(k, S^*).
\]
Similarly, since $-H^z \in \mathcal{H}$,  $-Z\in \mathcal{Z}(\mathcal{M})$, thus
\[
\underline{V}_k(S^*,Z, \mathcal{M}) = -\overline{V}_k(S^*,-Z, \mathcal{M}) \ge V_{H^z}(k, S^*).
\]
Notice that Proposition \ref{boundingThePriceInterval} is applicable and (\ref{intervalLengthBoundedByMaximumProfits}), together with our hypothesis, gives $\underline{V}_k(S^*,Z, \mathcal{M}) = \overline{V}_k(S^*,Z, \mathcal{M})$. This equality combined with the above inequalities concludes the proof of (\ref{onePointIntervalForAttainable}).

The proof of (\ref{additiveOnePointBoundsIfAttainable}) follows from
(\ref{onePointIntervalForAttainable}) after noticing that $Z \equiv Z_1+Z_2$ is attainable and  $V_{H^z}(k, S^*) = V_{H^{z_1}}(k, S^*) + V_{H^{z_2}}(k, S^*)$.
\end{proof}

The following result expresses a consistency result, namely today's stock price is the minmax price in a $0$-neutral discrete market $\Me$. Assumptions leading to the conclusion $\underline{V}(S, Z, \mathcal{M}) \le \Vup(S,  Z, \mathcal{M})$, for any $S\in\Se$ (as in Theorem \ref{havingAnIntervalTheorem}, Proposition \ref{constantTrajectoryInterval}, Corollary \ref{localHavingAnInterval} and Corollary \ref{localHavingAnIntervalII}), will be required. Moreover, under $0$-neutrality, the minmax operator behaves much like an expectation, in particular
the sequence $\Pi = \{\Pi_k\}$ of coordinate projections $\Pi_k:\mathcal{S} \rightarrow \mathbb{R}$, $\Pi_k(S)= S_k$ behaves like a martingale with respect to this operator.
A related result will be given later, the optional sampling theorem (see Theorem \ref{optionalSamplingTheorem}).

\begin{corollary} \label{minmaxingale}
Let $\tau$ be a stopping time and $\mathcal{M} = \mathcal{S} \times
\mathcal{H}$ a discrete market. Define $G^{\tau}$ by:
\[G_i^{\tau}(S)=1\quad \mbox{for} \quad 0 \le i \le \tau(S)-1, \quad
N_{G^{\tau}}(S) = \tau(S),\quad \mbox{and}\quad V_{G^\tau}(0,S_0)=S_0.\]
Fix $S^* \in \mathcal{S}$, $k\ge 0$ and assume the conditions on $\Me$ that assure the existence of a pricing interval.\\
If $G^{\tau}$ and $-G^{\tau}$ belong to $\mathcal{H}$, then :
\begin{equation} \label{optionalSamplingIn0Neutral}
\underline{V}_k(S^*,S_{\tau},\mathcal{M}) =
\overline{V}_k(S^*,S_{\tau},\mathcal{M}) = S^*_k.
\end{equation}
Where $S_{\tau}$ denotes the function $Z$, defined on $\mathcal{S}$ by
$Z(S)=S_{\tau(S)}.$

\vspace{.1in}
\noindent
Conversely. Assume $\tau\equiv k+1$, and for all $H \in
\mathcal{H}$, $H+ G^{\tau} \in \mathcal{H}$. Then if
$~\overline{V}_k(S^*, S_{\tau}, \mathcal{M}) = S^*_k$, it follows that
$\mathcal{M}$ is conditionally $0$-neutral at $(S^*, k)$.
\end{corollary}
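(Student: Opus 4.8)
The plan is to handle the two implications separately, the unifying observation being that the buy-and-hold portfolio $G^{\tau}$ \emph{exactly} reconstructs the payoff $S_{\tau}$, so that $S_{\tau}$ is an attainable function.

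\textbf{Forward implication.} First I would check attainability. By the self-financing value formula (\ref{portfolioValue}) together with $G_i^{\tau}(S)=1$ for $0\le i\le \tau(S)-1$,
\[
V_{G^{\tau}}(N_{G^{\tau}}(S),S)=S_0+\sum_{i=0}^{\tau(S)-1}\Delta_i S = S_0+(S_{\tau(S)}-S_0)=S_{\tau(S)}=Z(S),
\]
so $Z$ is attainable with $H^z=G^{\tau}$ and $V_{H^z}=S_0$. Since $-G^{\tau}\in\mathcal{H}$ by hypothesis and the pricing interval $\underline{V}_k\le\overline{V}_k$ is assumed, Corollary \ref{exactReconstructionAndLinearity}(a) gives $\overline{V}_k(S^*,S_{\tau},\mathcal{M})=\underline{V}_k(S^*,S_{\tau},\mathcal{M})=V_{H^z}+\sum_{i=0}^{k-1}G_i^{\tau}(S^*)\Delta_i S^*$. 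In the financially relevant regime $k\le\tau(S^*)$ one has $G_i^{\tau}(S^*)=1$ for $0\le i\le k-1$, so the sum telescopes to $S^*_k-S_0$ and the bound equals $S^*_k$, which is (\ref{optionalSamplingIn0Neutral}). (The stopping-time property of $\tau$ guarantees $\tau(\tilde{S})\ge k$ for every $\tilde{S}\in\mathcal{S}_{(S^*,k)}$ in this regime.) Alternatively I could argue directly without the corollary: for $G^{\tau}$ the integrand telescopes to $\tilde{S}_k=S^*_k$ for every $\tilde{S}\in\mathcal{S}_{(S^*,k)}$, giving $\overline{V}_k\le S^*_k$, and symmetrically $-G^{\tau}$ yields $\underline{V}_k\ge S^*_k$; the assumed interval inequality then forces equality.

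\textbf{Converse implication.} Now $\tau\equiv k+1$, so for $\tilde{S}\in\mathcal{S}_{(S^*,k)}$ we have $S_{\tau}(\tilde{S})=\tilde{S}_{k+1}=S^*_k+\Delta_k\tilde{S}$; pulling the constant $S^*_k$ out of the supremum gives
\[
\overline{V}_k(S^*,S_{\tau},\mathcal{M}) = S^*_k + \inf_{H\in\mathcal{H}}\sup_{\tilde{S}\in\mathcal{S}_{(S^*,k)}}\Bigl[\Delta_k\tilde{S}-\sum_{i=k}^{N_H(\tilde{S})-1}H_i(\tilde{S})\Delta_i\tilde{S}\Bigr].
\]
By hypothesis the left side is $S^*_k$, so the inf-sup term vanishes. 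The key move is the substitution $H\mapsto H'=H+G^{\tau}\in\mathcal{H}$ (allowed since $H+G^{\tau}\in\mathcal{H}$ for all $H$). Taking $G^{\tau}$ to be liquidated after $\tau=k+1$ (so $G_k^{\tau}=1$ and $G_i^{\tau}=0$ for $i>k$), adding $G^{\tau}$ perturbs only the coefficient at index $k$, whence $\Delta_k\tilde{S}-\sum_{i=k}^{N_{H'}-1}H'_i\Delta_i\tilde{S}=-\sum_{i=k}^{N_H-1}H_i\Delta_i\tilde{S}$, and therefore for each $H$,
\[
\sup_{\tilde{S}}\Bigl[-\sum_{i=k}^{N_H-1}H_i\Delta_i\tilde{S}\Bigr]=\sup_{\tilde{S}}\Bigl[\Delta_k\tilde{S}-\sum_{i=k}^{N_{H'}-1}H'_i\Delta_i\tilde{S}\Bigr]\ge \inf_{\bar{H}}\sup_{\tilde{S}}\Bigl[\Delta_k\tilde{S}-\sum_{i=k}^{N_{\bar{H}}-1}\bar{H}_i\Delta_i\tilde{S}\Bigr]=0.
\]
Taking the infimum over $H$ yields $\overline{V}_k(S^*,0,\mathcal{M})\ge 0$; since the null portfolio always gives $\overline{V}_k(S^*,0,\mathcal{M})\le 0$, we conclude $\overline{V}_k(S^*,0,\mathcal{M})=0$, i.e. conditional $0$-neutrality at $(S^*,k)$.

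\textbf{Main obstacle.} The delicate point is the bookkeeping in the converse: one must fix the convention that $G^{\tau}$ is liquidated (rather than held constant) beyond $\tau$, so that adding $G^{\tau}$ affects only the single index $k$ while leaving $N_{H'}$ and all later holdings compatible with the telescoping identity. Verifying, against the definition of portfolio addition from the appendix, that the extra $\Delta_k\tilde{S}$ term is produced cleanly and that nothing leaks from indices $i>k$ is the step demanding the most care; the remaining manipulations of the min-max expressions are routine.
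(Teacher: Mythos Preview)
Your argument follows the paper's proof essentially line for line: attainability of $S_{\tau}$ via $G^{\tau}$ followed by Corollary~\ref{exactReconstructionAndLinearity} for the forward direction, and the substitution $H\mapsto H+G^{\tau}$ (equivalently the paper's inclusion $\mathcal{H}\subset\{H-G^{\tau}:H\in\mathcal{H}\}$) to reduce $\overline{V}_k(S^*,S_{\tau},\mathcal{M})=S^*_k$ to $\overline{V}_k(S^*,0,\mathcal{M})\ge 0$ for the converse. Your explicit attention to the liquidation convention for $G^{\tau}$ and to the regime $k\le\tau(S^*)$ is in fact more careful than the paper, which leaves these points implicit.
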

\begin{proof}
 Notice that $S_{\tau(S)} = S_0 + \sum_{i=0}^{\tau(S)-1} \Delta_iS$, for any $S \in \mathcal{S}$, and $G^{\tau}$ is clearly non-anticipative. So $Z = S_{\tau}$ is attainable. 
 Therefore, Corollary \ref{exactReconstructionAndLinearity} is applicable giving (\ref{optionalSamplingIn0Neutral}) since $V_{H^{\tau}}(k,S^*) = S^*_k$.

For the second statement, if $\overline{V}_k(S^*, S_{\tau}, \mathcal{M}) = S^*_k$, then
\begin{eqnarray*} 0 &=&\inf_{H \in \mathcal{H}}\{\sup_{S \in \mathcal{S}_{(S^*, k)}}[ S_{\tau(S)} - S_k -\sum_{i=k}^{N_H(S)-1} H_i(S)~\Delta_iS) ]\} = \inf_{H \in \mathcal{H}}\{\sup_{S \in \mathcal{S}_{(S^*, k)}} [ -(\sum_{i=k}^{N_H(S)-1} H_i(S)-\sum_{i=k}^{\tau(S)-1} G^{\tau}_i(S))~\Delta_iS)]\} \\
& \le & \inf_{H \in \mathcal{H}}\{\sup_{S \in \mathcal{S}_{(S^*, k)}} [ -\sum_{i=k}^{N_H(S)-1} H_i(S)~\Delta_iS)]\}.
\end{eqnarray*}
Last inequality holds, because $H+G^{\tau} \in \mathcal{H}$ for all $H \in \mathcal{H}$, and then $\mathcal{H}\subset \{H-G^{\tau}: H \in \mathcal{H}\}$. Finally, since the portfolio $0 \in \mathcal{H}$, it follows that
\[\overline{V}_k(S^*, 0, \mathcal{M}) = 0 .\]
Therefore, $\mathcal{M}$ is conditionally $0$-neutral at $(S^*, k)$.
\end{proof}

\subsection{Trajectory Based Optional Stopping Theorem}
In our setting, the analogue of a martingale process is
a trajectory set that obeys the locally arbitrage-free  property (see Definition \ref{localDefinitions}). With this analogy in mind,
we re-cast the optional sampling theorem.

\subsubsection{Stopped Trajectory Sets}

We will make use of the following notation, given a trajectory space $\mathcal{S}$ and a stopping time $\nu$ (see Definition \ref{stoppingTimeDef} set:
\begin{equation} \nonumber
\mathcal{S}^{\nu} \equiv \{S^{\nu} = \{S^{\nu}_i\}_{i=0}^{\infty}: \exists S \in \mathcal{S},~~ S^{\nu}_i = S_{\nu(S) \wedge i}~\mbox{for all}~i \geq 0\}.
\end{equation}
\begin{remark}
Making use of the notation introduced in (\ref{generalSet}),
the augmented version of the above set is given by $\mathcal{S}^{\nu} \rightarrow \mathcal{S}^{\mathcal{W}, \nu}(s_0, w_0)
\equiv \{{\bf S}^{\nu}= \{(S^{\nu}, W^{\nu}_i)\}: \exists {\bf S} \in \mathcal{S}^{\mathcal{W}}_{\infty}(s_0, w_0) , S^{\nu}_i = S_{\nu \wedge i}, W^{\nu}_i = W_{\nu \wedge i} \}.$
\end{remark}
\begin{lemma}  \label{oneToOneRelationshipForConditioningSets}
Given a trajectory space $\mathcal{S}$ and a stopping time $\nu$ defined on $\mathcal{S}$,
fix $S \in \mathcal{S}$, $n \geq 0$ and assume
$\nu(S) \geq n+1$. Then,
\begin{equation}  \nonumber
\tilde{S}^{\nu} \in \mathcal{S}^{\nu}(S^{\nu},n)~~\mbox{if and only if}~~
\tilde{S} \in \mathcal{S}(S,n).
\end{equation}
\end{lemma}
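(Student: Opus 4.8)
The plan is to prove the two implications separately, but both rest on a single observation: the hypothesis $\nu(S)\ge n+1$, together with the defining property of a stopping time (Definition \ref{stoppingTimeDef}), forces $\nu(\tilde{S})\ge n+1$ as well, and this makes the stopping operation act trivially on the index range $0\le i\le n$, so that $S^{\nu}$ and $\tilde{S}^{\nu}$ coincide there with $S$ and $\tilde{S}$. As a preliminary I would record that, since $\nu(S)\ge n+1>i$ for every $0\le i\le n$, one has $\nu(S)\wedge i=i$ and hence $S^{\nu}_i=S_{\nu(S)\wedge i}=S_i$ on this range; thus membership $\tilde{S}^{\nu}\in\mathcal{S}^{\nu}(S^{\nu},n)$ amounts to $\tilde{S}^{\nu}_i=S_i$ for $0\le i\le n$, and membership $\tilde{S}\in\mathcal{S}(S,n)$ amounts to $\tilde{S}_i=S_i$ on the same range.

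For the direction $(\Leftarrow)$, assuming $\tilde{S}\in\mathcal{S}(S,n)$, so that $\tilde{S}_i=S_i$ for $0\le i\le n$, I would argue by contradiction that $\nu(\tilde{S})\ge n+1$: if instead $\nu(\tilde{S})\le n$, then $\tilde{S}$ and $S$ agree on the whole segment $0\le k\le \nu(\tilde{S})$, so the stopping-time property yields $\nu(S)=\nu(\tilde{S})\le n$, contradicting $\nu(S)\ge n+1$. Having $\nu(\tilde{S})\ge n+1$, I get $\nu(\tilde{S})\wedge i=i$ for $0\le i\le n$, whence $\tilde{S}^{\nu}_i=\tilde{S}_i=S_i=S^{\nu}_i$ and therefore $\tilde{S}^{\nu}\in\mathcal{S}^{\nu}(S^{\nu},n)$. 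The direction $(\Rightarrow)$ runs along the same lines. Assuming $\tilde{S}^{\nu}_i=S_i$ for $0\le i\le n$, I would again rule out $m:=\nu(\tilde{S})\le n$: on $0\le i\le m$ the stopping is inactive, so $\tilde{S}_i=\tilde{S}^{\nu}_i=S_i$, i.e. $\tilde{S}$ and $S$ agree up to $\nu(\tilde{S})$, and the stopping-time property again gives $\nu(S)=m\le n$, a contradiction. Hence $\nu(\tilde{S})\ge n+1$, so $\tilde{S}^{\nu}_i=\tilde{S}_i$ for $0\le i\le n$, and the hypothesis collapses to $\tilde{S}_i=S_i$, that is, $\tilde{S}\in\mathcal{S}(S,n)$.

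The only genuinely delicate point, and the one I expect to be the main obstacle, is exactly this exclusion of early stopping. A priori $\tilde{S}$ could stop at some $m\le n$, its stopped path remaining frozen thereafter, so that $\tilde{S}^{\nu}$ might still match $S^{\nu}$ on $\{0,\dots,n\}$ even though $\tilde{S}$ itself departs from $S$ past index $m$. The stopping-time axiom is precisely what forbids this: the assumption $\nu(S)\ge n+1$ certifies that $S$ has not yet stopped by time $n$, so any early stop of $\tilde{S}$ along a segment where it agrees with $S$ would propagate to $S$ and violate $\nu(S)\ge n+1$. Once this reduction $\nu(\tilde{S})\ge n+1$ is isolated and justified in both directions, everything else is the routine index bookkeeping displayed above, so I would present the $\nu(\tilde{S})\ge n+1$ claim as the single lemma that both implications invoke.
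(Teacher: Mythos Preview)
Your proof is correct and follows essentially the same approach as the paper's own proof: both directions hinge on the contradiction argument that $\nu(\tilde{S})\le n$ would force, via the stopping-time axiom, $\nu(S)=\nu(\tilde{S})\le n$ against the hypothesis $\nu(S)\ge n+1$, after which the stopping operation is inert on the index range $0\le i\le n$ and the two memberships collapse to the same condition. Your presentation is marginally cleaner in that you record the simplification $S^{\nu}_i=S_i$ for $0\le i\le n$ up front, whereas the paper carries the expression $S_{\nu(S)\wedge k}$ through the argument, but the logical content is identical.
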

\begin{proof}
Assume $\tilde{S}^{\nu} \in \mathcal{S}^{\nu}(S^{\nu},n)$, this means that there exist $\tilde{S} \in \mathcal{S}$ satisfying
\begin{equation} \label{equalityWhenStopping}
\tilde{S}_{\nu(\tilde{S}) \wedge k} = S_{\nu(S) \wedge k}~
\mbox{for all}~0 \leq k \leq n.
\end{equation}
Suppose that $\nu(\tilde{S}) \leq  n$, then (\ref{equalityWhenStopping}) and the fact that $\nu(S) \geq n+1$ imply:
\begin{equation} \nonumber
\tilde{S}_{k} = \tilde{S}_{\nu(\tilde{S}) \wedge k} =  S_{\nu(S) \wedge k}= S_{k}~
\mbox{for all}~0 \leq k \leq \nu(\tilde{S}).
\end{equation}
Hence $\nu(S)= \nu(\tilde{S})$ which contradicts our standing assumption $\nu(S) \geq n+1$. Therefore, it follows that $\nu(\tilde{S}) \geq n+1$.
Then (\ref{equalityWhenStopping}) implies
\begin{equation} \label{equalityAtAllPoints}
\tilde{S}_{k} = S_{k}~
\mbox{for all}~0 \leq k \leq n,
\end{equation}
and so $\tilde{S} \in \mathcal{S}(S,n)$.

Conversely, assume $\tilde{S} \in \mathcal{S}(S,n)$ and hence  (\ref{equalityAtAllPoints}) holds
and implies
\begin{equation} \nonumber
\tilde{S}_{\nu(\tilde{S}) \wedge k} = S_{\nu(S) \wedge k}~
\mbox{for all}~0 \leq k \leq n,
\end{equation}
whenever $\nu(\tilde{S}) \geq n+1$ and so in this case we have established $\tilde{S}^{\nu} \in \mathcal{S}^
{\nu}(S^{\nu},n)$. It remains then to consider the case $\nu(\tilde{S}) \leq  n$,
but this is impossible as if it were true we could conclude from
(\ref{equalityAtAllPoints}) that $\nu(\tilde{S}) = \nu(S)$ which will contradict the standing assumption that $\nu(S) \geq n+1$.
\end{proof}

\begin{lemma}  \label{keyIdentity}
Given a trajectory space $\mathcal{S}$ and a stopping time $\nu$ defined on $\mathcal{S}$,
fix $S \in \mathcal{S}$, $n \geq 0$. The following holds
\begin{equation}  \label{itWillBeNiceToUnderstandThisIdentity}
\hat{S}^{\nu}_{n+1} - S^{\nu}_n = (\hat{S}_{n+1} - S_n) ~{\bf 1}_{\{\nu(S) \geq n+1\}}(S) ~
\mbox{for all}~ \hat{S}^{\nu} \in \mathcal{S}^{\nu}(S^{\nu}, n).
\end{equation}
\end{lemma}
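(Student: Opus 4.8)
The plan is to split on the value of $\nu(S)$ relative to $n+1$, since the indicator ${\bf 1}_{\{\nu(S) \geq n+1\}}(S)$ on the right-hand side is exactly what distinguishes the two regimes. Throughout I would repeatedly unwind the definition of the stopped coordinates, $S^{\nu}_i = S_{\nu(S) \wedge i}$ and $\hat{S}^{\nu}_i = \hat{S}_{\nu(\hat{S}) \wedge i}$, together with the membership relation $\hat{S}^{\nu} \in \mathcal{S}^{\nu}(S^{\nu}, n)$, which reads $\hat{S}^{\nu}_i = S^{\nu}_i$ for $0 \leq i \leq n$. The delicate point to keep straight is which trajectory's stopping time governs which stopped coordinate, since the stopped data $\hat{S}^{\nu}$ only reveals the underlying $\hat{S}$ up to the index $\nu(\hat{S})$.

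First I would handle the case $\nu(S) \geq n+1$, where the indicator equals $1$ and the target is $\hat{S}^{\nu}_{n+1} - S^{\nu}_n = \hat{S}_{n+1} - S_n$. Here $\nu(S) \wedge n = n$ gives $S^{\nu}_n = S_n$ at once. The hypothesis $\nu(S) \geq n+1$ is precisely the standing assumption of Lemma \ref{oneToOneRelationshipForConditioningSets}, which I invoke to convert $\hat{S}^{\nu} \in \mathcal{S}^{\nu}(S^{\nu}, n)$ into $\hat{S} \in \mathcal{S}(S, n)$; its proof moreover records that $\nu(\hat{S}) \geq n+1$, whence $\nu(\hat{S}) \wedge (n+1) = n+1$ and therefore $\hat{S}^{\nu}_{n+1} = \hat{S}_{n+1}$. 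Substituting these two evaluations yields the claimed identity in this regime.

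The substantive case is $\nu(S) \leq n$, where the indicator vanishes and I must show $\hat{S}^{\nu}_{n+1} = S^{\nu}_n$. Since $\nu(S) \leq n$, the stopped trajectory $S^{\nu}$ is already frozen at index $n$, so $S^{\nu}_n = S_{\nu(S)}$, and the goal reduces to $\hat{S}^{\nu}_{n+1} = S_{\nu(S)}$. The key step, which I expect to be the main obstacle, is to prove $\nu(\hat{S}) = \nu(S)$: once this holds, $\nu(\hat{S}) \leq n < n+1$ forces $\hat{S}^{\nu}_{n+1} = \hat{S}_{\nu(\hat{S})} = \hat{S}^{\nu}_n = S^{\nu}_n$, completing the argument. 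To establish $\nu(\hat{S}) = \nu(S)$ I would argue by cases using the defining property of a stopping time (Definition \ref{stoppingTimeDef}). If $\nu(\hat{S}) \geq \nu(S)$, then for $0 \leq i \leq \nu(S)$ one has $\hat{S}^{\nu}_i = \hat{S}_i$, and comparison with $\hat{S}^{\nu}_i = S^{\nu}_i = S_i$ gives $\hat{S}_k = S_k$ for $0 \leq k \leq \nu(S)$, so Definition \ref{stoppingTimeDef} returns $\nu(\hat{S}) = \nu(S)$. If instead $\nu(\hat{S}) < \nu(S)$, the agreement $\hat{S}^{\nu}_i = S_i$ for $0 \leq i \leq \nu(\hat{S})$ yields $\hat{S}_k = S_k$ for $0 \leq k \leq \nu(\hat{S})$, and applying Definition \ref{stoppingTimeDef} with the roles of $S$ and $\hat{S}$ interchanged forces $\nu(S) = \nu(\hat{S})$, contradicting $\nu(\hat{S}) < \nu(S)$. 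Hence $\nu(\hat{S}) = \nu(S)$ in all cases, and the identity follows.
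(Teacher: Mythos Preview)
Your proof is correct and follows essentially the same approach as the paper's: both split on whether $\nu(S) \leq n$ or $\nu(S) \geq n+1$, and in each case pin down $\nu(\hat{S})$ via the stopping-time property before reading off the stopped coordinates. The only cosmetic difference is that in the case $\nu(S) \geq n+1$ you invoke Lemma~\ref{oneToOneRelationshipForConditioningSets} to obtain $\nu(\hat{S}) \geq n+1$, whereas the paper reproves that inequality inline; and in the case $\nu(S) \leq n$ the paper first rules out $\nu(\hat{S}) \geq n+1$ and then compares via $p = \nu(S) \wedge \nu(\hat{S})$, while you split directly on $\nu(\hat{S}) \gtrless \nu(S)$---but these are equivalent rearrangements of the same argument.
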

\begin{proof}
Consider $\hat{S}^{\nu} \in \mathcal{S}^{\nu}(S^{\nu}, n)$, so
\begin{equation} \label{equalityWhenStoppingII}
\hat{S}_{\nu(\hat{S}) \wedge k} = S_{\nu(S) \wedge k}~
\mbox{for all}~0 \leq k \leq n.
\end{equation}
We split the proof in two cases.
Case I: consider that  $\nu(S) \leq n$, hence the right hand side of (\ref{itWillBeNiceToUnderstandThisIdentity}) equals $0$. Notice that if  $\nu(\hat{S}) \geq n+1$,
then (\ref{equalityWhenStoppingII}) implies $\hat{S}_{\nu(\hat{S}) \wedge k} = S_k$
for all $0 \leq k \leq \nu(S)$ but $\nu(\hat{S}) \geq n+1 > \nu(S)$ hence
$\hat{S}_{k} = S_k$
for all $0 \leq k \leq \nu(S)$ hence $\nu(S) = \nu(\hat{S})$ which gives a contradiction. Therefore, under current Case I, we should have $\nu(\hat{S}) \leq n$ so if $p \equiv \nu(S)
\wedge \nu(\hat{S}) \leq n$ it follows from  (\ref{equalityWhenStoppingII}) that
\begin{equation} \nonumber
\hat{S}_{k} = S_{k}~
\mbox{for all}~0 \leq k \leq p,
\end{equation}
which implies, in either  case $p = \nu(S)$ or $p=\nu(\hat{S})$, that $\nu(S) = \nu(\hat{S})$. It follows that  the left hand side of (\ref{itWillBeNiceToUnderstandThisIdentity}) equals:
\begin{equation} \nonumber
\hat{S}_{\nu(\hat{S}) \wedge n+1} - S_{\nu(S)}= \hat{S}_{\nu(\hat{S})} - S_{\nu(S)}=0.
\end{equation}
Case II: consider now  $\nu(S) \geq n+1$, in this case the right hand side of
 (\ref{itWillBeNiceToUnderstandThisIdentity}) equals $(\hat{S}_{n+1} - S_n)$ which we will prove also equals its left hand side. If $\nu(\hat{S}) \leq n$ and given that $\nu(S) \geq n$ it follows from (\ref{equalityWhenStoppingII}) that $\hat{S}_k = S_k$ for all $0 \leq k \leq \nu(\hat{S})$ and this implies $\nu(S) = \nu(\hat{S})$ leading to a contradiction. Therefore, under current Case II, we should have $\nu(\hat{S}) \geq n+1$, but then
$$
\hat{S}^{\nu}_{n+1} - S^{\nu}_n =
\hat{S}_{\nu(\hat{S}) \wedge n+1} - S_{\nu(S) \wedge n} = (\hat{S}_{n+1} - S_n).
$$
\end{proof}

The following is our version of the optional stopping theorem for
martingales.
\begin{theorem}  \label{optionalSamplingTheorem}
Let $\mathcal{S}$ be a trajectory space that satisfies the locally arbitrage-free property  (locally $0$-neutral) from Definition \ref{localDefinitions} 
and $\nu$ a stopping time defined on $\mathcal{S}$. Then, $\mathcal{S}^{\nu}$
satisfies the locally arbitrage-free (locally $0$-neutral) property as well.
\end{theorem}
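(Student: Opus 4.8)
The plan is to reduce the local property at an arbitrary node of $\mathcal{S}^{\nu}$ to the same property at a corresponding node of $\mathcal{S}$, using the two lemmas just established. Fix a node $(S^{\nu}, n)$ in $\mathcal{S}^{\nu}$ and choose a representative $S \in \mathcal{S}$ whose stopped trajectory is $S^{\nu}$. Everything hinges on computing $\sup$ and $\inf$ of the increment $\hat{S}^{\nu}_{n+1} - S^{\nu}_n$ as $\hat{S}^{\nu}$ ranges over $\mathcal{S}^{\nu}(S^{\nu}, n)$, since both the locally arbitrage-free and the locally $0$-neutral conditions of Definition \ref{localDefinitions} are stated purely in terms of these two extremal values. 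The pointwise identity of Lemma \ref{keyIdentity} and the set identification of Lemma \ref{oneToOneRelationshipForConditioningSets} are exactly the tools that let me carry out this computation.

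I would split into two cases according to the value $\nu(S)$. If $\nu(S) \leq n$, then the indicator $\mathbf{1}_{\{\nu(S) \geq n+1\}}(S)$ in Lemma \ref{keyIdentity} vanishes, so $\hat{S}^{\nu}_{n+1} - S^{\nu}_n = 0$ for every $\hat{S}^{\nu} \in \mathcal{S}^{\nu}(S^{\nu}, n)$; hence $\sup = \inf = 0$ and $(S^{\nu}, n)$ is a flat node, which satisfies (\ref{flatNode}) and, a fortiori, (\ref{cone}). If instead $\nu(S) \geq n+1$, then Lemma \ref{keyIdentity} gives $\hat{S}^{\nu}_{n+1} - S^{\nu}_n = \hat{S}_{n+1} - S_n$ for every $\hat{S}^{\nu} \in \mathcal{S}^{\nu}(S^{\nu}, n)$, while Lemma \ref{oneToOneRelationshipForConditioningSets} (in this case $\nu(S) \geq n+1$) shows that $\hat{S}^{\nu}$ runs over $\mathcal{S}^{\nu}(S^{\nu}, n)$ precisely when $\hat{S}$ runs over $\mathcal{S}(S, n)$. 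Combining the two, the set of attainable increment values $\{\hat{S}^{\nu}_{n+1} - S^{\nu}_n : \hat{S}^{\nu} \in \mathcal{S}^{\nu}(S^{\nu}, n)\}$ coincides, as a subset of $\mathbb{R}$, with $\{\hat{S}_{n+1} - S_n : \hat{S} \in \mathcal{S}(S, n)\}$, so the two suprema agree and the two infima agree.

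With this equality of extremal values in hand, the conclusion transfers immediately: in the case $\nu(S) \geq n+1$, the node $(S, n)$ of $\mathcal{S}$ is by hypothesis either up-down (so $\sup > 0$ and $\inf < 0$, yielding (\ref{upDownProperty}) at $(S^{\nu}, n)$) or flat (yielding (\ref{flatNode})), and likewise the signs $\geq 0$, $\leq 0$ of (\ref{cone}) carry over verbatim for the locally $0$-neutral statement; the case $\nu(S) \leq n$ was already handled. Since $(S^{\nu}, n)$ was an arbitrary node of $\mathcal{S}^{\nu}$, this proves that $\mathcal{S}^{\nu}$ inherits whichever local property $\mathcal{S}$ enjoys. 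I expect the genuine obstacle to be verifying the \emph{two-sided} set identity of increment values rather than a mere one-sided inequality, because $\sup$ and $\inf$ must both be preserved; this is where one needs both implications of Lemma \ref{oneToOneRelationshipForConditioningSets} together with the exact pointwise identity (\ref{itWillBeNiceToUnderstandThisIdentity}), and where the degenerate branch $\nu(S) \leq n$ must be isolated so that it is correctly recognized as producing a flat node.
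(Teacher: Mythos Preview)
Your proof is correct and follows essentially the same approach as the paper's: the same two-case split on $\nu(S) \le n$ versus $\nu(S) \ge n+1$, the same use of Lemma \ref{keyIdentity} to identify the increments pointwise, and the same appeal to Lemma \ref{oneToOneRelationshipForConditioningSets} (in the second case) to transfer the sup and inf to $\mathcal{S}(S,n)$. If anything, you are slightly more explicit than the paper in spelling out that both directions of Lemma \ref{oneToOneRelationshipForConditioningSets} are needed so that the two \emph{sets} of increments coincide, not just one-sided containment.
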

\begin{proof}
Fix $S^{\nu} \in \mathcal{S}^{\nu}$ and $n \geq 0$. We consider first the case when
$\nu(S) \leq n$, where $S \in \mathcal{S}$ satisfies $S^{\nu}_i = S_{\nu_i(S)}$ for all $i \geq 0$.
Then by Lemma  \ref{keyIdentity},
$(\hat{S}^{\nu}_{n+1} - S^{\nu}_n) =0$
for all~ $\hat{S}^{\nu} \in \mathcal{S}^{\nu}(S^{\nu}, n)$, therefore
\begin{equation} \nonumber
\sup_{\hat{S}^{\nu} \in \mathcal{S}^{\nu}(S^{\nu}, n)} (\hat{S}^{\nu}_{n+1} - S^{\nu}_n)
= \inf_{\hat{S}^{\nu} \in \mathcal{S}^{\nu}(S^{\nu}, n)} (\hat{S}^{\nu}_{n+1} - S^{\nu}_n)=0.
\end{equation}
In the case that $\nu(S) \geq n+1$, again by  Lemma  \ref{keyIdentity},
$(\hat{S}^{\nu}_{n+1} - S^{\nu}_n) = (\hat{S}_{n+1} - S_n)$
for all~ $\hat{S}^{\nu} \in \mathcal{S}^{\nu}(S^{\nu}, n)$,

\begin{equation}  \label{equalityForStoppedSup}
\sup_{\hat{S}^{\nu} \in \mathcal{S}^{\nu}(S^{\nu}, n)} (\hat{S}^{\nu}_{n+1} - S^{\nu}_n)
= \sup_{\hat{S}^{\nu} \in \mathcal{S}^{\nu}(S^{\nu}, n)} (\hat{S}_{n+1} - S_n)= \sup_{\hat{S} \in \mathcal{S}(S, n)} (\hat{S}_{n+1} - S_n),
\end{equation}
where we used Lemma \ref{oneToOneRelationshipForConditioningSets}. A similar result to
equation  (\ref{equalityForStoppedSup}) can be obtained for the infimum as well.
These results and the fact that $\mathcal{S}$ satisfies the locally arbitrage-free (locally $0$-neutral) property concludes the proof.
\end{proof}
Usually, the optional stopping theorem involves the statement $\mathbb{E}(X_{\tau}) = \mathbb{E}(X_{0})$,
where $X_k$ is a martingale and $\tau$ a filtration based stopping time. We have already established the analogue of this statement  under the more general hypothesis of $0$-neutrality see (\ref{optionalSamplingIn0Neutral}).

\section{arbitrage-free Markets with Non Martingale Trajectory Sets}  \label{sec:arbitrageFreeMarkets}

 In this short section we follow the work of Cheridito \cite{cheridito} and  impose a natural constraint on the portfolio set $\mathcal{H}$ that allows to provide arbitrage-free markets $\mathcal{S} \times \mathcal{H}$ but $\mathcal{S}$ being such that  can not be the support set of any martingale process.

Clearly, for finite sets  the locally arbitrage-free  property of $\mathcal{S}$ reflects
a basic property of the paths of discrete time martingales. We elaborate
more on this connection in Section \ref{relationToMartingales}.
Possible examples of discrete markets $\mathcal{M}$
which are arbitrage-free but such that $\mathcal{S}$ is not the support set of a martingale process require some additional structure. For example, if transaction costs are incurred each
time $H_j(S)$ is re-balanced, then it is possible to show that $\mathcal{M}$ is arbitrage-free while at the same time contains arbitrage nodes and so
$\mathcal{S}$ can not be recovered, in general, as paths of a martingale process (\cite{ferrando}).

One can also impose some
natural constraints on $\mathcal{H}$ so that $\mathcal{M}$ is still arbitrage-free for cases where $\mathcal{S}$ is not related to a martingale process.
Towards this goal, we present another no arbitrage result which will allow us to present examples of trajectory classes that are not the support set of a martingale process while, at the same time, the market being arbitrage-free.

The  class of examples to be introduced are motivated by Cheridito's result in a continuous-time setting (\cite{cheridito}) where  a constraint is imposed so that transactions can not be performed consecutively if  the time interval between them is smaller than an a-priori given real number
$\delta >0$. Under this constraint, fractional Brownian motion can be proven to be arbitrage-free (see also generalizations in \cite{jarrow} and \cite{bender2}).

As motivation to the formal setting below, we think that there are
continuous-time trajectories $x(t)$ and a set of times $\tau_i(x)$ interpreted as instances when a transaction occurs and, so, a new price is revealed.
There are other sets of times corresponding to each investor
who may potentially re-arrange her portfolio, these times are $\nu_i(x)$,
we require $\nu_i(x) = \tau_{j_i}$ and $ j_i < j_{i+1}$. The analogue of Cheridito constraint in this setting is to require $j_{i+1} \geq j_i +2$. The results below
incorporate this setting.

The conditions below allow to have a local upward or downward trend at some nodes as long as there is the possibility of an immediate opposite correction.
Below, we use the short hand notation $H_{-1}(S) =0$ as a convenient way to impose the hypothesis that market
participants did not have any stock holdings previous to their first trading instance at $i=0$.

\begin{definition} \label{augmentedMarket}
A discrete market  model $\Me = \Se \times \He$ is  said to allow for local fast trends if $\Se$ is local $0$-neutral and the following conditions are satisfied at all $H\in \He$, $S \in \Se$ and $j \geq 0$:
\begin{enumerate}
\item If $H_j(S) \neq H_{j-1}(S)$ then  $H_j(S) = H_{j+1}(S$).

\item For each choice of sign $\pm$:\\
if $(\sup_{\tilde{S} \in \mathcal{S}_{(S,j)}} ~\pm (\tilde{S}_{j+1} - S_{j}) = 0)$,
 then
there exists $S^{\pm} \in \mathcal{S}_{(S,j)}$ so that $(S^{\pm}_{j+1} - S_j) =0$ and
\\
$(\sup_{S' \in \Se_{(S^{\pm},j+1)}} (S'_{j+2} - S^{\pm}_{j+1}) >  0)$ and
$(\inf_{S' \in \Se_{(S^{\pm},j+1)}} ~(S'_{j+2} - S^{\pm}_{j+1}) <  0)$.
\end{enumerate}
Also assume that the stock holdings are liquidated at $N_H(S)$, and so $H_k(S)=0$ for any $k\ge N_H(S)$.
\end{definition}

We have the following result.

\begin{theorem}  \label{fastTrendsNoArbitrageInitiallyBounded}
If a discrete market $\mathcal{M} = \mathcal{S} \times \mathcal{H}$ allows for local fast trends, and $N_H$ is an initially bounded stopping time for any $H\in\He$ then it is arbitrage-free.
\end{theorem}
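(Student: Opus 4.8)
The plan is to verify the criterion of Proposition \ref{arbitrageFreeCharacterization}: for each fixed $H \in \He$ I will show that either $\sum_{i=0}^{N_H(S)-1} H_i(S)\Delta_i S = 0$ for every $S \in \Se$, or there is a trajectory $S^0$ that is $0$-contrarian beyond $n=0$. If $H_i(S)=0$ for all $i$ and all $S$ the first alternative holds trivially, so I may fix a trajectory $S'$ together with a first index $a$ satisfying $H_a(S')\neq 0$ and $H_{a-1}(S')=0$ (using the convention $H_{-1}\equiv 0$), and build a trajectory along which every increment $H_i\,\Delta_i S$ of the portfolio value is $\le 0$, with at least one strict inequality.

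The key local step replaces the \emph{free of local arbitrage} dichotomy used in Theorem \ref{arbitrageFreeCondition} by the fast-trends hypotheses. At any node $(S,j)$ with $h\equiv H_j(S)\neq 0$, local $0$-neutrality of $\Se$ forces $\inf_{\tilde S}(\tilde S_{j+1}-S_j)\le 0\le \sup_{\tilde S}(\tilde S_{j+1}-S_j)$, so one can always select a continuation making $h\,\Delta_j S\le 0$: either a strictly adverse move, when the relevant one-sided extremum (the infimum if $h>0$, the supremum if $h<0$) is nonzero, or, when that extremum equals $0$, the flat continuation furnished by item $(2)$ of Definition \ref{augmentedMarket}. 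This is what lets me drive \emph{every} increment to be non-positive, preventing later favorable terms from cancelling an earlier loss. To force a strictly negative contribution I exploit the two-period window created by item $(1)$: since $a$ is a rebalancing instant ($H_a\neq H_{a-1}$), item $(1)$ yields $H_{a+1}=H_a=h\neq 0$ irrespective of $S^*_{a+1}$, and because the position is liquidated at $N_H$ this forces $a+1<N_H(S^*)$. If $(S^*,a)$ already admits a strictly adverse move I take it; otherwise item $(2)$ supplies a flat step at $a$ (increment $0$) together with an up-down node at $a+1$, where the still-open position $h$ can be moved strictly against. In either case the contribution over $\{a,a+1\}$ is strictly negative.

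I would then assemble these choices into an actual trajectory, following the pattern of the proofs of Theorems \ref{0-neutral} and \ref{arbitrageFreeCondition}: follow $S'$ up to $a$ (so that all increments before $a$ vanish, $H$ being null there), perform the strictly adverse one- or two-step move at $a$, and from $a+2$ on extend greedily keeping every increment $\le 0$, realizing the limiting sequence as a genuine element of $\Se$ by means of the construction results of Appendix \ref{contrarianAuxiliaryMaterial} (Lemmas \ref{contrarianTrajectoryConstruction} and \ref{arbitrageTrajectory}). Here the hypothesis that $N_H$ is an \emph{initially bounded} stopping time is exactly what guarantees that the recursion stabilizes, so that $N_H(S^0)$ is finite and $\sum_{i=0}^{N_H(S^0)-1}H_i(S^0)\Delta_i S^0$ is a finite, strictly negative number; that is, $S^0$ is $0$-contrarian beyond $0$ and Proposition \ref{arbitrageFreeCharacterization} applies.

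I expect the main obstacle to be the coordination between the deferral mechanism and the portfolio's lifetime: the adverse correction promised by item $(2)$ occurs one step \emph{after} the position is opened, so one must be certain that the position is still held, and that trading has not stopped, at that later instant. This is precisely where item $(1)$ (a fresh rebalance is locked in for one further period) together with the liquidation convention $H_k=0$ for $k\ge N_H$ is essential, since it rules out the degenerate open-and-immediately-liquidate block of length one that would otherwise leave an uncompensated favorable move at an arbitrage node. Checking that this pairing is compatible with the initially-bounded stopping-time bookkeeping, so that the greedy tail can always be completed to a trajectory in $\Se$, is the delicate point of the argument.
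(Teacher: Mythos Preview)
Your proposal is correct and follows essentially the same architecture as the paper's proof: locate the first index $a$ with $H_a\neq 0$, use item~(1) to guarantee the position survives to $a+1$, manufacture a strictly negative contribution over $\{a,a+1\}$ via the up-down/flat-then-up-down dichotomy, and then close out with Lemma~\ref{contrarianTrajectoryConstruction}(b) using that $N_H$ is an initially bounded stopping time.

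There is one noteworthy difference in the tail. You propose to keep \emph{every} increment beyond $a+1$ nonpositive, invoking item~(2) of Definition~\ref{augmentedMarket} whenever the relevant one-sided extremum vanishes. The paper instead uses only local $0$-neutrality for the tail, via Lemma~\ref{fugitiveTrajectory} with some $0<\epsilon<-\hat c$, obtaining a tail sum $<\epsilon$ rather than $\le 0$; item~(2) is invoked only once, at the initial node $(S^*,i^*)$. Both routes feed correctly into Lemma~\ref{contrarianTrajectoryConstruction}(b), but the paper's version makes clearer that the fast-trends hypothesis is only needed to produce the single strict loss, while bare $0$-neutrality suffices thereafter.

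One citation to fix: Lemma~\ref{arbitrageTrajectory} assumes the market is \emph{free of local arbitrage}, which is not part of the fast-trends hypotheses, so you cannot invoke it here. Your actual construction does not need it; the relevant appendix tools are Lemma~\ref{fugitiveTrajectory} (if you adopt the paper's $\epsilon$-tail) or simply your own greedy step together with Lemma~\ref{contrarianTrajectoryConstruction}(b).
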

\begin{proof}
Fix $H\in\He$. By Proposition \ref{arbitrageFreeCharacterization}, it is enough to show that $\sum_{i=0}^{N_H(S)-1}~H_i(S)\Delta_iS< 0$ for some $S\in\Se$. 

\noindent
Let $i^{\ast}$ be the smallest integer such that there exist $S^{\ast} \in \Se$
so that $H_{i^{\ast}}(S^{\ast}) \neq 0$. Since $H_{i^{\ast}-1}(S^{\ast})=0 \neq H_{i^{\ast}}(S^{\ast})$, property (1) from Definition \ref{augmentedMarket} gives $H_{i^{\ast}+1}(S^{\ast}) \neq 0$, so $N_H(S^{\ast})>i^{\ast}+1$.

Consider first the case in which $(S^*,i^*)$ is up-down (i.e. the two inequalities in (\ref{cone}) are strict). Then, there exists $\tilde{S}\in\Se_{(S^*,i^*)}$ verifying $H_{i^*}(S^*)\Delta_{i^*}S^*<0$ and by the local $0$-neutrality of $\mathcal{S}$,  property there exists
$\hat{S}\in\Se_{(\tilde{S},i^*+1)}$ satisfying $H_{i^*+1}(\hat{S})\Delta_{i^*+1}\hat{S}\le 0$, thus
\begin{equation}\label{negativeProfit}
\sum_{i=0}^{i^*+1}~H_i(\hat{S})\Delta_i\hat{S}=\hat{c}< 0.
\end{equation}

On the other hand, if at least one inequality in (\ref{cone}) from Definition \ref{localDefinitions} is not strict, it follows that property (2) of Definition \ref{augmentedMarket} holds for
at least one of the signs $\pm$; we will handle both cases with the same argument. Therefore, there exist
$S^{\pm} \in \Se(S^{\ast}, i^{\ast})$, satisfying $(S^{\pm}_{i^{\ast}+1} - S^{\ast}_{i^{\ast}})=0$ and $\hat{S} \in \Se{S}_{(S^{\pm},i^{\ast}+1)}$ such that $H_{i^*+1}(\hat{S})\Delta_{i^*+1}\hat{S}< 0$, so (\ref{negativeProfit}) holds also with this $\hat{S}$.

In either case $H_{i^{\ast}-1}(\hat{S})= H_{i^{\ast}-1}(S^{\ast})= 0 \neq H_{i^{\ast}}(S^{\ast})= H_{i^{\ast}}(\hat{S})$, then again from property (1) from Definition \ref{augmentedMarket}, $H_{i^{\ast}+1}(\hat{S}) = H_{i^{\ast}}(\hat{S})\neq 0$, so $N_H(\hat{S})>i^{\ast}+1$ as well.

We can then apply Lemma \ref{fugitiveTrajectory} with $k\equiv i^*+2$ and $S^k\equiv \hat{S}$ and $0<\epsilon <-\hat{c}$, obtaining a sequence $(S^m)_{m\ge i^*+2}$ verifying 
\begin{equation}\nonumber
\sum_{i=i^*+2}^{n-1} H_{i}(S^m) \Delta_iS^{m}<\sum_{i=i^*+2}^{n-1}\frac \epsilon{2^i}\le \epsilon,
\quad \mbox{for}\quad n: m\ge n > i^*+2.
\end{equation}

Moreover since for all $m\ge i^*+2$, $N_H(S^m)\ge N_H(\hat{S})>i^*+1$, there exists a $0$-contrarian trajectory by an application of  Lemma \ref{contrarianTrajectoryConstruction} $(b)$ in Appendix \ref{contrarianAuxiliaryMaterial}, with $k=0$, $S^m\equiv S^*:\;0\le m\le i^*$, $S^{i^*+1}\equiv \tilde{S}$, or $S^{i^*+1}\equiv S^{\pm}$, $\epsilon = 0$ and $\kappa\equiv i^*+2>0$. Since in that case we have $S^m\in\Se_{(S^{m-1},m-1)}$ for $m>0$ and
\begin{equation}\nonumber
\sum_{i=0}^{n-1} H_{i}(S^m) \Delta_iS^{m}< \hat{c}+ \sum_{i=i^*+2}^{n-1}\frac \epsilon{2^i}\le \hat{c}+\epsilon<0,
\quad \mbox{for}\quad n: m\ge n \ge i^*+2.
\end{equation}

\end{proof}

\vspace{.1in}
\noindent
{\bf Example: Fast Local Trend Market Free of Arbitrage.}\\
In order to provide a general example of a discrete market $\mathcal{M}$ satisfying the hypothesis  of Theorem \ref{fastTrendsNoArbitrageInitiallyBounded}, consider
a trajectory set $\mathcal{S}^C$ satisfying items 2 and 3 from Definition \ref{augmentedMarket}. Let $\tau =\{\tau_i\}$ be a non-decreasing sequence
of stopping times defined on $\mathcal{S}^C$ satisfying
$\tau_0(S)=0$ for all $S \in \mathcal{S}$ also, if $\tau_{k+1}(S) > \tau_{k}(S)$,
then $\tau_{k+1}(S) \geq \tau_{k}(S)+2$. Given an arbitrary  set of non-anticipative portfolios  $\mathcal{H}$
define, for each $H \in \mathcal{H}$:
\begin{equation} \nonumber
H^C_k(S) = H_{\tau_{\hat{k}}(S)}(S),~\mbox{where}
~\hat{k}~\mbox{is the largest integer such that} ~\tau_{\hat{k}}(S) \leq k.
\end{equation}
Therefore, if $H^C_{k}(S) \neq H^C_{k-1}(S)$ it follows that $H^C_{k}(S) = H^C_{k+1}(S)$. Indeed,  $\tau_{\widehat{k-1}}(S)\le k-1<\tau_{\hat{k}}(S)\le k$, then $\tau_{\hat{k}}(S)=k$, which implies $\tau_{\hat{k}+1}(S)=k$ or $k+2$, thus $\tau_{\widehat{k+1}}(S)=k$. It also follows that the portfolios $H^C$ are non-anticipative: assume $S'_i=S_i,\;0\le i\le k$; since $\tau_{\hat{k}}(S)\le k$ it follows that
$\tau_{\hat{k}}(S')=\tau_{\hat{k}}(S)$, then $\hat{k}\le \hat{k}'$ (the largest such that $\tau_{\hat{k}'}(S') \leq k$). By symmetry $\hat{k}'\le \hat{k}$, thus $\tau_{\hat{k}'}(S')=\tau_{\hat{k}}(S)$ and
\begin{equation} \nonumber
H^C_k(S) = H_{\tau_{\hat{k}}(S)}(S)=H_{\tau_{\hat{k}}(S)}(S')=H_{\tau_{\hat{k}'}(S')}(S')=H^C_k(S').
\end{equation}
Therefore, item $(1)$  from Definition \ref{augmentedMarket}
is satisfied and hence $\mathcal{M} = \mathcal{S}^C \times \mathcal{H}^C$
satisfies the hypothesis of Theorem \ref{fastTrendsNoArbitrageInitiallyBounded} and so it is arbitrage-free.

Notice that condition $(2)$ from Definition \ref{augmentedMarket}, without imposing
conditions $(1)$ and $(3)$ as well, allows for an arbitrage opportunity. Condition
$(2)$ is  the local $0$-neutral condition introduced in previous sections.

\section{Relation to Risk Neutral Pricing}  \label{relationToMartingales}

This section defines a discrete market $\mathcal{M}$ from a continuous-time martingale market. The results give some perspective to our approach and allow to establish connections between the minmax bounds and risk neutral pricing. Trajectory spaces  are defined by stopping times samples of continuous-time martingale paths. A main point to emphasize is that the $0$-neutral property holds due to the discrete sampling via stopping times and the martingale property.

Consider a stochastic market model consisting of a probability space $(\Omega, \mathcal{F}, P)$  where
$\mathcal{F} =\{ \mathcal{F}_t\}_{0 \leq t \leq T}$ is a continuous-time filtration.
Also there is an adapted process $X = \{ X_t\}_{0 \leq t \leq T}$ taking values on $\mathbb{R}$,
we also assume  $\mathcal{F}_0$ is the trivial sigma algebra.
Moreover,  there exists
 a measure $Q$, equivalent to $P$,
such that $X$ is a martingale relative to $\mathcal{F}$ and $Q$. This setting represents an arbitrage-free (in a stochastic sense),  $1$-dimensional market with a deterministic Bank account with $0$ interest rates. An European payoff $Y$ is a real valued function defined on $\Omega$, non-negative,  $\mathcal{F}_T$-measurable with respect to $Q$.
A risk neutral price of such a claim
is then given by $\mathbb{E}_{Q}(Y)$, where the expectation is with respect to a measure $Q$.

Naturally, we assume that quantities defined on $\Omega$ are only defined a.e., we will not explicitly indicate this fact in every instance but will do so in critical aspects of the constructions. The context should make it easy to realize  if we are referring to filtration-based stopping times or trajectory-based stopping times.

\subsection{Martingale Trajectory Market:}\hspace{2in}

\vspace{.1in}
\noindent

A sequence of (filtration-based) stopping times $\tau = \{\tau_i\}$, relative to the filtration $\mathcal{F}$, is said to be admissible if $ \tau_i \leq \tau_{i+1}$, $0 = \tau_0$ and, for a given
$\omega$, there exists a smallest integer $M = M_{\tau}(\omega)$ such that $\tau_M(\omega)=T$. All sequences of stopping times considered in the remaining of this section are admissible, this fact may not be explicitly indicated.
For simplicity, we may write $X_{\tau_i(\omega)}(\omega)$, and related quantities, as $X_{\tau_i}(\omega)$.

On the stochastic side, at some points we will
look at portfolios of the form
\begin{equation} \nonumber 
u^y_0 + \sum_{i=0}^{M_{\tau}-1} U^y_{i}~ (X_{\tau_{i+1}} - X_{\tau_i}),~\mbox{for a constant}~~ u^y_0,
\end{equation}
where the investment $U^y_{i}$
is $\mathcal{F}_{\tau_i}$-measurable. For technical reasons we will assume there exists a countable subset $C$ of $[0, T]$, with $0, T \in C$,
and the quantities $U_{i}(\omega)$ depend
only on $X_s$, $s \in C$. This assumption is formalized next.

We will assume that all  $\tau= \{\tau_i\}$ are such that
the $\tau_i$ take values on $C$. Let $\Omega_0$ be a set of full measure where all random variables $X_s,~s \in C,$ are defined and let $\Omega_0(\tau)$ be a set of full measure  contained in $\Omega_0$ where all random variables $ \{X_{\tau_i}\}$
are defined.

For given $\tau$  and $\omega \in \Omega_0(\tau)$
define:
\begin{equation} \nonumber 
~x_{\omega, \tau_i}:C \rightarrow \mathbb{R}~\mbox{by}~~x_{\omega, \tau_i}(s) = X_{s \wedge \tau_i(\omega)}(\omega),
\end{equation}
also set
\begin{equation} \label{portfoliosActingOnSomeTrajectories}
\mathcal{U}(\tau) = \{U = \{U_i\}_{i\geq 0}:U_i: \Omega_0(\tau) \rightarrow \mathbb{R},
U_i(\omega) = {\bf 1}_{\{M_{\tau}> i\}}(\omega)F^U_i(x_{\omega, \tau_i})\},
\end{equation}
where
\begin{equation} \nonumber
F^U_i:\mathbb{R}^{C} \rightarrow \mathbb{R}, \mbox{ a bounded  and Borel measurable function}.
\end{equation}

Recall that the Borel subsets of $\mathbb{R}^C$, denoted by $\mathcal{B}(\mathbb{R}^C)$,  are generated by the family of cylinders
\[\{x\in \mathbb{R}^C: x(c_j)\in \Gamma, 1\le j\le n\},\]
with $\Gamma\in \mathcal{B}(\mathbb{R})$.

\begin{corollary}\label{measurabilityOfU_i}
Let $U = \{U_i\}_{i\ge 0} \in \mathcal{U}(\tau)$. Assume $M_{\tau}$ is a $\mathcal{F}_{\tau}\equiv\{\mathcal{F}_{\tau_i}\}_{i\ge 0}$ stopping time. Then
$U_i \in \mathcal{F}_{\tau_i}$ for all $i\ge 0$.
\end{corollary}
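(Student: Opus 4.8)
The plan is to treat the two factors in $U_i(\omega) = {\bf 1}_{\{M_\tau > i\}}(\omega)\,F^U_i(x_{\omega,\tau_i})$ independently and then invoke that a product of $\mathcal{F}_{\tau_i}$-measurable functions is again $\mathcal{F}_{\tau_i}$-measurable. The indicator factor is immediate: because $M_\tau$ is a stopping time for the discrete filtration $\mathcal{F}_\tau=\{\mathcal{F}_{\tau_j}\}_{j\ge 0}$, one has $\{M_\tau\le i\}\in\mathcal{F}_{\tau_i}$, and therefore $\{M_\tau> i\}=\{M_\tau\le i\}^c\in\mathcal{F}_{\tau_i}$, so ${\bf 1}_{\{M_\tau>i\}}$ is $\mathcal{F}_{\tau_i}$-measurable.

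The content of the statement is the measurability of $\omega\mapsto F^U_i(x_{\omega,\tau_i})$. Since $F^U_i$ is Borel on $\mathbb{R}^C$ and $\mathcal{B}(\mathbb{R}^C)$ is generated by the cylinders recalled just before the corollary (equivalently, by the coordinate maps), it suffices to show that the $\mathbb{R}^C$-valued map $\omega\mapsto x_{\omega,\tau_i}$ is $(\mathcal{F}_{\tau_i},\mathcal{B}(\mathbb{R}^C))$-measurable; and for a map into a product space equipped with the product $\sigma$-algebra this reduces to checking each coordinate, i.e. that $\omega\mapsto x_{\omega,\tau_i}(s)=X_{s\wedge\tau_i(\omega)}(\omega)$ is $\mathcal{F}_{\tau_i}$-measurable for every $s\in C$. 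Composing with $F^U_i$ then yields the claim.

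To carry out this coordinate step I would fix $s\in C$ and a Borel set $B\subseteq\mathbb{R}$ and split over the (countably many) values of $\tau_i$:
\begin{equation} \nonumber
\{X_{s\wedge\tau_i}\in B\}=\bigcup_{c\in C}\big(\{X_{s\wedge c}\in B\}\cap\{\tau_i=c\}\big).
\end{equation}
The decisive point is that the $\tau_i$ take values in the countable set $C$, so the discrete description of the $\sigma$-algebra is available: a set $A$ belongs to $\mathcal{F}_{\tau_i}$ if and only if $A\cap\{\tau_i=c\}\in\mathcal{F}_c$ for every $c\in C$. I would verify this criterion for $A=\{X_{s\wedge\tau_i}\in B\}$. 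For fixed $c$ one has $A\cap\{\tau_i=c\}=\{X_{s\wedge c}\in B\}\cap\{\tau_i=c\}$, where $\{\tau_i=c\}\in\mathcal{F}_c$ because $\tau_i$ is a stopping time, while $\{X_{s\wedge c}\in B\}\in\mathcal{F}_{s\wedge c}\subseteq\mathcal{F}_c$ because $X$ is adapted and $s\wedge c\le c$. Hence $A\cap\{\tau_i=c\}\in\mathcal{F}_c$ for all $c$, giving $A\in\mathcal{F}_{\tau_i}$.

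I expect this coordinate-wise measurability to be the main obstacle: without some path regularity of $X$ one cannot in general assert that the stopped value $X_{s\wedge\tau_i}$ lies in $\mathcal{F}_{\tau_i}$. What rescues the argument is precisely the standing assumption that the $\tau_i$ are $C$-valued, combined with the reduction to coordinates, so that bare adaptedness of $X$ suffices. Assembling the pieces, $\omega\mapsto x_{\omega,\tau_i}$ is $\mathcal{F}_{\tau_i}$-measurable, hence so is its composition with the Borel function $F^U_i$, and multiplying by the $\mathcal{F}_{\tau_i}$-measurable indicator ${\bf 1}_{\{M_\tau>i\}}$ shows $U_i\in\mathcal{F}_{\tau_i}$.
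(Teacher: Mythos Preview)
Your proof is correct and follows essentially the same route as the paper: factor $U_i$ as an indicator times $F^U_i\circ\phi$ with $\phi(\omega)=x_{\omega,\tau_i}$, reduce measurability of $\phi$ to that of each coordinate $X_{s\wedge\tau_i}$, and then compose. The paper isolates the measurability of $\phi$ as a separate lemma and verifies $X_{s\wedge\tau_i}^{-1}(\Gamma)\in\mathcal{F}_{\tau_i}$ via the defining criterion $A\cap\{\tau_i\le t\}\in\mathcal{F}_t$ (splitting into the cases $t\le s$ and $s<t$), whereas you use the equivalent discrete-valued characterization $A\cap\{\tau_i=c\}\in\mathcal{F}_c$ for all $c\in C$; both hinge on the same countability of $C$ and yield the same conclusion.
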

\begin{proof}
Fix $i\ge 0$. Consider the function $\phi:(\Omega,\mathcal{F}_{\tau_i}) \to (\mathbb{R}^C,\mathcal{B}(\mathbb{R}^C))$, defined by $\phi(\omega)=x_{\omega, \tau_i}$. Lemma \ref{measurableMap}, in Appendix \ref{riskNeutralPricingAuxiliaryMaterial}, shows that $\phi$ is measurable. It follows that for $F_i$ as in (\ref{portfoliosActingOnSomeTrajectories}) and any $\Gamma \in \mathcal{B}(\mathbb{R})$
\[(F_i\circ \phi)^{-1}(\Gamma) = \phi^{-1}(F_i^{-1}(\Gamma))\in \mathcal{F}_{\tau_i},\]
since $F_i^{-1}(\Gamma)\in \mathcal{B}(\mathbb{R}^C)$, thus $F_i\circ \phi$ is $\mathcal{F}_{\tau_i}$-measurable. Given that $U_i={\bf 1}_{\{M_{\tau}> i\}}(F_i\circ \phi)$, and  ${\bf 1}_{\{M_{\tau}> i\}}$ is $\mathcal{F}_{\tau_i}$-measurable, $U_i \in \mathcal{F}_{\tau_i}$ too.
\end{proof}

Regarding the trajectories defined below: we sample, a finite (but arbitrary) number of times, every random trajectory, we record those values
as well as other information that will be needed (see comments below).

Given  $\tau$, define:
\begin{equation} \label{oneSamplePathsOfMartingale}
\Swe(\tau) = \{ \mathbf{S} = (S, W)= \{(S_i, W_i)\}_{i\geq 0}: \exists ~\omega \in \Omega_0(\tau),~~~ S_i = X_{\tau_i(\omega)}(\omega), W_i= (\tau_i(\omega), x_{\omega,\tau_i})\}.
\end{equation}
Also define
\begin{equation}  \nonumber 
\Swe\equiv \cup_{\tau}\mathcal{S}^{\mathcal{W}}(\tau)~~\mbox{where the union  is taken  over admissible sequences of stopping times}.
\end{equation}

The inclusion of  $\tau_i({\omega})$ and $x_{\omega,\tau_i}$ in $W_i$ allow the functions $H_i$ and $N_H$ (defined below) to be  well defined and $H$ to be non-anticipative. Equality is defined as follows.
Let $(S, W) \in \Swe(\tau)$, $(S', W') \in \Swe(\tau')$ then: $(S, W) = (S', W')$ if and only if  $X_{\tau_i}(\omega)= X_{\tau'_i}(\omega')$,$\tau_i(\omega) = \tau'_i(\omega')$, and
$x_{\omega,\tau_i}= x_{\omega',\tau'_i},\;$ $\forall ~i \geq 0$.


As a shorthand notation, the  association, between martingale paths, the stopping time and the trajectory values,  described in (\ref{oneSamplePathsOfMartingale}) will be denoted by ${\bf S} \leftrightarrows X_{\tau}(\omega)$.

Define:
\begin{equation}  \label{martingaleTransformsAreThePortfolios}
\mathcal{H} = \{H =\{H_i\}_{i\geq 0}: H_i: \mathcal{S}^{\mathcal{W}} \rightarrow \mathbb{R}\},
\end{equation}
where the functions $H_i$ are defined as follows: there exist a bounded Borel function $F_i: \mathbb{R}^{C} \rightarrow \mathbb{R}$
with the property that, for ${\bf S} \leftrightarrows X_{\tau}(\omega)$
\begin{equation} \label{definitionOfPortfolio}
H_i({\bf S}) = {\bf 1}_{\{M_{\tau}> i\}}(\omega)F_i(x_{\omega, \tau_i}),\qquad\mbox{and}\qquad N_H({\bf S}) = M_{\tau}(\omega).
\end{equation}

\noindent
Proposition \ref{nonAnticipativePortfolios} below  shows that $H$ and $N_H$ are well defined and  $H$ is non-anticipative. We allow arbitrary values for $V_H(0, S_0)$, initial portfolio values, and define the bank account value sequence $\{B_i\}$ such that portfolios are self financing as indicated in Remark \ref{alternativeConstrOfSFPort}.

The  association described in (\ref{definitionOfPortfolio}) will be denoted by $H \leftrightarrows F$.

\begin{proposition} \label{nonAnticipativePortfolios}
Portfolios $H$ and functions $N_H$ introduced by (\ref{martingaleTransformsAreThePortfolios}) and (\ref{definitionOfPortfolio}) are well defined on $\mathcal{S}^{\mathcal{W}}$ and portfolios $H$ are non-anticipative as well.
\end{proposition}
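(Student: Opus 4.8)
The plan is to verify three things in turn: that $N_H$ is well defined on $\Swe$, that each $H_i$ is well defined on $\Swe$, and that $H$ is non-anticipative. The recurring issue is that a single trajectory ${\bf S} \in \Swe$ may admit several representations ${\bf S} \leftrightarrows X_{\tau}(\omega)$ coming from distinct pairs $(\tau, \omega)$; the definitions (\ref{definitionOfPortfolio}) are phrased in terms of such a pair, so I must show the resulting values are independent of the chosen representation.

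First I would treat $N_H$. Suppose ${\bf S} \leftrightarrows X_{\tau}(\omega)$ and ${\bf S} \leftrightarrows X_{\tau'}(\omega')$. By the definition of equality in $\Swe$, in particular $\tau_i(\omega) = \tau'_i(\omega')$ for every $i \geq 0$. Since $T \in C$ and each sequence is admissible, $M_{\tau}(\omega)$ is the smallest index $m$ with $\tau_m(\omega) = T$; as the two stopping-time sequences agree coordinatewise at $\omega, \omega'$, the smallest such index coincides, so $M_{\tau}(\omega) = M_{\tau'}(\omega')$ and $N_H({\bf S})$ is unambiguous. For $H_i$, the same equality of representations gives $x_{\omega, \tau_i} = x_{\omega', \tau'_i}$, whence $F_i(x_{\omega,\tau_i}) = F_i(x_{\omega', \tau'_i})$ because $F_i$ is a fixed function on $\mathbb{R}^C$; combined with the just-proven equality of $M_{\tau}(\omega)$ and $M_{\tau'}(\omega')$, and hence of the indicators ${\bf 1}_{\{M_{\tau} > i\}}$, this shows $H_i({\bf S})$ is well defined.

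The non-anticipativity is where the main obstacle lies, and it is also the reason $W_i$ must record the pair $(\tau_i(\omega), x_{\omega, \tau_i})$. Take ${\bf S} \leftrightarrows X_{\tau}(\omega)$ and ${\bf S}' \leftrightarrows X_{\tau'}(\omega')$ with ${\bf S}_k = {\bf S}'_k$ for $0 \leq k \leq i$, so that in particular $W_i = W'_i$, i.e. $\tau_i(\omega) = \tau'_i(\omega')$ and $x_{\omega, \tau_i} = x_{\omega', \tau'_i}$. The $F_i$ factors then agree as before. The delicate point is the indicator ${\bf 1}_{\{M_{\tau} > i\}}$: here $M_{\tau}$ is a global quantity, yet the two trajectories are only assumed to agree up to index $i$, so I cannot simply reuse the argument above.

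The key observation I would isolate is that admissibility and monotonicity of $\tau$ yield the local characterization $\{M_{\tau} > i\} = \{\tau_i < T\}$: indeed $M_{\tau}(\omega) > i$ holds precisely when $\tau_i(\omega) < T$, since $M_{\tau}(\omega)$ is the first index at which the value $T$ is attained and the sequence is non-decreasing. Thus the indicator depends on $\omega$ only through $\tau_i(\omega)$, which is exactly the information recorded in the first coordinate of $W_i$. From $\tau_i(\omega) = \tau'_i(\omega')$ the two indicators therefore coincide, and combined with the agreement of the $F_i$ factors this gives $H_i({\bf S}) = H_i({\bf S}')$, establishing non-anticipativity with respect to both coordinates of the augmented trajectory.
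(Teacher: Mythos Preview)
Your proposal is correct and follows essentially the same approach as the paper: both proofs use that the equality of $W$-coordinates forces $\tau_i(\omega)=\tau'_i(\omega')$ and $x_{\omega,\tau_i}=x_{\omega',\tau'_i}$, and deduce the agreement of the indicators and the $F_i$ factors from this. Your characterization $\{M_{\tau}>i\}=\{\tau_i<T\}$ is a slightly cleaner packaging of the indicator argument than the paper's symmetric implication $M_{\tau}(\omega)\le i \Leftrightarrow M_{\tau'}(\omega')\le i$, but the underlying idea is identical.
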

\begin{proof} Let  ${\bf S}=(S, W),\; {\bf S}'=(S', W')\in\Swe$ with ${\bf S} \leftrightarrows X_{\tau}(\omega)$ and $ {\bf S}' \leftrightarrows X_{\tau'}(\omega')$.

\noindent 1. Assume that ${\bf S}={\bf S}'$, then $W = W'$ implies $\tau'_{j}(\omega')=\tau_{j}(\omega)~\forall j \geq 0$. Thus
\[\tau'_{j}(\omega')=\tau_{j}(\omega)<T \;\; \mbox{for}\;\; 0\le j <M_{\tau}(\omega)\quad \mbox{so} \quad  M_{\tau'}(\omega')\ge M_{\tau}(\omega).\]
Since the former reasoning is symmetric, $M_{\tau'}(\omega') = M_{\tau}(\omega)$, and $N_H$ is well defined.

Fix now $i\ge 0$, therefore by the previous statement, $M_{\tau}(\omega)> i$ if and only if $M_{\tau'}(\omega')> i$. Moreover, since $x_{\omega,\tau_i}=x_{\omega',\tau'_i}$, it follows that $H_i({\bf S}')={\bf 1}_{\{M_{\tau'}> i\}}(\omega')F_i(x_{\omega', \tau'_i})={\bf 1}_{\{M_{\tau}> i\}}(\omega)F_i(x_{\omega, \tau_i})=H_i({\bf S})$. This shows that $H$ is well defined.

\noindent 2. To prove that $H$ is no anticipative, let $i \geq 0$ fixed and assume $(S_k,W_k) = (S'_k, W'_k)$, $k=0, \ldots, i$. We need to prove that $H_i({\bf S}) = H_i({\bf S}')$. Observe that $M_{\tau}(\omega)\le i$ if and only if $M_{\tau'}(\omega')\le i$. Indeed, if $M_{\tau}(\omega)=N \le i$, then $\tau'_{N}(\omega')=\tau_{N}(\omega)=T,\;$ so $M_{\tau'}(\omega')\le N \le i$ and, consequently, $H_i({\bf S})=H_i({\bf S}')$ if that is the case.

On the other hand, since $x_{\omega,\tau_k}=x_{\omega',\tau'_k}$ for $0\le k \le i$, again
\[H_i({\bf S}')={\bf 1}_{\{M_{\tau'}> i\}}(\omega')F_i(x_{\omega', \tau'_i})={\bf 1}_{\{M_{\tau}> i\}}(\omega)F_i(x_{\omega, \tau_i})=H_i({\bf S}).\]
\end{proof}

Define the discrete {\it martingale trajectory markets}
\begin{equation}  \nonumber
\mathcal{M} = \mathcal{S}^{\mathcal{W}}~\times \mathcal{H}~~\mbox{and}~~
\mathcal{M}(\tau) = \mathcal{S}^{\mathcal{W}}(\tau)~\times \mathcal{H},
\end{equation}
where, in the case of $\mathcal{M}(\tau)$, portfolios $H$ act on $\mathcal{S}^{\mathcal{W}}(\tau) \subseteq \mathcal{S}^{\mathcal{W}}$ by restriction.

\begin{remark}
To alleviate notation, we will write $\mathcal{S}^{\mathcal{W}}_{({\bf S}, k)}(\tau)$ as $\mathcal{S}_{({\bf S}, k)}(\tau)$. Similarly, we may write $\mathcal{S}^{\mathcal{W}}_{({\bf S}, k)}$ as $\mathcal{S}_{({\bf S}, k)}$.
\end{remark}



As preparation for the next result, let $\mathcal{P}= \mathcal{P}(P) $ be the set of all martingale probability measures equivalent to $P$ and $\mathbb{E}_Q(Y)$ denotes expectation with respect to probability measure $Q$.

For the next two results in this section, we are going to assume conditions under which $\{X_{\tau_i}\}_{i\ge 0}$ behaves as a martingale with respect to ${\mathcal F}_{\tau} \equiv \{{\mathcal F}_{\tau_i}\}_{i\ge 0}$; namely:
\[
\mathbb{E}_{Q}[X_{\tau_i}|{\mathcal F}_{\tau_k}] = X_{\tau_k},\quad i\ge 1, \quad k\le i.
\]

\begin{proposition}  \label{boundingMartingalePrices}
Let $Y$ be an European payoff and, for a given $\tau$, define $Z_{\tau}({\bf S})= Y(\omega)$
where $S \leftrightarrows X_{\tau}(\omega)$ and $\omega \in \Omega_0(\tau)$. Assume that $\sup_{\bf S \in \mathcal{S}^{\mathcal{W}}(\tau)}[Z_{\tau}({\bf S}) - \sum_{i=0}^{N_{H^*}({\bf S})-1} H^*_i({\bf S}) \Delta_i S]<\infty$, for some $H^*\in\He$. Consider also that $M_{\tau}$ is a stopping time w.r.t. ${\mathcal F}_{\tau}$, and the hypothesis of Lemma \ref{martingaleDifference}, in Appendix \ref{riskNeutralPricingAuxiliaryMaterial},  are satisfied. Then,
\begin{equation} \label{ourBoundsAreLarger}
\sup_{\tau}~ \underline{V}((S_0, W_0), Z_{\tau}, \mathcal{M}(\tau))\leq \inf_{Q \in \mathcal{P}} \mathbb{E}_Q(Y)
\leq  \sup_{Q \in \mathcal{P}} \mathbb{E}_Q(Y) \leq \inf_{\tau}~   \overline{V}((S_0, W_0),
Z_{\tau}, \mathcal{M}(\tau)).
\end{equation}
\end{proposition}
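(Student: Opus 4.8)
The plan is to dispose of the middle inequality $\inf_{Q\in\mathcal{P}}\mathbb{E}_Q(Y)\le\sup_{Q\in\mathcal{P}}\mathbb{E}_Q(Y)$ at once (it is trivial) and to concentrate on the two outer bounds. These are mirror images of one another under the identity $\underline{V}(\cdot,Z)=-\overline{V}(\cdot,-Z)$ recorded in Definition \ref{priceBounds}, so I would prove the right-hand inequality in full and then indicate that the left-hand one follows by the same computation with the supremum replaced by an infimum, the sign of the trajectory sum reversed, and the inequality direction flipped. Both bounds rest on a single observation: for a fixed admissible $\tau$, a fixed $Q\in\mathcal{P}$, and any $H\in\mathcal{H}$, the trajectory sum realized along $\mathbf{S}\leftrightarrows X_{\tau}(\omega)$ is exactly the martingale transform $\sum_{i=0}^{M_{\tau}(\omega)-1}H_i(\mathbf{S})\,(X_{\tau_{i+1}}-X_{\tau_i})$, and this quantity has zero $Q$-expectation.

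To establish that key observation I would first record that, by (\ref{definitionOfPortfolio}), the random variable $\omega\mapsto H_i(\mathbf{S}(\omega))$ has precisely the form appearing in (\ref{portfoliosActingOnSomeTrajectories}), so Corollary \ref{measurabilityOfU_i} renders it $\mathcal{F}_{\tau_i}$-measurable (this is where the hypothesis that $M_{\tau}$ is an $\mathcal{F}_{\tau}$-stopping time enters). Combining this with the assumed martingale property $\mathbb{E}_{Q}[X_{\tau_i}\mid\mathcal{F}_{\tau_k}]=X_{\tau_k}$ and the integrability hypotheses of Lemma \ref{martingaleDifference}, I obtain $\mathbb{E}_Q\big[\sum_{i=0}^{M_{\tau}-1}H_i(\mathbf{S})(X_{\tau_{i+1}}-X_{\tau_i})\big]=0$; the indicator $\mathbf{1}_{\{M_{\tau}>i\}}$ built into $H_i$, together with $\tau_i=T$ for $i\ge M_{\tau}$, makes the random upper limit harmless. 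The standing finiteness hypothesis then secures integrability of $Y$: choosing $H^{\ast}$ with $C:=\sup_{\mathbf{S}}[Z_{\tau}(\mathbf{S})-\sum_i H^{\ast}_i(\mathbf{S})\Delta_iS]<\infty$ gives $Y(\omega)\le C+\sum_{i}H^{\ast}_i(\mathbf{S})(X_{\tau_{i+1}}-X_{\tau_i})$ a.e., whence $0\le\mathbb{E}_Q(Y)\le C<\infty$ by the zero-expectation identity and $Y\ge 0$.

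For the upper bound I would fix $\tau$, $Q$, and $H$ and note that for $Q$-a.e.\ $\omega\in\Omega_0(\tau)$ the sampled path $\mathbf{S}(\omega)$ belongs to $\mathcal{S}^{\mathcal{W}}(\tau)$, so pointwise
\[
Y(\omega)-\sum_{i=0}^{M_{\tau}-1}H_i(\mathbf{S})\,(X_{\tau_{i+1}}-X_{\tau_i})\ \le\ \sup_{\mathbf{S}'\in\mathcal{S}^{\mathcal{W}}(\tau)}\Big[Z_{\tau}(\mathbf{S}')-\sum_{i=0}^{N_H(\mathbf{S}')-1}H_i(\mathbf{S}')\,\Delta_iS'\Big].
\]
Taking $\mathbb{E}_Q$ and using $Z_{\tau}(\mathbf{S})=Y$ together with the zero-expectation identity, the left side collapses to $\mathbb{E}_Q(Y)$, so $\mathbb{E}_Q(Y)$ is bounded by the supremum; an infimum over $H\in\mathcal{H}$ then yields $\mathbb{E}_Q(Y)\le\overline{V}((S_0,W_0),Z_{\tau},\mathcal{M}(\tau))$, and a supremum over $Q$ followed by an infimum over $\tau$ gives the right-hand inequality of (\ref{ourBoundsAreLarger}). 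The left-hand inequality follows in the same way, the $Q$-expectation now dominating the essential infimum and hence the trajectory infimum.

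The main obstacle is the clean vanishing of the expected martingale transform over the random horizon $M_{\tau}$: this is where $\mathcal{F}_{\tau_i}$-measurability of the $H_i$, the optional-sampling/martingale property of the sampled process, and an integrability condition controlling $\sum_i H_i(X_{\tau_{i+1}}-X_{\tau_i})$ must all be assembled, and it is precisely this bookkeeping that is packaged into Lemma \ref{martingaleDifference} and Corollary \ref{measurabilityOfU_i}. The remaining subtlety is purely measure-theoretic, namely ensuring that for $Q$-a.e.\ $\omega$ the sampled path lies in $\mathcal{S}^{\mathcal{W}}(\tau)$ so that the pointwise domination by the trajectory supremum is legitimate; this is guaranteed by working throughout on the full-measure set $\Omega_0(\tau)$.
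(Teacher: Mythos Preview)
Your proposal is correct and follows essentially the same route as the paper: identify the trajectory sum with the martingale transform $\sum_i U_i(X_{\tau_{i+1}}-X_{\tau_i})$ via (\ref{definitionOfPortfolio}), invoke Corollary \ref{measurabilityOfU_i} for $\mathcal{F}_{\tau_i}$-measurability and Lemma \ref{martingaleDifference} for zero $Q$-expectation, use the $H^{\ast}$ hypothesis together with $Y\ge 0$ to secure integrability of $Y$, bound $Y(\omega)$ pointwise by the trajectory supremum, take expectations, and then optimize over $H$, $Q$, and $\tau$. The only cosmetic difference is that the paper writes both pointwise bounds at once as a sandwich and appeals to $-\mathcal{H}=\mathcal{H}$ for the lower bound, whereas you prove one side and invoke the mirror $\underline{V}=-\overline{V}(-\cdot)$; these amount to the same computation.
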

\begin{proof}
Notice that $\Omega_0(\tau)$, and hence also $\mathcal{S}^{\mathcal{W}}(\tau)$, depends on $P$ only through null sets of $P$; therefore, it remains unchanged if defined through any $Q \in \mathcal{P}$.

For $i\ge 1$ set $\Omega_i(\tau)\equiv\{\omega\in \Omega_0(\tau): M_{\tau}(\omega)> i\}$, it is clear that $\Omega_i(\tau)\in \mathcal{F}_{\tau_i}$, and if $M_{\tau}$ is bounded $\Omega_0(\tau)=\Omega^c_m(\tau)$ for some $m\ge 1$. Consider $H \in \mathcal{H}$ and ${\bf S}=(S,W)\in \Swe$ with $S  \leftrightarrows  X_{\tau}(\omega)$. For any $i\ge 0$, from definition (\ref{martingaleTransformsAreThePortfolios}), $H_i({\bf S})= {\bf 1}_{\Omega_i(\tau)}(\omega) F_i(x_{\omega, \tau_i})$, with $F_i: \mathbb{R}^{C} \rightarrow \mathbb{R}$ a bounded Borel function. Defining $U_i(\omega) = {\bf 1}_{\Omega_i(\tau)}(\omega)~F_i(x_{\omega, \tau_i})$,  it follows that $U = \{U_i\}_{i\ge 0} \in \mathcal{U}(\tau)$, and  by Corollary \ref{measurabilityOfU_i} $U_i\in \mathcal{F}_{\tau_i}$. We have,
\begin{equation}\label{probabilisticRepresentation}
Y(\omega) - \sum_{i=0}^{M_{\tau}(\omega)-1} U_i(\omega) ~(X_{\tau_{i+1}}(\omega) - X_{\tau_i}(\omega)) = Z_{\tau}({\bf S}) - \sum_{i=0}^{N_H({\bf S})-1} H_i({\bf S}) \Delta_iS \le \sup_{\bf S \in \mathcal{S}^{\mathcal{W}}(\tau)}[Z_{\tau}({\bf S}) - \sum_{i=0}^{N_H({\bf S})-1} H_i({\bf S}) \Delta_i S].
\end{equation}
So, since it holds with $H^*$, and $\sum_{i=0}^{M_{\tau}-1} U_i ~(X_{\tau_{i+1}} - X_{\tau_i})$ is integrable by Lemma \ref{martingaleDifference}, $Y$ also results integrable.

Taking infimum and supremum on $\mathcal{S}^{\mathcal{W}}$ in (\ref{probabilisticRepresentation}), and then expectation w.r.t. $Q \in \mathcal{P}$, 
\begin{equation} \label{readyToTakeInfimum}
- \sup_{\bf S \in \mathcal{S}^{\mathcal{W}}(\tau)}[-Z_{\tau}({\bf S}) - \sum_{i=0}^{N_H({\bf S})-1} -H_i({\bf S}) \Delta_iS] \leq \mathbb{E}_Q[Y - \sum_{i=0}^{M_{\tau}-1} U_i ~(X_{\tau_{i+1}} - X_{\tau_i})] =
\mathbb{E}_Q[Y] \leq
\end{equation}
\begin{equation} \nonumber
\sup_{\bf S \in \mathcal{S}^{\mathcal{W}}(\tau)}[Z_{\tau}({\bf S}) - \sum_{i=0}^{N_H({\bf S})-1} H_i({\bf S}) \Delta_iS],
\end{equation}
where we have used the fact that $\{X_{\tau_i}\}_{i \geq 0}$ is a martingale \cite[Thm 1.86]{medvegyev} and Lemma \ref{martingaleDifference}.


Notice that $-  \mathcal{H}=  \mathcal{H}$, we then obtain (\ref{ourBoundsAreLarger}) by taking supremum and infimum over $H $ in the left hand side and right hand side of  (\ref{readyToTakeInfimum}) respectively and then making use of the fact that $\tau$, admissible, was taken arbitrary in the above arguments.
\end{proof}
\begin{remark} The condition $\sup_{\bf S \in \mathcal{S}^{\mathcal{W}}(\tau)}[Z_{\tau}({\bf S}) - \sum_{i=0}^{N_H({\bf S})-1} H_i({\bf S}) \Delta_i S]<\infty$ required in the previous Proposition, is equivalent to $\Vup(S_0,Z_{\tau},\mathcal{M}_{\tau})<\infty$, so it is satisfied when $Z_{\tau}$ an upper minimax function (see \cite{degano}).
\end{remark}

For the case when $\Omega$ is finite or, more generally, purely atomic, it should be clear that all nodes in $\mathcal{S}^{\mathcal{W}}$ are arbitrage-free nodes according to Definition \ref{localDefinitions} (this can readily obtained from Theorem 3.1 in \cite{taqqu}).

The next result represents the key property connecting martingale trajectory markets with
the formalism of the paper.

\begin{theorem}  \label{veryUseful}
Consider a martingale trajectory market  $\mathcal{M}(\tau)$ which satisfies the conditions of Lemma \ref{martingaleDifference} in Appendix \ref{riskNeutralPricingAuxiliaryMaterial}. Then, for
any ${\bf S} = (S, W) \in  \mathcal{S}^{\mathcal{W}}(\tau)$ and $k \geq 0$,
\begin{equation}\label{Vk 0-neutralTau}
\overline{V}_k({\bf S}, Z\equiv 0, \mathcal{M}(\tau)) = 0.
\end{equation}
That is, $\mathcal{M}(\tau)$ is conditionally $0$-neutral at any  ${\bf S}$ and
for any $k \geq 0$, according to Definition \ref{conditionally0Neutral}.
\end{theorem}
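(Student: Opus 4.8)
The plan is to establish the two inequalities $\overline{V}_k({\bf S},Z\equiv 0,\mathcal{M}(\tau))\le 0$ and $\overline{V}_k({\bf S},Z\equiv 0,\mathcal{M}(\tau))\ge 0$ separately. The first is immediate: since the null portfolio $H\equiv 0$ belongs to $\mathcal{H}$, choosing $H=0$ in the infimum defining $\overline{V}_k$ makes the inner sum vanish, so the supremum over $\mathcal{S}^{\mathcal{W}}_{({\bf S},k)}(\tau)$ equals $0$ and therefore $\overline{V}_k({\bf S},0,\mathcal{M}(\tau))\le 0$.

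For the reverse inequality I would fix an arbitrary $H\in\mathcal{H}$, say $H\leftrightarrows F$, and transfer the problem to the probabilistic side through the correspondence ${\bf S}\leftrightarrows X_{\tau}(\omega)$. Setting $U_i(\omega)={\bf 1}_{\{M_{\tau}>i\}}(\omega)F_i(x_{\omega,\tau_i})$, Corollary \ref{measurabilityOfU_i} gives $U_i\in\mathcal{F}_{\tau_i}$, so $G\equiv\sum_{i=k}^{M_{\tau}-1}U_i\,(X_{\tau_{i+1}}-X_{\tau_i})$ is a martingale transform of the discrete martingale $\{X_{\tau_i}\}$ relative to $\{\mathcal{F}_{\tau_i}\}$. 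Lemma \ref{martingaleDifference} guarantees that $G$ is integrable, and the assumed $\mathcal{F}_{\tau}$-martingale property then yields $\mathbb{E}_Q[\,G\mid\mathcal{F}_{\tau_k}\,]=0$ a.e. For any $\tilde{\bf S}\in\mathcal{S}^{\mathcal{W}}_{({\bf S},k)}(\tau)$ we have $\tilde{\bf S}\leftrightarrows X_{\tau}(\omega')$ with $\omega'$ in the fiber $\{x_{\cdot,\tau_k}=x_{\omega,\tau_k}\}$, and the quantity being maximized equals $-\sum_{i=k}^{N_H(\tilde{\bf S})-1}H_i(\tilde{\bf S})\Delta_i\tilde S=-G(\omega')$. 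Because the regular conditional law $Q(\cdot\mid\mathcal{F}_{\tau_k})(\omega)$ is carried by this fiber and assigns conditional mean $0$ to $G$, its conditional essential infimum of $G$ is $\le 0$, equivalently the conditional essential supremum of $-G$ is $\ge 0$; since the genuine supremum over the trajectory fiber dominates that essential supremum, I obtain $\sup_{\tilde{\bf S}}\bigl(-\sum_{i=k}^{N_H(\tilde{\bf S})-1}H_i(\tilde{\bf S})\Delta_i\tilde S\bigr)\ge 0$. Taking the infimum over $H\in\mathcal{H}$ preserves this bound and gives $\overline{V}_k({\bf S},0,\mathcal{M}(\tau))\ge 0$, which together with the first step proves \eqref{Vk 0-neutralTau}.

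The main obstacle is the measure-theoretic bridge in the last step. The conditioning fiber $\{x_{\cdot,\tau_k}=x_{\omega,\tau_k}\}$ is typically $Q$-null, which is precisely the phenomenon flagged earlier (the sets $\mathcal{S}_{(S,k)}$ are analogues of null sets), so the identity $\mathbb{E}_Q[\,G\mid\mathcal{F}_{\tau_k}\,]=0$ holds only off an exceptional null set and must be transferred to the specific, arbitrary ${\bf S}$, and uniformly across all $H$. The clean route is to observe that the inner supremum depends on ${\bf S}$ only through $x_{\omega,\tau_k}$, so the a.e.\ inequality is one between $\sigma(x_{\cdot,\tau_k})$-measurable objects; one then shrinks the full-measure defining set $\Omega_0(\tau)$ and controls the $H$-dependence of the exceptional sets, exploiting the boundedness of the $F_i$ afforded by Lemma \ref{martingaleDifference} together with an approximation by a countable family of portfolios. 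By contrast, in the purely atomic case mentioned just before the theorem every fiber has positive mass, and the conditional-expectation argument applies verbatim with no such subtlety.
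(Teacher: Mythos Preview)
Your strategy coincides with the paper's: obtain $\le 0$ from $H=0$, and $\ge 0$ by passing to $U_i(\omega)={\bf 1}_{\{M_\tau>i\}}(\omega)F_i(x_{\omega,\tau_i})$ and invoking Lemma~\ref{martingaleDifference}(b) to get $\mathbb{E}_Q\bigl[\sum_{i=k}^{M_\tau-1}U_i(X_{\tau_{i+1}}-X_{\tau_i})\mid\mathcal{F}_{\tau_k}\bigr]=0$.

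The difference lies only in how the conditioning is executed. Rather than invoke regular conditional laws and then repair the null-fiber problem by countable approximation, the paper works with the concrete event
\[
\Omega_{\omega,k}(\tau)=\bigl\{\omega'\in\Omega_0(\tau):\ X_s(\omega')=X_s(\omega)\ \forall\, s\in C\cap[0,\tau_k(\omega)],\ \ \tau_i(\omega')=\tau_i(\omega)\ \text{for }0\le i\le k\bigr\},
\]
proves in a separate lemma that $\Omega_{\omega,k}(\tau)\in\mathcal{F}_{\tau_k}$, and observes that $\omega'\in\Omega_{\omega,k}(\tau)$ forces ${\bf S}'\in\mathcal{S}_{({\bf S},k)}(\tau)$. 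Writing $C$ for the constant $\sup_{\tilde{\bf S}\in\mathcal{S}_{({\bf S},k)}(\tau)}\bigl[-\sum_{i=k}^{N_H-1}H_i(\tilde{\bf S})\,\Delta_i\tilde S\bigr]$, one then has the pointwise bound ${\bf 1}_{\Omega_{\omega,k}(\tau)}(-G)\le {\bf 1}_{\Omega_{\omega,k}(\tau)}\,C$, and conditioning gives $0={\bf 1}_{\Omega_{\omega,k}(\tau)}\,\mathbb{E}_Q[-G\mid\mathcal{F}_{\tau_k}]\le {\bf 1}_{\Omega_{\omega,k}(\tau)}\,C$ a.e., from which the paper reads off $C\ge 0$. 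This avoids disintegration altogether and handles every $H$ individually, so no countable-family argument is needed. The null-set worry you raise is genuine---deducing $C\ge 0$ from the last display still tacitly requires $\Omega_{\omega,k}(\tau)$ not to lie entirely in the exceptional null set (which is automatic in the purely atomic case)---but the paper presents the argument at the level of the indicator inequality above and does not pursue the approximation route you sketch.
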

\begin{proof} Fix $\tau$ admissible, $k\geq 0$ and define, for a given $\omega \in \Omega_0(\tau)$:
\begin{equation} \nonumber
\Omega_{\omega, k}(\tau) = \{\omega' \in \Omega_0(\tau): ~X_s(\omega') = X_s(\omega)~\forall s~ \in C \cap [0, \tau_k(\omega)]~\mbox{and} ~\tau_i(\omega')= \tau_i(\omega),~\mbox{for}~0 \leq i \leq k\},
\end{equation}
Lemma \ref{measurabilityOfOmega}, in Appendix \ref{riskNeutralPricingAuxiliaryMaterial},  shows that $\Omega_{\omega, k}(\tau) \in \mathcal{F}_{\tau_k}$.

Notice that  $\omega' \in \Omega_{\omega, k}(\tau)$ implies
${\bf S}' = (S', W') \in  \mathcal{S}_{({\bf S}, k)}(\tau)$,
where $S \leftrightarrows X_{\tau}(\omega),S' \leftrightarrows X_{\tau}(\omega')$,
this claim is obvious as we have $X_{\tau'_i(\omega')}(\omega')= X_{\tau_i(\omega)}(\omega)$, $0 \leq i \leq k$.
Furthermore, $X_{s \wedge \tau'_i(\omega')}(\omega')= X_{s \wedge  \tau_i(\omega)}(\omega)~\forall ~s~ \in C \cap [0, \tau_k(\omega)]$ and so $W'_i =
W_i$, $0 \leq i \leq k$.

Given $H_i(S,W) = {\bf 1}_{\Omega_i(\tau)}(\omega)F_i(x_{\omega, \tau_i})$ where
$F_i: \mathbb{R}^{C} \rightarrow \mathbb{R}$ is bounded and Borel and $S \leftrightarrows X_{\tau}(\omega)$, define
$U_i(\omega)= {\bf 1}_{\Omega_i(\tau)}(\omega)F_i(x_{\omega, \tau_i})$ for $\omega \in \Omega_0(\tau)$.
From the above claim, the following holds everywhere on $ \Omega_{\omega, k}(\tau)$:
\begin{equation} \nonumber
[-\sum_{i=k}^{M_{\tau}(\omega)-1} U_{i}(\omega)~~(X_{\tau_{i+1}} - X_{\tau_i})(\omega)]  =
[-\sum_{i=k}^{N_H({\bf S})-1} H_i({\bf S}) ~~(S_{i+1} - S_i)] \leq 
 \sup_{\tilde{\bf S} \in \mathcal{S}_{({\bf S}, k)}(\tau)}[-\sum_{i=k}^{N_H(\tilde{\bf S})-1} H_i(\tilde{\bf S})~\Delta_i\tilde{\bf S}],
\end{equation}

\noindent Under assumptions of Lemma \ref{martingaleDifference} ($b$),
\begin{equation}  \nonumber
0 = {\bf 1}_{\Omega_{\omega, k}(\tau)} \mathbb{E}[-\sum_{i=k}^{M_{\tau}-1} U_{i}~~
(X_{\tau_{i+1}} - X_{\tau_{i}})|\mathcal{F}_{\tau_k}] \}
=  \mathbb{E}[{\bf 1}_{\Omega_{\omega, k}(\tau)}~~-\sum_{i=k}^{M_{\tau}-1} U_{i}~~
(X_{\tau_{i+1}} - X_{\tau_{i}})|\mathcal{F}_{\tau_k}]
\}~\leq
\end{equation}
\begin{equation} \nonumber
 \le {\bf 1}_{\Omega_{\omega, k}(\tau)}\sup_{\tilde{\bf S} \in \mathcal{S}_{({\bf S}, k)}(\tau)}[-\sum_{i=k}^{N_H(\tilde{\bf S})-1} H_i(\tilde{\bf S})~\Delta_i\tilde{\bf S}].
\end{equation}
Therefore
\begin{equation} \nonumber
0 \leq ~\sup_{\tilde{\bf S} \in \mathcal{S}_{({\bf S}, k)}(\tau)}[-\sum_{i=k}^{N_H(\tilde{\bf S})-1} H_i(\tilde{\bf S})~\Delta_i\tilde{\bf S}]
\end{equation}
holds for all ${\bf S}= (S, W) \in \mathcal{S}^{\mathcal{W}}(\tau)$ and so:
\begin{equation} \label{greaterOrEqualThanZero}
0 \leq \inf_{H \in \mathcal{H}} \sup_{\tilde{\bf S} \in \mathcal{S}_{({\bf S}, k)}(\tau)}[-\sum_{i=k}^{N_H(\tilde{\bf S})-1} H_i(\tilde{\bf S})~\Delta_i\tilde{\bf S}]
\end{equation}
As the portfolio $H_i =0,~~\forall ~i \geq 0$,  is in $\mathcal{H}$, it follows
from (\ref{greaterOrEqualThanZero}) that (\ref{Vk 0-neutralTau})
holds.
\end{proof}
\noindent
Theorem \ref{veryUseful} extends trivially to martingale trajectory sets of the form
$\mathcal{S}^{\mathcal{W}} = \cup_{\tau} \mathcal{S}^{\mathcal{W}}(\tau)$.

\vspace{.1in}
$Y: \Omega \rightarrow \mathbb{R}$ is called $\tau$-attainable if there exists an admissible $\tau$ and $U^y \in \mathcal{U}(\tau)$ such that
\begin{equation} \nonumber 
Y= u^y_0 + \sum_{i=0}^{M_{\tau}-1} U^y_{i}(\tau)~ (X_{\tau_{i+1}} - X_{\tau_i}),~\mbox{ a.e. for a constant}~~ u^y_0.
\end{equation}

\begin{theorem}  \label{riskNeutralMinMaxConnection2} Consider a martingale trajectory market  $\mathcal{M}(\tau)$ which satisfies the conditions of Lemma \ref{martingaleDifference} in Appendix \ref{riskNeutralPricingAuxiliaryMaterial}.
Let $Y$ be $\tau$-attainable
and define $Z({\bf S}) = Z_{\tau}({\bf S})= Y(\omega)$ where $S \leftrightarrows X_{\tau}(\omega)$ and $\omega \in \Omega_0(\tau)$. Then
\begin{equation} \nonumber
Y_k(\omega) = \overline{V}_k({\bf S}, Z, \mathcal{M}(\tau)) = \underline{V}_k({\bf S}, Z, \mathcal{M}(\tau)),~~\forall~\omega \in \Omega_0(\tau),~ ~\mbox{and}~~\forall~~0 \le k \le M_{\tau(\omega)},
\end{equation}
where $Y_k \equiv   u^y_0 +\sum_{i=0}^{k-1} U^y_{i} (X_{\tau_{i+1}} - X_{\tau_i})$, holds everywhere on $\Omega_0(\tau)$. Moreover $Y_k = \mathbb{E}(Y| \mathcal{F}_{\tau_k})$
a.e. on $\Omega$ and $\mathbb{E}(\cdot) = \mathbb{E}_{Q}(\cdot) $ and $Q$ is any martingale measure equivalent to $P$.
\end{theorem}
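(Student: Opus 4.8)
The plan is to recognize that $\tau$-attainability of $Y$ is exactly the probabilistic shadow of attainability of $Z$ in the discrete market $\mathcal{M}(\tau)$, so that the pointwise identity for the bounds follows from Corollary \ref{exactReconstructionAndLinearity}$(a)$, while the conditional-expectation identity follows from the martingale-transform structure together with Lemma \ref{martingaleDifference}. First I would convert the replicating data $(u^y_0, U^y)$ for $Y$ into a portfolio in $\mathcal{H}$. Writing $U^y_i(\omega) = \mathbf{1}_{\{M_\tau > i\}}(\omega)\,F^y_i(x_{\omega,\tau_i})$ with $F^y_i$ bounded Borel, I would let $H^z \leftrightarrows F^y$ be the associated portfolio from (\ref{definitionOfPortfolio}) and set $V_{H^z} = u^y_0$. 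By Proposition \ref{nonAnticipativePortfolios} this $H^z$ is a well-defined non-anticipative element of $\mathcal{H}$ with $N_{H^z}(\mathbf{S}) = M_\tau(\omega)$, and since $-F^y$ is again bounded Borel, $-H^z \in \mathcal{H}$ as well.

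Next I would transfer the replication identity. For $\mathbf{S} \leftrightarrows X_\tau(\omega)$ one has $\Delta_i S = X_{\tau_{i+1}}(\omega) - X_{\tau_i}(\omega)$ and $H^z_i(\mathbf{S}) = U^y_i(\omega)$, so
\[
V_{H^z} + \sum_{i=0}^{N_{H^z}(\mathbf{S})-1} H^z_i(\mathbf{S})\,\Delta_i S = u^y_0 + \sum_{i=0}^{M_\tau(\omega)-1} U^y_i(\omega)\,(X_{\tau_{i+1}} - X_{\tau_i})(\omega) = Y(\omega) = Z(\mathbf{S}),
\]
which says precisely that $Z$ is attainable with portfolio $H^z$ and initial value $V_{H^z}$; the same computation truncated at $k$ gives $V_{H^z}(k, \mathbf{S}) = Y_k(\omega)$. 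I would then check that the hypotheses of Corollary \ref{exactReconstructionAndLinearity}$(a)$ hold: $-H^z \in \mathcal{H}$ was just noted, and the required interval condition $\underline{V}_k \le \overline{V}_k$ comes from Theorem \ref{veryUseful} (which makes $\mathcal{M}(\tau)$ conditionally $0$-neutral at every $(\mathbf{S},k)$) fed into Theorem \ref{havingAnIntervalTheorem}, using that $\mathcal{H}+\mathcal{H} = \mathcal{H}$ here (sums of bounded Borel functions are bounded Borel and $N_H = M_\tau$ is unchanged and is a stopping time). Corollary \ref{exactReconstructionAndLinearity}$(a)$ then yields $\overline{V}_k(\mathbf{S}, Z, \mathcal{M}(\tau)) = \underline{V}_k(\mathbf{S}, Z, \mathcal{M}(\tau)) = V_{H^z}(k, \mathbf{S}) = Y_k(\omega)$ for every $\omega \in \Omega_0(\tau)$ and $0 \le k \le M_{\tau(\omega)}$.

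For the \emph{moreover} part I would argue on the probability space. By Corollary \ref{measurabilityOfU_i} each $U^y_i \in \mathcal{F}_{\tau_i}$, and since $\{X_{\tau_i}\}$ is a martingale for $\{\mathcal{F}_{\tau_i}\}$ under every $Q \in \mathcal{P}$, the tail $Y - Y_k = \sum_{i=k}^{M_\tau - 1} U^y_i\,(X_{\tau_{i+1}} - X_{\tau_i})$ is a sum of martingale differences. Invoking Lemma \ref{martingaleDifference} for integrability and for the vanishing of the conditional expectation of this martingale transform gives $\mathbb{E}_Q(Y - Y_k \mid \mathcal{F}_{\tau_k}) = 0$; as $Y_k$ is $\mathcal{F}_{\tau_k}$-measurable this rearranges to $Y_k = \mathbb{E}_Q(Y \mid \mathcal{F}_{\tau_k})$ a.e., and the argument is independent of which martingale measure $Q$ is chosen.

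The step I expect to be the main obstacle is not any single hard estimate but the careful bookkeeping at the interface between the a.e.-defined probabilistic objects and the everywhere-defined trajectory objects: the replication $Y = u^y_0 + \sum U^y_i\,\Delta X$ holds only a.e., whereas attainability of $Z$ is needed pointwise on $\Omega_0(\tau)$, so I would absorb the exceptional null set into $\Omega_0(\tau)$ (intersecting it with the full-measure set on which replication holds) and verify that this does not disturb the measurability inputs of Corollary \ref{measurabilityOfU_i} nor the conditional $0$-neutrality of Theorem \ref{veryUseful}. Keeping the identifications $\mathbf{S} \leftrightarrows X_\tau(\omega)$ and $H^z \leftrightarrows F^y$ consistent, and confirming that $N_{H^z}$ is a bona fide trajectory-based stopping time so that Theorem \ref{havingAnIntervalTheorem} applies, are the points requiring the most care.
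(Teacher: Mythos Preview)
Your proposal is correct and follows essentially the same route as the paper: translate $\tau$-attainability of $Y$ into attainability of $Z$ via the portfolio $H^z \leftrightarrows F^y$, invoke Theorem \ref{veryUseful} to justify applying Corollary \ref{exactReconstructionAndLinearity}$(a)$ for the pointwise minmax identity, and use Lemma \ref{martingaleDifference} for the conditional-expectation statement. You are in fact more explicit than the paper on two points it leaves implicit---that $\mathcal{H}+\mathcal{H}=\mathcal{H}$ so Theorem \ref{havingAnIntervalTheorem} delivers the interval hypothesis of Corollary \ref{exactReconstructionAndLinearity}, and that the a.e.\ replication identity must be upgraded to hold everywhere on $\Omega_0(\tau)$ by absorbing a null set---and your handling of both is sound.
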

\begin{proof}
Notice that the following holds
for all $\omega \in \Omega_0(\tau)$,
\begin{equation}  \label{decomposition}
\sum_{i=0}^{k-1} U^y_{i}(\omega) (X_{\tau_{i+1}}(\omega) - X_{\tau_i})(\omega)) + u^y_0=
Y(\omega)-  \sum_{i=k}^{M_{\tau}(\omega)-1} U^y_{i}(\omega) (X_{\tau_{i+1}}(\omega) -
X_{\tau_{i}}(\omega)),\quad
\end{equation}
which, given that $S \leftrightarrows X_{\tau}(\omega)$, is the same as:
\begin{equation}  \nonumber 
\sum_{i=0}^{k-1} H^z_{i}({\bf S})~\Delta_iS +u_0^y= Z({\bf S})- \sum_{i=k}^{N_{H^z}({\bf S})-1} H^z_{i}({\bf S})~\Delta_iS.
\end{equation}
We have used the notation $H_i^z({\bf S}) \equiv {\bf 1}_{\Omega_i(\tau)}F_i^{U^y}(x_{\omega, \tau_i}) = U^y_i(\omega)$.

Taking the conditional expectation of both sides of (\ref{decomposition})
with respect to $\mathcal{F}_{\tau_k}$ and using Lemma \ref{martingaleDifference}, gives
\begin{equation} \label{conditioningIII}
Y_k= \sum_{i=0}^{k-1} U^y_{i} (X_{\tau_{i+1}} - X_{\tau_{i}})+ u^y_0  = \mathbb{E}(Y-
\sum_{i=k}^{M_{\tau}-1} U^y_{i} (X_{\tau_{i+1}} - X_{\tau_{i}}))|\mathcal{F}_{\tau_k})= \mathbb{E}(Y|\mathcal{F}_{\tau_k}).
\end{equation}
The right hand side of (\ref{conditioningIII}) is only defined a.e. on $\Omega$; in the case that is not
defined everywhere on $\Omega_0(\tau)$ (which, we recall, is a set of probability one) we do
extend $\mathbb{E}(Y|\mathcal{F}_{\tau_k})$ to all of $\Omega_0(\tau)$ by means of the left hand side of (\ref{conditioningIII}).

At this point we recall Corollary \ref{exactReconstructionAndLinearity}, which is applicable to $\mathcal{M}(\tau)$ because of Theorem \ref{veryUseful}, which gives:
\begin{equation}  \nonumber
\overline{V}_k({\bf S}, Z, \mathcal{M}(\tau)) = \underline{V}_k({\bf S},Z, \mathcal{M}(\tau)) =  \sum_{i=0}^{k-1} H^z_{i}({\bf S})~\Delta_iS +u_0^y= Y_k(\omega)
\end{equation}
valid for all $\omega \in \Omega_0(\tau)$ and ${\bf S}= X_{\tau}(\omega)$. 
\end{proof}

Model uncertainty is usually treated by considering a subset of
the set of equivalent measures.
There are examples of stochastic market models for which the bounds $\inf_{Q}(Y)$ and $\sup_{Q}(Y)$ provide too large of an interval in order to be informative for practical purposes.
From a trajectory point of view such a situation suggests: a) a deficiency of the
market model (in particular the trajectory set $\mathcal{S}$ may be too large)
or b) the need to replace the super-hedging philosophy (and hence risk-free approach)
for a risk taking philosophy. In this last case, the error functional used to define the bound $\overline{V}$ has to be replaced by an appropriate, trajectory based, risk-functional. There are several other logical possibilities besides $a)$ and $b)$, for example including liquid derivatives in the portfolio approximations (see \cite{acciaio}).

We have considered the set $\mathcal{P}$, it is natural to seek an  extension of the above
results to the case on non-equivalent martingale measures.

\section{Conclusions and Extensions} \label{sec:conclusions}

The paper develops basic results on arbitrage and pricing in a trajectory based market model. The setting naturally allows one to resort to a worst case point of view which, in turn, permits arbitrage opportunities while at the same time providing coherent prices.
This fact reveals a basic extension to the classical martingale market structure.
The proposed framework has also a clear conceptual and formal relationship to the well established risk-neutral approach. Given the basic nature of the arguments it is expected that extensions to other settings
are possible as well.

We have concentrated on bounding the price of an option through superhedging and underhedging, selecting an actual price inside of this interval may require to adopt a functional to accommodate the ensuing risk-taking.

Arguably, attempting a direct evaluation of the minmax optimization required in (\ref{upperBound}) and in related results, is a daunting task. Moreover, the minmax formulation of the problem gives no clues on how to construct the hedging values $H_i(S)$, for a given  payoff $Z$, by means of the unfolding path values $S_0, S_1, S_2, \ldots $
In the paper \cite{degano}, and following \cite{BJN}, we propose another pair of numbers, obtained through a dynamic, or iterative, definition, each instance involving a local minmax optimization. Using the new dynamic minmax definitions we provide conditions  under which the global and the iterated definitions coincide.

The manuscript \cite{ferrando} extends and generalizes some of the no arbitrage results to the case of transaction costs. We are also presently studying continuos-time versions of the main results as well.

\appendix


\section{Further Results on Price Bounds} \label{furtherPricingResults}


\noindent For completeness, we just state the following simple result

\begin{proposition}
Consider a discrete $0$-neutral market $\mathcal{M} = \mathcal{S} \times \mathcal{H}$ and functions $Z_1(S)$, $Z_2(S)$ satisfying
\[ Z_2(S)=aZ_1(S)+b \] where
 $a$ y $b$ are arbitrary real numbers. Then,
\begin{enumerate}
\item If $a>0$ and  $\mathcal{H}$ is closed under multiplication by positive numbers:
\[ \Vup(S_0, Z_2, \mathcal{M})=a\Vup(S_0, Z_1, \mathcal{M})+b. \]
\item If $a<0$ y $\mathcal{H}$ is closed under multiplication by positive numbers:
\[ \Vup(S_0,Z_2, \mathcal{M})=a\Overline(S_0,Z_1, \mathcal{M})+b. \]
\end{enumerate}
\label{prop:2}
\end{proposition}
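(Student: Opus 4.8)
The plan is to reduce both statements to a single substitution inside the defining infimum, combined with the elementary behaviour of $\sup$ and $\inf$ under an affine change with positive or negative leading coefficient. The one structural fact I would establish first is that closure of $\He$ under multiplication by positive numbers makes the map $H\mapsto cH$, for any fixed $c>0$, a bijection of $\He$ onto itself: indeed $c\He\subseteq\He$ by hypothesis, and for $H\in\He$ one has $H=c\,(c^{-1}H)$ with $c^{-1}H\in\He$, so $\He\subseteq c\He$ and hence $c\He=\He$. Scaling the holdings by $c>0$ leaves the rebalancing/liquidation index unchanged, $N_{cH}=N_H$ (the holdings $cH_i(S)$ stabilize, and vanish, exactly when $H_i(S)$ does), so that $\sum_{i=0}^{N_{cH}(S)-1}(cH)_i(S)\,\Delta_iS = c\sum_{i=0}^{N_H(S)-1}H_i(S)\,\Delta_iS$ for every $S\in\Se$.

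For part (1), with $a>0$, I would start from Definition \ref{priceBounds} and substitute $H=aG$, $G\in\He$ (a bijection by the above), to obtain
\[
\Vup(S_0,Z_2,\Me)=\inf_{G\in\He}\sup_{S\in\Se}\Big\{aZ_1(S)+b-a\sum_{i=0}^{N_G(S)-1}G_i(S)\,\Delta_iS\Big\}.
\]
Since $a>0$, the factor $a$ commutes with the inner $\sup$ while the additive constant $b$ passes through both $\sup$ and $\inf$; this yields directly $\Vup(S_0,Z_2,\Me)=a\,\Vup(S_0,Z_1,\Me)+b$, which is the claim.

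For part (2), with $a<0$, I would instead substitute $H=(-a)G=|a|G$, again a bijection of $\He$ since $-a>0$. The gains term now contributes $+a\sum_{i}G_i\Delta_iS$, so the bracket becomes $a\big(Z_1(S)+\sum_{i=0}^{N_G(S)-1}G_i(S)\,\Delta_iS\big)+b$. Here the sign of $a$ is decisive: because $a<0$ one has $\sup_S\{a\,\phi(S)+b\}=b+a\inf_S\phi(S)$ and $\inf_G\{a\,\psi(G)\}=a\sup_G\psi(G)$. Applying these two identities in turn converts the inner $\sup$ into an $\inf$ over $S$ and the outer $\inf$ into a $\sup$ over $G$, producing exactly the $\sup$-$\inf$ defining $\Vdo(S_0,Z_1,\Me)$; hence $\Vup(S_0,Z_2,\Me)=a\,\Overline(S_0,Z_1,\Me)+b$.

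The only genuinely delicate points, which I would verify explicitly, are that the scaled portfolio $cH$ indeed lies in $\He$ with $N_{cH}=N_H$, so that the gains scale linearly, and that the manipulations remain valid under the conventions $\inf\emptyset=\infty$, $\sup\emptyset=-\infty$ in the degenerate case where a bound is infinite (for $a>0$ one checks $a\cdot(\pm\infty)+b=\pm\infty$, while for $a<0$ the sign flip matches the exchange of $\Vup$ and $\Overline$). The $0$-neutrality hypothesis is not needed for the algebra itself and merely fixes the ambient setting in which the bounds are the relevant pricing quantities.
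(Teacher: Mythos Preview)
The paper does not actually supply a proof of this proposition: it is introduced in Appendix~\ref{furtherPricingResults} with the sentence ``For completeness, we just state the following simple result'' and left unproven. Your argument is the natural one and is correct; the bijection $H\mapsto cH$ of $\He$ onto itself for $c>0$, together with $N_{cH}=N_H$ and the standard behaviour of $\sup$/$\inf$ under affine maps, yields both parts exactly as you describe. Your remark that $0$-neutrality plays no role in the algebra is also accurate (it would only be needed to handle the excluded case $a=0$, where $Z_2\equiv b$ and one must invoke Proposition~\ref{noAssumptionsProperty}).
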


The following developments  are stated and proven with two portfolio sets $\He^1,\He^2$, the reader could take $\mathcal{H}= \mathcal{H}^1= \mathcal{H}^2$, as the extra generality is not used in the rest of the paper.

Define
\[\He^1+\He^2 =\{H^1+H^2: H^1\in\He^1, ~ H^2\in\He^2\},\]
where the sum $H\equiv H^1+H^2$ is defined as follows
\[N_{H}\equiv\max\{N_{H^1},N_{H^2}\},\]
\begin{equation} \label{sumOfPortfolios}
H_i=H_i^1+H_i^2\quad \mbox{if}\quad 0\le i < \min\{N_{H^1},N_{H^2}\}\quad \mbox{and}\quad H_i=H_i^j\quad \mbox{if}\quad \min\{N_{H^1},N_{H^2}\}\le i,
\end{equation}
where $N_{H}$ is attained in $H^j$, $j=1$ or $j=2$. We now check that the portfolio sum $H$ is non-anticipative under the assumption that  $N_{H^j},\;j=1,2$, are stopping times.
Let $S'_j=S_j,\;0\le j\le i$; if $i< \min\{N_{H^1}(S),N_{H^2}(S)\}\equiv m$, it is clear that $H_i(S)=H_i(S')$.  Consider then $i\ge m$ and assume, without lost of generality, that $N_H^1(S)= m$, then $N_{H^1}(S')=N_{H^1}(S)$. If $N_{H^2}(S')<N_{H^1}(S')$, it would result that $S'_j=S_j,\;0\le j\le N_{H^2}(S')$, and so $N_{H^2}(S')=N_{H^2}(S)\ge N_{H^1}(S)=N_{H^1}(S')$, a contradiction.

In case that the functions $N_{H^j}$  are not sopping times, the portfolio sum $H$ is still non-anticipative
if liquidation is assumed. Indeed, if the portfolios $H^j$ are liquidated at $N_{H^j},\; j=1,2$ (i.e. for any $S\in\Se,$ $H^j_i(S)=0$ for $i\ge N_{H^j}(S)$), the sum definition in (\ref{sumOfPortfolios}) reduces to
\[H_i=H_i^1+H_i^2\quad \mbox{for any}\quad i\ge 0.\]
It is clear that if $S,S'\in\Se$ with $S'_j=S_j,\;0\le j\le i$, for some $i\ge0$, then $H_i(S')=H^1_i(S')+H^2_i(S')=H^1_i(S)+H^2_i(S)=H_i(S)$.

Observe also that for any $S\in \Se$ and $k\ge 0$,
\[\sum_{i=k}^{N_{H}(S)-1} H_i(S) ~\Delta_i S= \sum_{i=k}^{N_{H^1}(S)-1} H^1_i(S) ~\Delta_i S ~ + ~ \sum_{i=k}^{N_{H^2}(S)-1} H^2_i(S) ~\Delta_i S.\]

\begin{lemma} \label{lemmaToHaveInterval}  Let $\Me^1 = \Se\times \He^1$ and $\Me^2 = \Se\times \He^2$ be discrete markets, and assume either: for all $H^j\in \He^j$, $j=1,2$, $N_{H^j}$ are stopping times or, all $H^j \in \He^j$, $j=1,2$, are liquidated. Set $\widetilde{\Me}= \Se\times
(\He^1 + \He^2)$ and $S \in \Se$ and $0 \leq k$. Assume $Z_1, Z_2,Z$ are real valued functions defined on $\Se$
then:
\begin{equation}\label{SubaditivityOfVup_k}
\Vup_k(S, Z_1+Z_2, \widetilde{\Me})\le \Vup_k(S, Z_1,\Me^1)+\Vup_k(S, Z_2, \Me^2).
\end{equation}
Moreover if $\widetilde{\Me}$ is conditional $0$-neutral at $(S,k)$ then
\begin{equation}\label{generalInterval}
\Vdo_k(S, Z, \Me^1)\le \Vup_k(S, Z, \Me^2).
\end{equation}
\end{lemma}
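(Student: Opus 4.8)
The plan is to establish the subadditivity inequality (\ref{SubaditivityOfVup_k}) first and then obtain the interval inequality (\ref{generalInterval}) as an almost immediate consequence, by choosing the two payoffs so that their sum is the zero function and invoking conditional $0$-neutrality together with the antisymmetry relation $\Vdo_k(S,Z,\Me^1)=-\Vup_k(S,-Z,\Me^1)$.

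For (\ref{SubaditivityOfVup_k}), the starting observation is that every element of $\He^1+\He^2$ is of the form $H=H^1+H^2$ with $H^j\in\He^j$, so that the outer infimum defining $\Vup_k(S,Z_1+Z_2,\widetilde{\Me})$ may be rewritten as an infimum over pairs $(H^1,H^2)$. The hypothesis (all $N_{H^j}$ stopping times, or all portfolios liquidated) guarantees that such an $H$ is a legitimate non-anticipative portfolio and, more importantly, supplies the additivity identity for accumulated gains established in the discussion just before the lemma, namely
\[
\sum_{i=k}^{N_{H}(\tilde S)-1} H_i(\tilde S)\,\Delta_i\tilde S = \sum_{i=k}^{N_{H^1}(\tilde S)-1} H^1_i(\tilde S)\,\Delta_i\tilde S + \sum_{i=k}^{N_{H^2}(\tilde S)-1} H^2_i(\tilde S)\,\Delta_i\tilde S .
\]
Substituting this into the defining supremum and using the elementary bound $\sup(f+g)\le\sup f+\sup g$ over $\tilde S\in\Se_{(S,k)}$ separates the expression into a $Z_1$-part depending only on $H^1$ plus a $Z_2$-part depending only on $H^2$. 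Since these two terms involve disjoint variables, the infimum over $(H^1,H^2)$ factors as $\inf_{H^1}(\cdots)+\inf_{H^2}(\cdots)$, which is exactly $\Vup_k(S,Z_1,\Me^1)+\Vup_k(S,Z_2,\Me^2)$.

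For (\ref{generalInterval}), I would apply (\ref{SubaditivityOfVup_k}) with $Z_1=-Z$ on $\Me^1$ and $Z_2=Z$ on $\Me^2$, so that $Z_1+Z_2\equiv 0$. This yields $\Vup_k(S,0,\widetilde{\Me})\le \Vup_k(S,-Z,\Me^1)+\Vup_k(S,Z,\Me^2)$. Conditional $0$-neutrality of $\widetilde{\Me}$ at $(S,k)$ forces the left-hand side to vanish, whence $0\le \Vup_k(S,-Z,\Me^1)+\Vup_k(S,Z,\Me^2)$; rearranging and using $\Vdo_k(S,Z,\Me^1)=-\Vup_k(S,-Z,\Me^1)$ gives the claimed $\Vdo_k(S,Z,\Me^1)\le\Vup_k(S,Z,\Me^2)$.

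The only genuinely delicate point, and the one I would treat with care, is the additivity of the accumulated gains when $N_{H^1}\neq N_{H^2}$: the two sums are truncated at different horizons, and one must check that the prescribed definition of $H^1+H^2$ (with $N_H=\max\{N_{H^1},N_{H^2}\}$, and the gains beyond the smaller horizon carried by whichever portfolio remains active) reproduces the clean split above. This is precisely where the stopping-time/liquidation dichotomy enters, and it is already verified in the paragraphs preceding the lemma, so I would simply cite that identity rather than re-derive it. Everything else reduces to the two routine facts $\sup(f+g)\le\sup f+\sup g$ and $\inf_{x,y}(a(x)+b(y))=\inf_x a(x)+\inf_y b(y)$, so no further estimate is needed.
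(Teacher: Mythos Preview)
Your proposal is correct and follows essentially the same route as the paper's proof: split the accumulated gains of $H^1+H^2$ using the additivity identity established just before the lemma, apply $\sup(f+g)\le\sup f+\sup g$, take infima over $H^1$ and $H^2$ separately to get (\ref{SubaditivityOfVup_k}), and then specialize to $Z_1=-Z$, $Z_2=Z$ and invoke conditional $0$-neutrality plus $\Vdo_k=-\Vup_k(-\,\cdot\,)$ to obtain (\ref{generalInterval}). Your explicit flagging of the gain-additivity identity as the one nontrivial ingredient is exactly right, and it is indeed what the stopping-time/liquidation hypothesis is used for.
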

\begin{proof}
Let $H^j$ be generic elements of $\He^j$,$\;j=1,2$, so $H^1+H^2\in \He^1 + \He^2$. Then
\[\Vup_k(S,Z_1+Z_2,\widetilde{\Me}) \leq
\sup_{\tilde{S} \in \Se_{(S,k)}} [Z_1(\tilde{S}) -  \sum_{i=k}^{N_{H^1}(\tilde{S})-1} H^1_i(\tilde{S}) ~\Delta_i \tilde{S} ~ + ~ Z_2(\tilde{S}) - \sum_{i=k}^{N_{H^2}(\tilde{S})-1} H^2_i(\tilde{S}) ~\Delta_i \tilde{S}]\le\]
\[\le \sup_{\tilde{S} \in \Se_{(S,k)}} [Z_1(\tilde{S}) -  \sum_{i=k}^{N_{H^1}(\tilde{S})-1} H^1_i(\tilde{S}) ~\Delta_i \tilde{S}] + \sup_{\tilde{S} \in \Se_{(S,k)}} [Z_2(\tilde{S}) - \sum_{i=k}^{N_{H^2}(\tilde{S})-1} H^2_i(\tilde{S}) ~\Delta_i \tilde{S}].\]
Therefore, taking infimum over $\He^1$ and $\He^2$,
\[\Vup_k(S,Z_1+Z_2,\widetilde{\Me}) \le \Vup_k(S,Z_1,\Me^1)+\Vup_k(S,Z_2,\Me^2).\]
This proves (\ref{SubaditivityOfVup_k}). For (\ref{generalInterval}), by the result on (\ref{SubaditivityOfVup_k}) with $Z_1=-Z$ and $Z_2=Z$, and the conditional $0$-neutral property of $\widetilde{\Me}$ we have
\[0 = \Vup_k(S,0,\widetilde{\Me})\le  \Vup_k(S,-Z, \mathcal{M}^1) + \Vup_k(S, Z, \mathcal{M}^2).\]
Which gives the desired result.
\end{proof}


\section{Contrarian Trajectory Auxiliary Material} \label{contrarianAuxiliaryMaterial}

\begin{lemma}  \label{fugitiveTrajectory} Given a discrete market $\Me=\Se\times
\He$, $k \geq 0$, $S^k\in\Se$, and $H\in\He$. Assume each node $(S, j)$, with
$S \in \Se_{(S^k, k)}$ and $j\ge k$, is $0$-neutral. Then, for any $\epsilon > 0$, there exists a sequence of
trajectories $\left(S^m\right)_{m\ge k}$, verifying, (\ref{controlledSum}), it is
\begin{equation}\nonumber 
S^{m}\in \Se_{(S^{m-1},m-1)}, \quad \mbox{and}\quad
\sum_{i=k}^{n-1} H_{i}(S^m) \Delta_iS^{m}< \sum_{i=k}^{n-1}\frac \epsilon{2^i}\le \epsilon, \quad
\mbox{for}\quad m\ge n > k.
\end{equation}
\end{lemma}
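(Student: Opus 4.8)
The plan is to build the trajectories $S^m$ one coordinate at a time, using $0$-neutrality of each node to force the single-step contribution of $H$ to stay below a geometrically decaying budget, and then to invoke non-anticipativity to show that these one-step bounds assemble into the asserted bound on the partial sums.

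First I would construct the sequence inductively. The first term $S^k$ is the given trajectory, and assuming $S^m$ has been produced for some $m\ge k$, I observe that the node $(S^m,m)$ lies in $\Se_{(S^k,k)}$: the nesting $\Se_{(S^m,m)}\subseteq\Se_{(S^{m-1},m-1)}\subseteq\cdots\subseteq\Se_{(S^k,k)}$ follows inductively from $S^{j+1}\in\Se_{(S^j,j)}$, so $(S^m,m)$ is a $0$-neutral node by hypothesis. Writing $h\equiv H_m(S^m)$, which is a fixed real number because $H_m$ is non-anticipative and every trajectory in $\Se_{(S^m,m)}$ shares the coordinates $S^m_0,\dots,S^m_m$, the inequalities (\ref{cone}) give $\inf_{\tilde S\in\Se_{(S^m,m)}}h\,(\tilde S_{m+1}-S^m_m)\le 0$; indeed this infimum equals $h\inf(\tilde S_{m+1}-S^m_m)\le 0$ when $h\ge 0$, and $h\sup(\tilde S_{m+1}-S^m_m)\le 0$ when $h<0$. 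Since $0<\epsilon/2^m$, I may select $S^{m+1}\in\Se_{(S^m,m)}$ with $H_m(S^m)\,\Delta_m S^{m+1}<\epsilon/2^m$, and non-anticipativity yields $H_m(S^{m+1})=H_m(S^m)$, hence $H_m(S^{m+1})\,\Delta_m S^{m+1}<\epsilon/2^m$.

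Next I would transfer these one-step bounds to an arbitrary later trajectory $S^m$. Since $S^{j+1}\in\Se_{(S^j,j)}$ agrees with $S^j$ on coordinates $0,\dots,j$, transitivity shows that $S^m$ agrees with $S^{i+1}$ on coordinates $0,\dots,i+1$ whenever $i+1\le m$. Consequently $\Delta_i S^m=\Delta_i S^{i+1}$ and, again by non-anticipativity, $H_i(S^m)=H_i(S^{i+1})$ for every $k\le i\le m-1$. Substituting the step bound gives $H_i(S^m)\,\Delta_i S^m=H_i(S^{i+1})\,\Delta_i S^{i+1}<\epsilon/2^i$, and summing over $k\le i\le n-1$ for any $k<n\le m$ produces
\[
\sum_{i=k}^{n-1} H_i(S^m)\,\Delta_i S^m = \sum_{i=k}^{n-1} H_i(S^{i+1})\,\Delta_i S^{i+1} < \sum_{i=k}^{n-1}\frac{\epsilon}{2^i}\le\epsilon,
\]
the final inequality being the tail estimate for the geometric series. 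Together with $S^m\in\Se_{(S^{m-1},m-1)}$, this is exactly (\ref{controlledSum}).

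The induction and the geometric estimate are routine; the step requiring the most care is the bookkeeping that links the bound proved at stage $i$, which is established for the trajectory $S^{i+1}$, with the value of the summand $H_i(S^m)\,\Delta_i S^m$ for a later index $m$. This depends on verifying that the conditional spaces nest correctly and on repeatedly using non-anticipativity to identify $H_i(S^m)$ with $H_i(S^{i+1})$ and $\Delta_i S^m$ with $\Delta_i S^{i+1}$. Once this identification is secured, the bound on the partial sums is immediate.
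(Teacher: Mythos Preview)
Your proof is correct and follows essentially the same inductive approach as the paper's own proof: at each stage use $0$-neutrality of the current node to pick the next trajectory so that the single-step contribution is below $\epsilon/2^m$, then assemble the partial-sum bound. You are in fact more explicit than the paper on two points it leaves tacit---the verification that $(S^m,m)$ remains a node in $\Se_{(S^k,k)}$ via the nesting of conditional spaces, and the use of non-anticipativity to identify $H_i(S^m)\Delta_i S^m$ with $H_i(S^{i+1})\Delta_i S^{i+1}$---so your write-up is a slight expansion of the paper's terse argument rather than a different route.
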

\begin{proof} Fix $\epsilon >0$. By the assumed $0$-neutral property of the nodes, there exists  $S^{k+1}\in
\Se_{(S^k, k)}$ such that
\begin{equation}\label{contrarianStep}
H_k(S^{k+1})\Delta_kS^{k+1} < \frac {\epsilon} {2^k}.
\end{equation} Recursively, for $m\ge k+1$, once $S^m\in
\Se_{(S^{m-1}, m-1)}$ was chosen, verifying
(\ref{contrarianStep}), with $m-1$ taking the place of $k$, and then
\begin{equation}\label{contarianCumulative}
\sum_{i=k}^{m-1}H_i(S^m)\Delta_iS^m < \sum_{i=k}^{m-1} \frac
{\epsilon} {2^i}\le \epsilon,
\end{equation}
there exists  $S^{m+1}\in \mathcal{S}_{(S^{m}, m)}$ verifying
(\ref{contrarianStep}) with $k$ replaced by $m$, and
(\ref{contarianCumulative}) with $m$ replaced by $m+1$.
\end{proof}

\begin{lemma}\label{arbitrageTrajectory} Given a discrete market $\Me=\Se\times
\He$,  $k \geq 0$, $S^k\in\Se$, and $H\in\He$. Assume $\Me$ is free of local arbitrage. Then
\begin{enumerate}
\item \label{1}If (\ref{investPositiveAndOnlyUpII}) does not hold for $(S,H,j)$, with $S \in \Se_{(S^k, k)}$, and $j\ge k$, then
\[H_j(\tilde{S})\Delta_j\tilde{S}=0\quad \mbox{for any}\quad \tilde{S}\in\Se_{(S,j)}.\]

\item \label{2} For each $S \in \Se_{(S^k, k)},$ \ \ \ $\sum_{i=k}^{N_H(S)-1} H_{i}(S) \Delta_iS= 0,~~$ \ \ \ or

there exists a first integer $\nu(S) \ge k$ such that (\ref{investPositiveAndOnlyUpII}) holds for ($S$,H,$\nu(S)$).

\item \label{3} If $(S^k,H,j)$ satisfies (\ref{investPositiveAndOnlyUpII}) for some $j\ge k$, then there exists a sequence of trajectories $\{S^m\}_{m\ge k}$, verifying $S^m=S^k$ for $k<m\le\nu\equiv\nu(S^k)$, $S^{m}\in \Se_{(S^{m-1},m-1)}$, for any $m>\nu$,
\begin{equation}\label{initialSum}
\sum_{i=k}^{\nu-1} H_{i}(S^m) \Delta_iS^{m}=0,
\end{equation}
and
\begin{equation}\label{arbitrageSum}
\sum_{i=k}^{n-1} H_{i}(S^m) \Delta_iS^{m}< 0, \qquad \mbox{for}\quad m\ge n > \nu.
\end{equation}
\end{enumerate}
\end{lemma}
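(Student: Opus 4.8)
The plan is to handle the three items in order, with item $(1)$ serving as the engine for items $(2)$ and $(3)$. Throughout I will use two structural facts: $H$ is non-anticipative, so $H_j(\tilde S)=H_j(S)$ whenever $\tilde S\in\Se_{(S,j)}$, and $\Delta_j\tilde S=\tilde S_{j+1}-S_j$ for such $\tilde S$.

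For item $(1)$, suppose (\ref{investPositiveAndOnlyUpII}) fails at $(S,H,j)$. Its negation is precisely $\inf_{\tilde S\in\Se_{(S,j)}}[H_j(S)\Delta_j\tilde S]\ge 0$. Since $\Me$ is free of local arbitrage, the node must then satisfy (\ref{investNegativeAndOnlyDownII}), i.e. $\sup_{\tilde S\in\Se_{(S,j)}}[H_j(S)\Delta_j\tilde S]\le 0$. The two bounds squeeze $H_j(S)\Delta_j\tilde S=0$ for every $\tilde S\in\Se_{(S,j)}$, and non-anticipativity converts this into $H_j(\tilde S)\Delta_j\tilde S=0$, as required. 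For item $(2)$, fix $S\in\Se_{(S^k,k)}$ and scan the indices $k\le j<N_H(S)$: if some such $j$ triggers (\ref{investPositiveAndOnlyUpII}), let $\nu(S)$ be the least one; otherwise (\ref{investPositiveAndOnlyUpII}) fails at every $(S,H,j)$ in that range, so item $(1)$ applied with $\tilde S=S$ gives $H_j(S)\Delta_jS=0$ for all such $j$, whence $\sum_{i=k}^{N_H(S)-1}H_i(S)\Delta_iS=0$.

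The substance of the lemma, and where I expect the real work, is item $(3)$. Set $\nu=\nu(S^k)$ and put $S^m=S^k$ for $k<m\le\nu$. By minimality of $\nu$, (\ref{investPositiveAndOnlyUpII}) fails at $(S^k,H,i)$ for $k\le i<\nu$, so item $(1)$ yields $H_i(\tilde S)\Delta_i\tilde S=0$ for all $\tilde S\in\Se_{(S^k,i)}$. I then construct $S^m$ for $m>\nu$ recursively from the node $(S^{m-1},H,m-1)$: at index $\nu$ I invoke (\ref{investPositiveAndOnlyUpII}) to pick $S^{\nu+1}\in\Se_{(S^k,\nu)}$ with $H_\nu(S^{\nu+1})\Delta_\nu S^{\nu+1}<0$; for $m-1>\nu$ the ``free of local arbitrage'' dichotomy lets me pick $S^m\in\Se_{(S^{m-1},m-1)}$ with $H_{m-1}(S^m)\Delta_{m-1}S^m\le 0$ (a strict contrarian step from (\ref{investPositiveAndOnlyUpII}), or any extension when (\ref{investNegativeAndOnlyDownII}) holds). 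The nesting $\Se_{(S^m,m)}\subseteq\Se_{(S^{m-1},m-1)}$ plus non-anticipativity ``freezes'' each product, giving $H_i(S^m)\Delta_iS^m=H_i(S^{i+1})\Delta_iS^{i+1}$ for $i<m$, so the partial sums are consistent across $m$. Splitting $\sum_{i=k}^{n-1}=\sum_{i=k}^{\nu-1}+H_\nu\Delta_\nu+\sum_{i=\nu+1}^{n-1}$ then yields $0+(\text{strictly negative})+(\le 0)<0$, which is (\ref{arbitrageSum}); the same agreement-up-to-index-$\nu$ delivers (\ref{initialSum}).

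The main obstacle is the bookkeeping in item $(3)$: confirming that the recursively chosen $S^m$ all share their first $\nu+1$ coordinates with $S^k$ (so the early terms genuinely vanish) and that each increment-product is independent of $m$ once its index has been passed. Both points rest on the nestedness of the conditioning sets $\Se_{(\cdot,\cdot)}$ and on non-anticipativity; making these freezing identities precise, rather than the sign estimates themselves, is the delicate part of the argument.
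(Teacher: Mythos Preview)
Your proposal is correct and follows essentially the same route as the paper's proof: item~(1) via the squeeze between the negation of (\ref{investPositiveAndOnlyUpII}) and the forced validity of (\ref{investNegativeAndOnlyDownII}); item~(2) by a direct dichotomy using item~(1); and item~(3) by setting $S^m=S^k$ up to $\nu$, taking a strict contrarian step at $\nu$ via (\ref{investPositiveAndOnlyUpII}), and then recursively extending with non-positive increments using the free-of-local-arbitrage alternative. Your explicit treatment of the ``freezing'' identities (via nestedness of the $\Se_{(\cdot,\cdot)}$ and non-anticipativity) is more detailed than the paper's terse recursion, but the argument is the same.
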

\begin{proof} Under the hypothesis (\ref{1}) since (\ref{investNegativeAndOnlyDownII}) must hold, for any $\tilde{S} \in \Se_{(S,j)}$,
\[0\le \inf_{\tilde{S} \in \Se_{(S,j)}}~~ [H_j(S) ~\Delta_j\tilde{S}]\le H_j(S)\Delta_j\tilde{S} \le \sup_{\tilde{S} \in \Se_{(S,j)}}~~ [H_j(S) ~\Delta_j\tilde{S}]\le 0.\]

Item (\ref{2}). If (\ref{investPositiveAndOnlyUpII}) holds for $(S,H,j)$ for some $j\ge k$, choose $\nu(S)$ as the first integer such that (\ref{investPositiveAndOnlyUpII}) holds for $(S,H,\nu(S))$. On the other hand, by (\ref{1}), $\sum_{i=k}^{N_H(S)-1} H_{i}(S) \Delta_iS= 0$.

Finally, by (\ref{2}) and the hypothesis in (\ref{3}) there exists $S^{\nu+1}\in \Se_{(S^{k},\nu)}$ with
$H_{\nu}(S^{\nu+1})\Delta_{\nu}S^{\nu+1} < 0$. If $\nu>k+1$, define, for $k < m \le \nu$, $S^m=S^{k}$.  from the minimality of $\nu$, and (\ref{1}), $S^m$ verifies (\ref{initialSum}) for those $m$. While $S^{\nu+1}$  verifies both (\ref{initialSum}) and (\ref{arbitrageSum}).


Recursively, by the free of local arbitrage property for, $m > \nu+1$, once $S^m\in
\mathcal{S}_{(S^{m-1}, m-1)}$ was chosen verifying (\ref{arbitrageSum}),
it is possible to select $S^{m+1}\in \mathcal{S}_{(S^{m}, m)}$ verifying $H_{m}(S^{m+1}) \Delta_{m}S^{m+1} \le 0$ and consequently verifying (\ref{arbitrageSum}).
\end{proof}

Combining the $0$-neutral, or free of local arbitrage conditions, of the nodes with $N_H$ being initially bounded we have.

\begin{lemma}\label{contrarianTrajectoryConstruction} Consider a discrete market $\mathcal{M}=\mathcal{S}\times
\mathcal{H}$, $k \geq 0$, $S^k\in\Se$, $\epsilon\ge 0$ and $H\in\He$ with $N_H$ initially bounded. Assume there exist a sequence of trajectories $\{S^m\}_{m\ge k}$ verifying $S^{m}\in \Se_{(S^{m-1},m-1)}$ for $m>1$, and $\kappa > k$, such that $N_H(S^\kappa)\ge \kappa$ and
\begin{equation}\nonumber 
\sum_{i=k}^{n-1} H_{i}(S^m) \Delta_iS^{m}< \sum_{i=k}^{n-1}\frac \epsilon{2^i}\le \epsilon, \quad
\mbox{for}\quad n: m\ge n \ge \kappa.
\end{equation}

\noindent Then there exists $m^*>k$ verifying that. 

\noindent $(a)$  $H$ and $S^{m^*}$ are $\epsilon$-contarian  beyond $k$, if $\kappa = k+1$ and $\epsilon > 0$.

\noindent $(b)$  $H$ and $S^{m^*}$ are $0$-contarian  beyond $k$, if $N_H$ is a stopping time.
\end{lemma}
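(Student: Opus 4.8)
The plan is to reduce both claims to a single structural fact: that one can find an index $m^{*}>k$ whose termination time is dominated by the index itself, i.e. $N_H(S^{m^*})\le m^{*}$. Once such an $m^{*}$ is in hand, the hypothesis on the partial sums is applied with $n=N_H(S^{m^*})$ to control the entire profit $\sum_{i=k}^{N_H(S^{m^*})-1}H_i(S^{m^*})\Delta_iS^{m^*}$, which is precisely what the $\epsilon$-contrarian (resp. $0$-contrarian) condition of Definition \ref{contrarianTrajectoryDefinition} demands.

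First I would record the nesting of the conditioning sets. From $S^m\in\Se_{(S^{m-1},m-1)}$ an immediate induction shows that $S^m$ and $S^{m'}$ agree on $\{0,\ldots,m'\}$ whenever $m\ge m'$; equivalently $S^m\in\Se_{(S^{m'},m')}$ for $m\ge m'$. In particular every $S^m$ with $m\ge k+1$ lies in $\Se_{(S^k,k)}$, so each candidate trajectory genuinely extends $S^k$ beyond $k$, and only the profit bound remains to be checked.

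The key step, and the one where the \emph{initially bounded} hypothesis enters, is the existence of $m^{*}$ with $N_H(S^{m^*})\le m^{*}$. Let $\rho$ be the bounded function of Definition \ref{initiallyBoundedDef}, with $\rho\le R$, and fix any $m_0\ge\max\{\kappa,R\}$. Since $\rho(S^{m_0})\le R\le m_0$, the nesting gives $S^{m'}\in\Se_{(S^{m_0},\rho(S^{m_0}))}$ for every $m'\ge m_0$; as $N_H$ is bounded on $\Se_{(S^{m_0},\rho(S^{m_0}))}$, say by $B$, we obtain $N_H(S^{m'})\le B$ for all $m'\ge m_0$. Taking $m^{*}:=\max\{m_0,B\}$ then yields $N_H(S^{m^*})\le B\le m^{*}$, with $m^{*}\ge\kappa>k$ so that $S^{m^*}\in\Se_{(S^k,k)}$. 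I expect this to be the main obstacle, since $\rho$ varies with the trajectory and one must exploit that the $S^m$ eventually agree past the uniform bound $R$ to place them all inside one conditioning set on which $N_H$ is controlled.

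It then remains to invoke the partial-sum hypothesis, and here the two cases diverge. For part $(a)$, with $\kappa=k+1$ and $\epsilon>0$: if $N_H(S^{m^*})\ge k+1=\kappa$, then since also $N_H(S^{m^*})\le m^{*}$ the hypothesis with $n=N_H(S^{m^*})$ bounds the full profit by $\sum_{i=k}^{N_H(S^{m^*})-1}\epsilon/2^i\le\epsilon$; while if $N_H(S^{m^*})\le k$ the profit is an empty sum equal to $0<\epsilon$. These exhaust all possibilities precisely because $\kappa=k+1$ leaves no gap between the regime where the hypothesis applies ($n\ge\kappa$) and the regime where the sum is empty, and $\epsilon>0$ is what absorbs the empty-sum case; hence $S^{m^*}$ is $\epsilon$-contrarian beyond $k$. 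For part $(b)$ the stopping-time property removes the degenerate possibility: since $N_H(S^\kappa)\ge\kappa$ and each $S^m$ ($m\ge\kappa$) agrees with $S^\kappa$ on $\{0,\ldots,\kappa\}$, were $N_H(S^m)<\kappa$ the stopping-time definition (Definition \ref{stoppingTimeDef}), applied on $\{0,\ldots,N_H(S^m)\}$, would force $N_H(S^\kappa)=N_H(S^m)<\kappa$, a contradiction. Thus $\kappa\le N_H(S^{m^*})\le m^{*}$ and the hypothesis applies directly with $n=N_H(S^{m^*})$, giving profit $<\sum_{i=k}^{N_H(S^{m^*})-1}\epsilon/2^i$, which for the operative value $\epsilon=0$ reads $<0$; that is, $S^{m^*}$ is $0$-contrarian beyond $k$.
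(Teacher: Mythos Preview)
Your proof is correct and follows essentially the same route as the paper's: reduce to finding $m^*>k$ with $N_H(S^{m^*})\le m^*$ (and, for (b), also $N_H(S^{m^*})\ge\kappa$ via the stopping-time property), then plug $n=N_H(S^{m^*})$ into the hypothesis. Your construction of $m^*$ via a uniform bound $R$ on $\rho$, an anchor index $m_0\ge\max\{\kappa,R\}$, and the bound $B$ on $N_H$ over $\Se_{(S^{m_0},\rho(S^{m_0}))}$ is a slightly streamlined version of the paper's argument, which instead proceeds by cases ($N_H(S^\kappa)=\kappa$, then $N_H(S^\mu)\le\mu$ with $\mu=\max(\{\kappa\}\cup\rho(\Se))$, then the general case); you also make explicit the empty-sum case $N_H(S^{m^*})\le k$ in (a) and the implicit requirement $\epsilon=0$ in (b), both of which the paper glosses over.
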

\begin{proof} Observe that it is enough to show the existence of such $m^*>k$, with $N_H(S^{m^*})\le {m^*}$ for $(a)$, and for $(b)$ with $\kappa \le N_H(S^{m^*})\le m^*$.

If $N_H(S^\kappa) = \kappa$, $(a)$ and $(b)$ holds with $m^*=\kappa$. Let then assume that $N_H(S^\kappa) > \kappa$. Set $\mu=\max\left(\{\kappa\}\cup\{ \rho(S):S\in\Se\}\right)$, where $\rho$ is the bounded function required by Definition \ref{initiallyBoundedDef}. 

If $N_H(S^{\mu})\le \mu$, $(a)$ holds with $m^*=\mu$. For $(b)$, since $S^{\mu}\in \Se_{(S^{\kappa},\kappa)}$ because $\mu \ge \kappa$, then $N_H(S^{\mu}) \ge \kappa$, using that $N_H$ is a stopping time and $N_H(S^{\kappa}) \ge \kappa$, so $m^*=\mu$ is also a desired integer.

Consider now the case that $N_H(S^{\mu})> \mu$, and $m^*\equiv\max\{ N_H(S):S\in\Se_{(S^{\mu},\rho(S^{\mu}))}\}$. Then from $\rho(S^{\mu})\le \mu \le N_H(S^{\mu})\le m^*$ results $S^{m^*}\in \Se_{(S^{\mu},\mu)}\subset \Se_{(S^{\mu},\rho(S^{\mu}))}$ and so $N_H(S^{m^*})\le m^*$. Then the conclusion $(a)$ is verified. Item $(b)$ is also valid with that $m^*$, since $\mu \ge \kappa$, then $S^{m^*}\in \Se_{(S^{\kappa},\kappa)}$ and using that $N_H$ is a stopping time, $N_H(S^{m^*})\ge \kappa$.

\end{proof}

\section{Connections with Risk Neutral Pricing.  Auxiliary Material} \label{riskNeutralPricingAuxiliaryMaterial}

\begin{lemma}   \label{measurableMap}
The function $\phi:(\Omega,\mathcal{F}_{\tau_i}) \to (\mathbb{R}^C,\mathcal{B}(\mathbb{R}^C))$ defined by $\phi(\omega)=x_{\omega, \tau_i}$ is measurable.
\end{lemma}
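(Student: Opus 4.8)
The plan is to reduce the measurability of $\phi$ to the measurability of its coordinate maps and then exploit the fact that $\tau_i$ takes only countably many values in $C$.

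First I would recall that the Borel $\sigma$-algebra $\mathcal{B}(\mathbb{R}^C)$ is generated by the coordinate projections $\pi_c:\mathbb{R}^C\to\mathbb{R}$, $\pi_c(x)=x(c)$, $c\in C$ (the single-coordinate cylinders are the special case $n=1$ of the generating family quoted before the statement, and every finite cylinder is a finite intersection of these). Hence, by the standard criterion for measurability into a $\sigma$-algebra specified by a generating family, $\phi$ is $\mathcal{F}_{\tau_i}/\mathcal{B}(\mathbb{R}^C)$-measurable if and only if $\pi_c\circ\phi$ is $\mathcal{F}_{\tau_i}$-measurable for every $c\in C$. Since $(\pi_c\circ\phi)(\omega)=x_{\omega,\tau_i}(c)=X_{c\wedge\tau_i(\omega)}(\omega)$, the task reduces to showing that, for each fixed $c\in C$, the map $\omega\mapsto X_{c\wedge\tau_i(\omega)}(\omega)$ is $\mathcal{F}_{\tau_i}$-measurable.

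Next I would use that $\tau_i$ takes values in the countable set $C$. For a stopping time with countable range, a random variable $Y$ is $\mathcal{F}_{\tau_i}$-measurable precisely when $\{Y\in B\}\cap\{\tau_i=t\}\in\mathcal{F}_t$ for every Borel $B\subseteq\mathbb{R}$ and every $t\in C$. Applying this with $Y=X_{c\wedge\tau_i}$, I would compute, for fixed $t\in C$, $\{X_{c\wedge\tau_i}\in B\}\cap\{\tau_i=t\}=\{X_{c\wedge t}\in B\}\cap\{\tau_i=t\}$, because on $\{\tau_i=t\}$ one has $c\wedge\tau_i=c\wedge t$, a deterministic time. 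Now $c\wedge t$ is a fixed element of $[0,T]$, so adaptedness of $X$ gives $\{X_{c\wedge t}\in B\}\in\mathcal{F}_{c\wedge t}\subseteq\mathcal{F}_t$, while $\{\tau_i=t\}\in\mathcal{F}_t$ since $\tau_i$ is an $\mathcal{F}$-stopping time. Their intersection therefore lies in $\mathcal{F}_t$, which is exactly what is needed; this establishes the $\mathcal{F}_{\tau_i}$-measurability of each coordinate map and hence of $\phi$.

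The only genuinely delicate point is the reduction just used for stopping times of countable range, together with the identity $\mathcal{F}_{\tau_i}=\{A:A\cap\{\tau_i=t\}\in\mathcal{F}_t\ \text{for all}\ t\in C\}$; I would justify it by writing $\{\tau_i\le t\}=\bigcup_{s\in C,\,s\le t}\{\tau_i=s\}$ and using that $C$ is countable, so the ``$\le t$'' and ``$=t$'' formulations of $\mathcal{F}_{\tau_i}$ coincide. All of this should be carried out on the full-measure set $\Omega_0(\tau)$ on which the random variables $\{X_s\}_{s\in C}$ and $\{X_{\tau_i}\}$ are everywhere defined, consistent with the paper's a.e. conventions; no integrability or path-regularity of $X$ beyond adaptedness is required.
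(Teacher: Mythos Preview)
Your proof is correct and follows essentially the same route as the paper: both reduce to showing that each coordinate map $\omega\mapsto X_{c\wedge\tau_i(\omega)}(\omega)$ is $\mathcal{F}_{\tau_i}$-measurable, and both exploit the countable range of $\tau_i$ in $C$. The only difference is cosmetic: the paper verifies the defining condition $\{X_{c\wedge\tau_i}\in\Gamma\}\cap\{\tau_i\le t\}\in\mathcal{F}_t$ directly, which forces a case split on whether $t\le c$ or $c<t$, whereas you first pass to the equivalent ``$=t$'' characterization of $\mathcal{F}_{\tau_i}$ for discrete-valued stopping times and thereby avoid that split.
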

\begin{proof}
For $1\le j\le n$, fix $c_j\in C, \Gamma_j\in \mathcal{B}(\mathbb{R})$ and let $\mathcal{C}=\{x\in \mathbb{R}^C: x(c_j)\in \Gamma_j, 1\le j\le n\}$. Thus
\[\phi^{-1}(\mathcal{C})=\cap_{j=1}^n\{\omega:x_{\omega, \tau_i}(c_j)\in\Gamma_j\}= \cap_{j=1}^n X^{-1}_{c_j \wedge \tau_i}(\Gamma_j). \]
For showing that $\phi^{-1}(\mathcal{C})\in \mathcal{F}_{\tau_i}$, it is then enough to prove that, for any $c\in C$ and $\Gamma\in \mathcal{B}(\mathbb{R})$, $X^{-1}_{c \wedge \tau_i}(\Gamma)\in\mathcal{F}_{\tau_i}$. This happens, if for any $t\ge 0$
\[A = \{\omega: X_{c \wedge \tau_i(\omega)}(\omega)\in\Gamma,\quad \tau_i(\omega)\le t\} \in \mathcal{F}_t.\]
To prove this, lets first define $B=\{\omega: X_{\tau_i(\omega)}(\omega)\in\Gamma,\quad \tau_i(\omega)\le t\}$ and consider two cases.

I. Assume $t\le c$, then for $\omega\in A$, $c \wedge \tau_i(\omega)=\tau_i(\omega)$ which implies that $A\subset B$. Conversely, if $\omega\in B$ then $c \wedge \tau_i(\omega)=\tau_i(\omega)$, and $B\subset A$. Now we are going to prove that
\[B=\cup_{\{s\in C:s\le t\}}\{\omega:\tau_i(\omega)=s, X_s(\omega)\in\Gamma\}.\]
Indeed, if $\omega\in B$ there exists $s\in C$ such that $s=\tau_i(\omega)\le t$ and then  $X_{\tau_i(\omega)}(\omega)=X_s(\omega)\in\Gamma$. The converse is also clearly true. Finally, since for each $s\in C, s\le t$; $\{\omega:\tau_i(\omega)=s, X_s(\omega)\in\Gamma\}\in \mathcal{F}_s \subset \mathcal{F}_t$ , it follows that $A\in \mathcal{F}_t$.

II. The case when $c< t$ follows from the decomposition of $A$, as
\[A=(\{\omega: \tau_i(\omega)\le c\}\cap B) \cup(\{\omega: \tau_i(\omega)> c\}\cap \{\omega: X_c(\omega)\in\Gamma,\quad \tau_i(\omega)\le t\}).\]
Since $\{\omega: \tau_i(\omega)\le c\}, \{\omega: \tau_i(\omega)> c\}$, $\{\omega: X_c(\omega)\in\Gamma\}\in \mathcal{F}_c\subset \mathcal{F}_t$ and $B\in \mathcal{F}_t$, $A\in \mathcal{F}_t$ as well.
\end{proof}



 Recall that $\mathcal{P}= \mathcal{P}(P) $ is the set of all martingale probability measures equivalent to $P$ and $\mathbb{E}_Q(Y)$ denotes expectation with respect to probability measure $Q$.

\begin{lemma}\label{martingaleDifference}
Let $Q \in \mathcal{P}(P)$ and
assume, for $i\ge 0$, $U_i$ are $\mathcal{F}_{\tau_i}$-measurable bounded functions, $|X_\tau|$ $Q$-integrable, and define
\[
Y_0\equiv 0\quad \mbox{and}\quad Y_n \equiv \sum^{n-1}_{i=0}U_i(X_{\tau_{i+1}}-X_{\tau_i})\quad \mbox{for}\quad n\ge 1.
\]
Then

($a$) $\{Y_n\} _{n\ge 0}$ is a martingale w.r.t. $\mathcal{F}_{\tau}= \{\mathcal{F}_{\tau_i}\}_{i\ge 0}$ and $Q$.


\noindent Assume $M_{\tau}$ is a $\mathcal{F}_{\tau}$-stopping time, and $M_{\tau}$ is bounded or $|Y_n|$ is bounded uniformly by an integrable function. For any $k\ge 0$,

($b$) $\mathbb{E}_Q[\sum_{i=k}^{M_{\tau}-1} U_i ~(X_{\tau_{i+1}} - X_{\tau_i})|\mathcal{F}_{\tau_k}] = 0$.
\end{lemma}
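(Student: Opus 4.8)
Lemma \ref{martingaleDifference} — proof plan

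The final statement is Lemma \ref{martingaleDifference}, which asserts that the martingale transform $Y_n = \sum_{i=0}^{n-1} U_i(X_{\tau_{i+1}} - X_{\tau_i})$ is a martingale with respect to the discrete filtration $\mathcal{F}_\tau = \{\mathcal{F}_{\tau_i}\}_{i\ge 0}$ (part (a)), and that its increments vanish in conditional expectation when summed up to the random horizon $M_\tau$ (part (b)). Let me sketch how I would prove each part.

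For part (a), the plan is to verify the two defining properties of a discrete-time martingale. First, integrability: since each $U_i$ is bounded, say $|U_i|\le K_i$, and $|X_{\tau_j}|$ is $Q$-integrable by hypothesis, each summand $U_i(X_{\tau_{i+1}}-X_{\tau_i})$ is integrable, hence $Y_n \in L^1(Q)$. Adaptedness is immediate since $U_i$ is $\mathcal{F}_{\tau_i}$-measurable and $X_{\tau_{i+1}}, X_{\tau_i}$ are $\mathcal{F}_{\tau_{i+1}}$-measurable, so $Y_n$ is $\mathcal{F}_{\tau_n}$-measurable. The key computation is the martingale increment:
\[
\mathbb{E}_Q[Y_{n+1} - Y_n \mid \mathcal{F}_{\tau_n}] = \mathbb{E}_Q[U_n(X_{\tau_{n+1}} - X_{\tau_n}) \mid \mathcal{F}_{\tau_n}] = U_n\,\mathbb{E}_Q[X_{\tau_{n+1}} - X_{\tau_n} \mid \mathcal{F}_{\tau_n}] = 0,
\]
where I pull out the $\mathcal{F}_{\tau_n}$-measurable factor $U_n$ and invoke the standing assumption that $\{X_{\tau_i}\}$ is a martingale with respect to $\mathcal{F}_\tau$ (as stated just before Proposition \ref{boundingMartingalePrices}, and guaranteed by the optional sampling theorem for the sampled process; I would cite the reference \cite{medvegyev} used in the proof of Proposition \ref{boundingMartingalePrices}). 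This establishes (a).

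For part (b), the plan is to recognize that $\sum_{i=k}^{M_\tau - 1} U_i(X_{\tau_{i+1}} - X_{\tau_i}) = Y_{M_\tau} - Y_k$ is the stopped martingale increment between the stopping times $k$ (deterministic) and $M_\tau$. I would apply the optional stopping (sampling) theorem to the martingale $\{Y_n\}$ from part (a) relative to the $\mathcal{F}_\tau$-stopping time $M_\tau$. Under either of the two supplied integrability conditions — $M_\tau$ bounded, or $\{Y_n\}$ dominated uniformly by an integrable function — the hypotheses of optional stopping are met, yielding $\mathbb{E}_Q[Y_{M_\tau} \mid \mathcal{F}_{\tau_k}] = Y_{M_\tau \wedge k} = Y_k$ on the event relevant here. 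Taking the difference gives $\mathbb{E}_Q[Y_{M_\tau} - Y_k \mid \mathcal{F}_{\tau_k}] = 0$, which is exactly the claimed identity.

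The main obstacle I anticipate is the technical justification of the interchange of conditional expectation with the randomly-terminated sum in part (b). One cannot naively truncate, because $M_\tau$ is a genuine stopping time and the number of summands is random; the two stated hypotheses (boundedness of $M_\tau$, or uniform domination of $|Y_n|$) are precisely what licenses either a finite telescoping argument or an application of the dominated convergence theorem together with the optional stopping theorem. I would handle the bounded case by writing the sum as a finite telescoping expression $\sum_{i=k}^{m-1} U_i(X_{\tau_{i+1}}-X_{\tau_i}){\bf 1}_{\{M_\tau > i\}}$ (with $m$ an a.s. upper bound for $M_\tau$), noting that $\{M_\tau > i\} \in \mathcal{F}_{\tau_i}$ so each term still has vanishing conditional expectation given $\mathcal{F}_{\tau_k}$ for $i \ge k$, and then summing. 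The uniformly-dominated case reduces to this via a standard limiting argument. Everything else is routine once the martingale property of $\{X_{\tau_i}\}$ is taken as given.
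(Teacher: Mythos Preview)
Your proof is correct and follows essentially the same strategy as the paper. Part (a) is identical in both. For part (b), the paper also reduces to optional stopping for the discrete martingale built in (a): it defines the shifted process $Z_n = Y_n - Y_k$ for $n>k$ (and $Z_n=0$ for $n\le k$), constructs explicit bounded truncations $\sigma_m$ of $M_\tau$, verifies these are $\mathcal{F}_\tau$-stopping times, applies optional stopping (\cite[Prop.~1.83]{medvegyev}) to get $\mathbb{E}_Q[Z_{\sigma_m}\mid\mathcal{F}_{\tau_k}]=0$, and then passes to the limit via dominated convergence when $M_\tau$ is unbounded. Your indicator-function rewriting $\sum_{i=k}^{m-1} U_i(X_{\tau_{i+1}}-X_{\tau_i})\mathbf{1}_{\{M_\tau>i\}}$ for the bounded case is a slightly more direct variant of the same computation (and, incidentally, handles the edge case $M_\tau\le k$ cleanly, which your phrase ``on the event relevant here'' glossed over); the paper's explicit $\sigma_m$ construction is the alternative standard route. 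The unbounded case is handled identically in both by domination and a limit.
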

\begin{proof}
For ($a$), fix $n\ge 0$, then it holds that
\[\mathbb{E}_Q[Y_{n+1}|\mathcal{F}_{\tau_n}] = \mathbb{E}_Q[\sum_{i=0}^{n} U_i ~(X_{\tau_{i+1}} - X_{\tau_i})|\mathcal{F}_{\tau_n}]= \sum_{i=0}^{n-1} U_i ~(X_{\tau_{i+1}} - X_{\tau_i})=Y_n,\]
since for $0\le i \le n$, $U_i, X_{\tau_i}$ are $\mathcal{F}_{\tau_i}$-measurable, so $\mathbb{E}_Q[(X_{\tau_{n+1}} - X_{\tau_n})|\mathcal{F}_{\tau_n}]=0$.

For ($b$), first observe that $Z_n\equiv \sum_{i=k}^{n-1} U_i ~(X_{\tau_{i+1}} - X_{\tau_i}) =Y_n-Y_k$ for $n>k$, and $Z_n\equiv 0$ if $0\le n\le k$ is also a martingale w.r.t. $\mathcal{F}_{\tau}$.

For $m\ge1$, consider $\sigma_m:\Omega_0(\tau)\rightarrow \mathbb{N}$, defined as follows:
\[
\sigma_m(\omega)=\left\{
\begin{array}{ccc}
k & if & M_{\tau}(\omega)< k+1 \\
M_{\tau}(\omega) & if &  k+1 \le M_{\tau}(\omega) \le m \\
m & if & k+1 \le m < M_{\tau}(\omega).
\end{array}
\right .
\]
$\sigma_m$ is clearly bounded, and also a $\mathcal\mathcal{F}_{\tau}$-stopping time. It follows from
\[\{\sigma_m \le n\} = (\{k \le n\} \cap \{M_{\tau}< k+1\}) \cup (\{M_{\tau} \le n\} \cap \{M_{\tau}\ge k+1\} \cap \{M_{\tau}\le m\}) \cup (\{m \le n\} \cap \{M_{\tau} > m\}).\]
Since, if $n\le k$, the second and third sets in the union are empty, and the first one belongs to $\mathcal{F}_{\tau_n}$.

\noindent For $k+1\le n < m$, the third is empty, the first is in $\mathcal{F}_{\tau_n}$, and \[\{M_{\tau} \le n\} \cap \{M_{\tau}\ge k+1\} \cap \{M_{\tau}\le m\})=\{M_{\tau} \le n\} \cap \{M_{\tau}\le k+1\})\in \mathcal{F}_{\tau_n}.\]

\noindent On the other hand, if $k+1\le m \le n$ the first and third sets belongs clearly to $\mathcal{F}_{\tau_n}$, and \[\{M_{\tau} \le n\} \cap \{M_{\tau}\ge k+1\} \cap \{M_{\tau}\le m\}) = \{M_{\tau}\ge k+1\} \cap \{M_{\tau} \le m\}) \in \mathcal{F}_{\tau_m} \subset \mathcal{F}_{\tau_n}.\]




It follows, from \cite[Prop 1.83]{medvegyev}, using the stopping time $\sigma_0 \equiv k \le \sigma_m$, that
\[\mathbb{E}_Q[Z_{\sigma_m}|\mathcal{F}_{\tau_k}]=\mathbb{E}_Q[\sum_{i=k}^{\sigma_m-1} U_i ~(X_{\tau_{i+1}} - X_{\tau_i})|\mathcal{F}_{\tau_k}] = \mathbb{E}_Q[Y_{\sigma_m}-Y_k|\mathcal{F}_{\tau_k}]= 0.\]

Observe that if $M_{\tau}$ is bounded $Z_{M_{\tau}} =Z_{\sigma_m}$ for some $m\ge 1$. In general, $Z_{\sigma_m} \rightarrow Z_{M_{\tau}}$ pointwise, and since $|Z_{\sigma_m}|$ results bounded by an integrable function, we have
\[
\mathbb{E}_Q[\sum_{i=k}^{M_{\tau}-1} U_i ~(X_{\tau_{i+1}} - X_{\tau_i})|\mathcal{F}_{\tau_k}] = \lim_{m\rightarrow\infty} \mathbb{E}_Q[Z_{\sigma_m}|\mathcal{F}_{\tau_k}] = 0.
\]
\end{proof}

\begin{lemma}  \label{measurabilityOfOmega}
For a given $\omega \in \Omega_0(\tau)$ and $k\ge 0$, the set $\Omega_{\omega, k}(\tau)$, defined in Theorem \ref{veryUseful}, belongs to $\mathcal{F}_{\tau_k}$.
\end{lemma}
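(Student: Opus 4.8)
The plan is to use the defining property of the stopping-time $\sigma$-algebra: a set $A\in\mathcal{F}$ lies in $\mathcal{F}_{\tau_k}$ precisely when $A\cap\{\tau_k\le t\}\in\mathcal{F}_t$ for every $t\ge 0$. First I would fix $\omega$ and abbreviate the deterministic sampling data by $r\equiv\tau_k(\omega)$ and $r_i\equiv\tau_i(\omega)$ for $0\le i\le k$. Since the $\tau_i$ take values in $C$ and form a non-decreasing (admissible) sequence, each $r_i\in C$ with $r_i\le r$, and also $r\in C$.

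Next I would decompose $\Omega_{\omega,k}(\tau)$ into the intersection of its defining constraints,
\begin{equation}\nonumber
\Omega_{\omega,k}(\tau)=\Omega_0(\tau)\cap\Big(\bigcap_{s\in C,\,s\le r}\{X_s=X_s(\omega)\}\Big)\cap\Big(\bigcap_{i=0}^{k}\{\tau_i=r_i\}\Big),
\end{equation}
and show that the two constraint families produce an $\mathcal{F}_r$-measurable set. Indeed, for each $s\le r$ the event $\{X_s=X_s(\omega)\}=X_s^{-1}(\{X_s(\omega)\})$ lies in $\mathcal{F}_s\subseteq\mathcal{F}_r$, because $X$ is adapted and a singleton is Borel; for each $i\le k$ the event $\{\tau_i=r_i\}$ lies in $\mathcal{F}_{r_i}\subseteq\mathcal{F}_r$, using the standard fact that $\{\tau_i=r_i\}=\{\tau_i\le r_i\}\setminus\{\tau_i<r_i\}\in\mathcal{F}_{r_i}$ for a stopping time. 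Countability of $C$ guarantees that the displayed intersection stays in $\mathcal{F}_r$ (the full-measure set $\Omega_0(\tau)$ being absorbed under the usual completeness convention).

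Finally I would pass from $\mathcal{F}_r$ to $\mathcal{F}_{\tau_k}$ by observing that every $\omega'\in\Omega_{\omega,k}(\tau)$ satisfies $\tau_k(\omega')=r_k=r$, so that $\Omega_{\omega,k}(\tau)\subseteq\{\tau_k=r\}$. Writing $A=\Omega_{\omega,k}(\tau)$, for $t\ge r$ we get $A\cap\{\tau_k\le t\}=A\in\mathcal{F}_r\subseteq\mathcal{F}_t$, while for $t<r$ the containment $A\subseteq\{\tau_k=r\}$ forces $A\cap\{\tau_k\le t\}=\emptyset\in\mathcal{F}_t$. Hence $A\cap\{\tau_k\le t\}\in\mathcal{F}_t$ for every $t$, which is exactly the condition $A\in\mathcal{F}_{\tau_k}$.

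I expect the only delicate point to be this last transition: membership of $\Omega_{\omega,k}(\tau)$ in the \emph{deterministic}-time $\sigma$-algebra $\mathcal{F}_r$ does not by itself place it in the \emph{stopping}-time $\sigma$-algebra $\mathcal{F}_{\tau_k}$, and it is the extra inclusion $\Omega_{\omega,k}(\tau)\subseteq\{\tau_k=r\}$ — arising because every sampling constraint only involves times up to $r=\tau_k(\omega)$ — that reconciles the two. A secondary, routine technicality is the measurability $\{\tau_i=r_i\}\in\mathcal{F}_{r_i}$, which relies on each $\tau_i$ being a genuine stopping time.
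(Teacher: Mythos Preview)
Your proof is correct and follows essentially the same approach as the paper: the same decomposition of $\Omega_{\omega,k}(\tau)$ into the path constraints $\{X_s=X_s(\omega)\}$ and the time constraints $\{\tau_i=r_i\}$, the same use of adaptedness and the stopping-time property to place each piece in $\mathcal{F}_r$, and the same case split $t<r$ versus $t\ge r$ driven by the inclusion $\Omega_{\omega,k}(\tau)\subseteq\{\tau_k=r\}$. Your write-up is somewhat more explicit about the logical structure (first $A\in\mathcal{F}_r$, then the transition to $\mathcal{F}_{\tau_k}$), whereas the paper interleaves the two, but the argument is the same.
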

\begin{proof}
We have to show that $A=\Omega_{\omega, k}(\tau)\cap\{\tau_k\le t\}\in \mathcal{F}_{\tau_k}$, for all $t\ge 0$. Setting $C_{\omega, k}= C \cap [0, \tau_k(\omega)]$, observe that $\Omega_{\omega, k}(\tau)$ can be decomposed in the following way
\[
\Omega_{\omega, k}(\tau)=\left( \bigcap_{s\in C_{\omega, k}}\{\omega':X_s(\omega')=X_s(\omega)\}\right)\cap \left( \bigcap_{i=0}^k\{\omega':\tau_i(\omega')=\tau_i(\omega)\}\right).
\]
Note that if $t<\tau_k(\omega)$, then $\{\omega':\tau_k(\omega')=\tau_k(\omega)\}\cap \{\tau_k\le t\} = \emptyset$, thus $A=\emptyset\in\mathcal{F}_{\tau_k}$. Consequently it is enough to consider $\tau_k(\omega)\le t$.
Let $s\in C_{\omega, k}$, it follows that $s\le t$ and
\[
\{\omega':X_s(\omega')=X_s(\omega)\} = X_s^{-1}(\{X_s(\omega)\})\in \mathcal{F}_s\subset \mathcal{F}_t.
\]
Therefore $X_s^{-1}(\{X_s(\omega)\})\in \mathcal{F}_{\tau_k}$. On the other hand, since for $0\le i\le k$, $\tau_i\le \tau_k$ then $\tau_i$ are $\mathcal{F}_{\tau_k}$-measurables \cite{shiryaev}, which concludes that $\Omega_{\omega, k}(\tau)\in \mathcal{F}_{\tau_k}$.
\end{proof}

\noindent
{\bf Acknowledgments:}
S. Ferrando  would like to thank Zsolt Bihary and H.F\"{o}llmer
for stimulating discussions.



\end{document}